\newcommand{\isp}{\hspace{\infskip}}
\DeclareMathAlphabet{\mathtt}{OT1}{zi4}{m}{n}
\newtheorem{theorem}{Theorem}
\newtheorem{lemma}[theorem]{Lemma}
\theoremstyle{definition}
\newtheorem{defn}[theorem]{Definition}
\newcommand{\secref}[1]{(\S\ref{sec:#1})}
\newcommand{\lemref}[1]{Lemma~\ref{thm:#1}}
\newcommand{\thmref}[1]{Theorem~\ref{thm:#1}}
\newcommand{\figref}[1]{Figure~\ref{fig:#1}}
\newcommand{\mkwd}[1]{\ensuremath{\text{\texttt{\underbar{#1}}}}}
\newcommand{\key}[1]{\mkwd{#1}}
\newcommand{\mlet}[3]{\mkwd{let}\;#1 = #2\;\mkwd{in}\;#3}
\newcommand{\inl}[1]{\mathsf{in_1}\,#1}
\newcommand{\inr}[1]{\mathsf{in_2}\,#1}
\newcommand{\ini}[1]{\mathsf{in}_i\,#1}
\newcommand{\inii}[2]{\mathsf{in}_i^{#1}\,#2}
\newcommand{\case}[5]{\mkwd{case}\,#1\,\mkwd{of}\;\{\inl{#2} \mapsto #3; \;\inr{#4} \mapsto #5\}}
\newcommand{\then}{\ensuremath{\Rightarrow}}
\newcommand{\labto}[1]{\ensuremath{\stackrel{#1}{\to}}}
\newcommand{\lto}{\labto{\circ}}
\newcommand{\uto}{\labto{\bullet}}
\newcommand{\pair}[2]{\ensuremath{#1 \otimes #2}}
\newcommand{\opt}[2]{\ensuremath{#1 \oplus #2}}
\newcommand{\lami}[3]{\ensuremath{\lambda^{#1}#2.#3}}
\newcommand{\predh}[2]{\ensuremath{#1\ #2}}
\newcommand{\Un}[1]{\predh{\mathtt{Un}}{#1}}
\newcommand{\Jun}[1]{#1\,\mathsf{un}}
\newcommand{\Unl}[1]{\Un{#1}}
\newcommand{\Junl}[1]{\Jun{#1}}
\newcommand{\Fun}[1]{\predh{\mathsf{Fun}}{#1}}
\newcommand{\moreunlimited}{\geq}
\newcommand{\lessgeneral}{\sqsubseteq}
\newcommand{\trule}[1]{{\textsc{(#1)}}\xspace}
\newcommand{\strule}[1]{{\trule{#1$^{\mathrm{S}}$}}\xspace}
\newcommand{\C}{ctr}
\newcommand{\W}{wkn}
\newcommand{\I}[1]{\ensuremath{#1\!}~I}
\newcommand{\E}[1]{\ensuremath{#1\!}~E}
\newcommand{\vdashsup}[1]{\ensuremath{\vdash^{\raisebox{1px}{\tiny\!\!\!#1}}}}
\newcommand{\vdashS}{\vdashsup{{\upshape S}}}
\newcommand{\entails}{\ensuremath{\then}}
\newcommand{\nentails}{\ensuremath{\not\then}}
\newcommand{\restrict}[2]{#1|_{#2}}
\newcommand{\Eta}{\ensuremath{H}}
\newcommand{\Set}[1]{\{ #1 \}}
\newcommand{\Bag}[1]{\lbag #1 \rbag}
\newcommand{\subv}[2]{[\vec{#1}/\vec{#2}]}
\newcommand{\dom}{\mathsf{dom}}
\newcommand{\red}[4]{\ensuremath{#1 \Downarrow^{#3}_{#4} #2}}
\newcommand{\seq}{\vec}
\newcommand{\lang}{Quill\xspace}
\newcommand{\theIH}{the induction hypothesis\xspace}
\newcommand{\Fpop}{F$^\circ$\xspace}
\newcommand{\gvout}[2]{#1 \mathbin{\mathtt{!}} #2}
\newcommand{\gvin}[2]{#1 \mathbin{\mathtt{?}} #2}
\newcommand{\gvend}{\ensuremath{\mathtt{End}}}
\newenvironment{syntax}%
{\[\begin{array}{lr@{\hspace{5px}}r@{\hspace{5px}}l}}
{\end{array}\]\ignorespacesafterend}
\newenvironment{longsyntax}%
{\[\begin{array}{lrr@{\hspace{2px}}r@{\hspace{2px}}l}}
{\end{array}\]\ignorespacesafterend}
\newcommand{\tcl}[1]{\multicolumn{2}{l}{#1}}
\newcommand{\tcr}[1]{\multicolumn{2}{r}{#1}}
\newcommand{\bs}{\char`\\}
\newcommand{\thetitle}{The Best of Both Worlds}
\newcommand{\thesubtitle}{Linear Functional Programming without Compromise}
\title\thetitle
\begin{document}

\maketitle

\begin{abstract}
  We present a linear functional calculus with both the safety guarantees expressible with linear
  types and the rich language of combinators and composition provided by functional programming.
  Unlike previous combinations of linear typing and functional programming, we compromise neither
  the linear side (for example, our linear values are first-class citizens of the language) nor the
  functional side (for example, we do not require duplicate definitions of compositions for linear
  and unrestricted functions).
  To do so, we must generalize abstraction and application to encompass both linear and unrestricted
  functions.  We capture the typing of the generalized constructs with a novel use of qualified
  types.
  Our system maintains the metatheoretic properties of the theory of qualified types, including
  principal types and decidable type inference.
  Finally, we give a formal basis for our claims of expressiveness, by showing that evaluation
  respects linearity, and that our language is a conservative extension of existing functional
  calculi.
\end{abstract}

\category{D.3.2}{Language Classifications}{Applicative (functional) languages}
\category{D.3.3}{Language Constructs and Features}{Polymorphism}

\keywords linear types; substructural types; qualified types

\section{Introduction}

Integers have a pleasing consistency: values do not become more or less integers over the course of
a computation.  The same is not true for file handles: we can no longer expect to read from or write
to a file handle after it has been closed.  Traditional functional type systems, like the logic they
resemble, are good for integers (i.e., unchanging propositions), but less good for file handles
(i.e., temporary ones).  If our type systems are to help in the latter case, we need ones with a
different logical character.

One approach is suggested by Girard's linear logic~\cite{Girard87}, which requires that each
hypothesis be used exactly once in the course of a proof.  Intuitively, linear propositions are
finite resources, which can neither be duplicated nor discarded, rather than arbitrary truth values,
available as often, or as rarely, as needed.  Linear type systems adopt the same approach to
variables: each bound variable must be used exactly once in the body of its binder.  Such type
systems have been used to reason about resource usage and concurrency.  For example, they have been
used to assure safe manipulation of state~\cite{GuzmanH90,AhmedFM05}, regulate access to shared
resources~\cite{FahndrichD02,Boyland03}, and type interacting concurrent
processes~\cite{VasconcelosGR06,CairesPfenning10,Wadler14}.  Each of these examples uses the
restrictions on reuse and discard introduced by linearity to assure safety invariants.
Simultaneously, several general purpose linear functional languages have been proposed, including
those of Wadler~\cite{Wadler93}, Mazurak et al.~\cite{MazurakZZ10}, and Tov and
Pucella~\cite{TovP11}.  However, attempts to adapt functional programming idioms and abstractions to
these calculi are thwarted by the interplay of linear and intuitionistic types.  This paper proposes
a novel combination of linear and qualified types that provides the safety of linear types without
losing the expressiveness of functional programming.

We identify three requirements for the integration of linear types and functional programming.
As an example, consider the $K$ combinator, defined as $\lambda x. \lambda y. x$.
We begin with its arguments: argument $x$ is used once, and so can take on values of any type.
Argument $y$ is discarded, and so can only take on values of unrestricted type.  This illustrates
the first requirement: we must distinguish between quantification over arbitrary type and quantification
over unrestricted types.  Next, consider the application $K\,V$, giving a new function
$\lambda y. V$.  Whether we can reuse this function depends on $V$.  If $V$ is a Boolean or integer
value, for example, there is no danger in reusing $\lambda y. V$.  On the other hand, if $V$ is a
file handle or capability, then reuse of the function would imply reuse of $V$, and should be
prevented.  This illustrates the second requirement: we must distinguish between linear and
unrestricted functions, which distinction is determined by the environment captured by each
function.  In particular, there is no single static characterization of the linearity of the
subterms of $K$ that accounts for its application to both linear and unrestricted values.  Finally,
consider the composition function $\circ$, defined as $\lambda f. \lambda g. \lambda x. f\,(g\,x)$.
We know that we must be able to apply $f$ and $g$ to things, and that both are used linearly,
However, these constraints are satisfied by both linear and unrestricted functions.  So, the final
requirement is that we must generalize the typing of application to range over the possible types of
function.

These requirements have been addressed in previous work, although the interactions between them have
not.  Quantification over unrestricted values can be expressed in a linear system using either kinds
(and subkinding)~\cite{MazurakZZ10} or type classes~\cite{GanTM14}; both approaches extend naturally
to account for pairs and sums.  Many existing systems use subtyping to account for application,
either implicitly~\cite{MazurakZZ10,GanTM14} or explicitly~\cite{GayV10,TovP11}.  Tov and
Pucella~\cite{TovP11} introduce a notion of relative arrow qualifiers, sufficient to express the
typing of $K$, but at the cost of significant complexity in their type system.  The interplay of
these disparate mechanisms has not been fully explored.  For example, none the existing systems can
express the desired typing of composition, nor have they been shown to support complete type
inference.

We propose a new, uniform approach to integrating linear types and functional programming, based on
the theory of qualified types~\cite{Jones94}.  Rather than invent new type system features, we
present a language design based on a novel combination of qualified and linear typing, both
existing, well-studied type systems.  To demonstrate our design, we return to the $K$ combinator, to
which we give the (qualified) type $(\Unl u,t \moreunlimited f) \then t \to u \labto f t$.  First,
we observed that the second argument (here typed by $u$) must be unrestricted; this is captured by
the predicate $\Unl u$.  Second, we observed that the result of $K\,V$, $\lambda y. V$, may be
either an linear or unrestricted function, depending on the linearity of $V$.  We capture this
through the use of two predicates, one that identifies functions and another that specifies relative
linearity.  The predicate $\Fun f$ is satisfied only when $f$ is a function type; we write
$t \labto f u$ to denote the type $f\,t\,u$ under the predicate $\Fun f$.  Here, we use it to range
over the possible types for $\lambda y.V$; we would make a similar use of the $\Fun\!$ predicate to
express the argument types of the composition operator.  The predicate $t \geq f$ is satisfied when
$t$ supports more structural rules (i.e., duplication and discard) than $f$.  Thus, in the typing of
the $K$ combinator, if $t$ is linear, then $f$ must be linear; alternatively, if $t$ is
unrestricted, than $f$ can be either linear or unrestricted.

Formally, we capture our approach in the design of a core linear calculus, which we call a Qualified
Linear Language (\lang).  \lang is a linear variant of Jones's calculus OML, extended with
Haskell-like first-class polymorphism~\cite{Jones97}, and with entailment rules for the
$\mathsf{Un}$, $\mathsf{Fun}$, and $\moreunlimited$ predicates.  We preserve the metatheoretic
properties of OML, particularly principal types and decidable type inference, without requiring the
programmer to provide type or linearity annotations.  We show that our system is a conservative
extension of (non-linear) OML; concretely, this means that we can view our approach as giving linear
refinements of existing functional languages and idioms, rather than replacing them entirely.
Finally, we give a natural (big-step) semantics for \lang and show that evaluation respects
linearity.

In summary, this paper contributes:
\begin{itemize}
\item The design and motivation of \lang, including examples of \lang's application to prototypical
  uses of linear types (dyadic session types) and higher-order functional programming
  (monads)~\secref{programming}.
\item A formal account of the \lang type system and its relationship to OML, including a sound and
  complete type inference algorithm~\secref{typesystem}.
\item A linearity-aware semantics of \lang, and a proof that values of linear type are neither
  duplicated nor discarded during evaluation~\secref{semantics}.
\item A discussion of further extensions of \lang, including its applicability to other
  substructural type systems, such as affine or relevant typing~\secref{extensions}.
\end{itemize}
We begin with an introduction to linear type systems and their uses~\secref{background}, and
conclude by discussing related~\secref{related} and future~\secref{future} work.

\section{Substructural Type Systems}\label{sec:background}


Before describing the details of our language, we give examples of several applications of
substructural type systems and several general-purpose substructural calculi.

\subsection{Applications of Substructural Typing}


Linear type systems restrict the use of weakening (i.e., discarding variables) and contraction
(i.e., reusing variables), allowing us to reason about state and resource usage in programs.  For
example, excluding weakening could prevent memory or resource leaks, by requiring that each input to
a computation be consumed during its evaluation.  Excluding contraction could be used in describing
component layouts in circuits, where a limited number of each computational unit are available.
Linear type systems combine these, providing exact control over resource usage.
This section describes two uses of linear types: session types and referentially-transparent
in-place update.  These demonstrate two different uses of linear types: session types evolve over
the course of a computation, capturing changes in underlying state, while mutable values must be
used linearly to preserve referential transparency.  In each case, we make two points: first, the
need for the restrictions introduced by linearity, and second, the difficulties introduced in
integrating the use of linear and unrestricted types.

\paragraph{Mutable arrays.}

We begin by considering in-place update.  Suppose that we want to be able to read from and update
arrays in a referentially transparent way.  We might expect each update to produce a new copy of the
array; otherwise, updates would be visible through other references to the original array.  For
large arrays, this copying will be extremely costly, both in time and space.  The copying could be
avoided if we could ensure that the use of arrays was single-threaded.  That is, so long as no
``old'' copies of arrays are ever used, updates can be performed in place. Chen and
Hudak~\cite{ChenH97} consider the connection between single-threaded usage, potentially enforced
monadically, and linearity.  They introduce an affine type system (they allow discarding but not
duplication), and show that updating linearly-typed values can safely be performed in place.  They
also show how the operations on a linear data type can be interpreted to give a monad in an
intuitionistic calculus, while preserving the safety of in-place update.  However, this approach
relies on hiding the linearly typed values, making them second-class citizens of the non-linear
calculus.  That is, while the monadic computations describe transformations of an underlying array,
they give no direct access to the array itself.  Consequently, while their approach applies to
linear arrays of unrestricted element types, it could not apply to linear element types (such as
other arrays), because it gives direct access to the array elements.

\paragraph{Session types.}

Next we consider session types, an instance of behavioral typing.  Communication protocols
frequently specify not just what data can be sent, but in what order.  For example, the Simple Mail
Transfer Protocol specifies not just a list of commands (identifying senders, recipients, message
bodies, and so forth), but also a particular ordering to messages (the sender's address must precede
the recipients' addresses, which must precede the message body).  Session types, originally proposed
by Honda~\cite{Honda93}, provide a mechanism for capturing such expectations in the types of
communication channels.  The critical aspect of his type system is that types evolve over the course
of a computation to reflect the communication that has already taken place.
For example, if channel $c$ has session type $\gvout{Int}{\gvin{Int}\gvend}$, we expect to send an
integer along $c$, then receive an integer from $c$.  After we have sent an integer, the type of $c$
must change to $\gvin{Int}{\gvend}$, reflecting the remaining expected behavior.  We can implement
session types in a functional setting by giving channels linear types, and reflecting the evolution
of types in the type signatures of the communication primitives:
\begin{align*}
  \mathsf{send} &:: \pair t {(\gvout t s)} \to s &
  \mathsf{receive} &:: (\gvin t s) \to \pair{t}{s}
\end{align*}
Continuing the example above, we see that the result of $\mathsf{send} \, (4, c)$ will be of type
$\gvin{Int}\gvend$, as we hoped.  The linearity of these channels is crucial to assuring the type
correctness of communication: reusing channel $c$ would allow us to send arbitrarily many integers,
not just one.  There are approaches to encoding session types in existing functional languages, such
as that of Pucella and Tov~\cite{PucellaT08}, but they result in channels being second class values.
For example, sending or receiving channels requires different primitives from those for sending or
receiving other values.

\subsection{General-Purpose Linear Calculi}\label{sec:other-linear}

Wadler~\cite{Wadler93} gives a $\lambda$-calculus based on Girard's logic of unity, a refinement of
linear logic.  In his approach, the types (ranged over by $\tau,\upsilon$) are precisely the
propositions of linear logic, including pairs ($\tau \otimes \upsilon$), functions
($\tau \multimap \upsilon$), and the exponential modality ($!\tau$).  His type system tracks two
kinds of assumptions, linear ($x: \langle \tau \rangle$) and intuitionistic ($x: [\tau]$); only the
latter are subject to contraction and weakening.  Wadler does not include polymorphism in his
calculus; nevertheless, we can see that his treatment of intuitionistic types would preclude
attempts toward generality.  He gives explicit term constructors to introduce and eliminate the
exponential modality, and these constructs surround any use of unrestricted types.  If $M$ is of
type $\tau \multimap \upsilon$, and $N$ is of type $\tau$, then we can construct the application
$M\,N$ of type $\upsilon$; on the other hand, if $M$ is of type $!(\tau \multimap \upsilon)$, then
we must explicitly eliminate the $!$ constructor at each use of $M$, as $\mlet{!f}{M}{f\,N}$.
Returning to our introductory example, we have two families of types (and corresponding terms) for
the $K$ combinator, $!(!\tau \multimap {!(!\upsilon \multimap {!\tau})})$ if the first argument is
intuitionistic, and $!(\tau \multimap {!\upsilon} \multimap \tau)$ otherwise.

Mazurak et al.~\cite{MazurakZZ10} present a streamlined, polymorphic linear $\lambda$-calculus.
Their calculus, called \Fpop, extends the Girard-Reynolds polymorphic $\lambda$-calculus with
linearity, and introduces a kind system which distinguishes between linear (kind $\circ$) and
unrestricted (kind $\star$) types.  They then define the kinds of types such as pairs in terms of
the kinds of their components: $\tau \otimes \upsilon$ is of kind $\star$ if both $\tau$ and
$\upsilon$ are of kind $\star$, and must be of kind $\circ$ otherwise.  Finally, they introduce a
subkinding relation, allowing a type of kind $\star$ to be used any place a type of kind $\circ$ is
expected.  This reflects the observation that an unrestricted value can be used any number of times,
including once. While their approach seamlessly encompasses many uses of unrestricted types, it does
not extend to functions.  \Fpop distinguishes between linear functions $\lambda^\circ x.M$, of type
$\tau \lto \upsilon$, which may capture arbitrary variables in their environment, and unrestricted
functions $\lambda^\star x.M$, of type $\tau \uto \upsilon$, which can only capture unrestricted
values.  Consequently, \Fpop still has four distinct types for the $K$ combinator
\begin{gather*}
  \forall (t: \circ). \forall (u: \star). t \uto u \lto t \qquad
  \forall (t: \circ). \forall (u: \star). t \lto u \lto t \\
  \forall (t: \star). \forall (u: \star). t \uto u \uto t \qquad
  \forall (t: \star). \forall (u: \star). t \lto u \uto t
\end{gather*}
each with distinct inhabitants.  The problem is endemic to the use of higher-order functions; for
example, their system has numerous distinct application and composition functions.

Tov and Pucella~\cite{TovP11} present Alms, an affine calculus with a kind system similar to \Fpop
but with additional flexibility in the treatment of functions.  Their treatment of functions
includes not just affine ($\labto{\mathsf{A}}$) and unrestricted ($\labto{\mathsf{U}}$) functions,
but also functions with relative qualifiers.  For example, Alms has a single most-general type for
the $K$ combinator, written
\[
  \forall (t: \mathsf{A}). \forall (u: \mathsf{U}). t \labto{\mathsf{U}} u \labto{t} t.
\]
The arrow $\labto t$ must be more restricted than the instantiation of $t$.  If $t$ is instantiated
to an affine type, then $\labto{t}$ must be $\labto{\mathsf{A}}$; otherwise, it can be
$\labto{\mathsf{U}}$.  They include subtyping explicitly; for example, $Int \labto{\mathsf{U}} Int$
is a subtype of $Int \labto{\mathsf{A}} Int$.  Alms is quite expressive, but this comes at the cost
of high complexity; we believe that \lang provides similar expressiveness in a significantly simpler
setting.


\section{Programming in \lang}\label{sec:programming}

This section gives an intuitive overview of our calculus \lang and its primary features.  We begin
by describing the use of overloading to capture the non-linear use of assumptions.  We then consider
the particular problems arising from having both linear and unrestricted functions, the overloading
of application and abstraction, and introduce the corresponding predicates on types.  Finally, we
consider two examples of programming in \lang: a simple presentation of dyadic session types,
demonstrating the use of linearity, and a Haskell-like presentation of monads, demonstrating the
interaction between linearity and higher-order functional programming.  For the purposes of this
section, we use a Haskell-like syntax for \lang, in which we distinguish linear functions
($\tau \lto \upsilon$) from unrestricted functions ($\tau \uto \upsilon$).  We give a formal account
of \lang's syntax and semantics in the following sections.

\subsection{Contraction and Weakening with Class}\label{sec:class}
\newcommand{\ifte}[3]{\ensuremath{\mkwd{if}\;#1\;\mkwd{then}\;#2\;\mkwd{else}\;#3}}

Our goal is a functional language in which values of some (but not all) types must be treated
linearly.  The central problem is the integration of unrestricted types, and functions on
unrestricted (but otherwise generic) types, with an otherwise linear type system.  We describe one
solution, based on the theory of qualified types.

We begin by distinguishing linear from unrestricted types.  We consider a type to be unrestricted if
values of that type can be duplicated and discarded.  That is, a type $\tau$ is unrestricted if we
can exhibit values of type $\tau \uto 1$ and $\tau \uto \pair \tau \tau$.  (This approach roughly
parallels Filinski's interpretation of intuitionistic types by commutative comonoids in the model of
a linear calculus~\cite{Filinski92}.)  For example, consider a type for Booleans with the standard
branching construct and constants.  We can demonstrate that Booleans are unrestricted by giving the
terms $\lambda b. \ifte{b}{()}{()}$ to discard a Boolean, and
$\lambda b. \ifte{b}{(True, True)}{(False, False)}$ to copy one.  This leaves the problem of how to
write code generic over such types; for instance, we would like the function $\lambda x. (x, x + 1)$
to be applicable to arguments of any unrestricted numeric type.

Our approach is inspired by the use of type classes in Haskell.  Type classes were introduced to
solve similar problems, such as how to write functions generic over types that have an equality
operator, or that can be converted to and from text.  For our purposes, we can imagine introducing a
type class \texttt{Un}, which identifies unrestricted types:
\begin{code}
class Un t where
  drop :: t ->* 1
  dup  :: t ->* t :*: t
\end{code}
The methods of \texttt{Un} provide the defining behavior of an unrestricted type.  We could then
imagine using these methods to implement terms such as the one above, for which we could write
$\lambda x. \mlet{(x,x')}{\mathtt{dup}\,x}{(x,x'+1)}$.  In inferring a type for this term, we would
observe that its argument type has to support numeric operations (and so be a member of the
\texttt{Num}) class, and has to support \texttt{dup} (and so be a member of the \texttt{Un} class).
We would conclude that it should have type $(\texttt{Num}\,t,\Unl t) \then t \uto \pair t t$.

One advantage of this view of unrestricted types is that it extends naturally to products, sums, and
recursive types.  For example, a pair of values $(V, W)$ can  safely be copied only when both $V$
and $W$ could individually be copied.  We can capture this in an instance of the $\texttt{Un}$
class:
\begin{code}
instance (Un t, Un u) => Un (t :*: u) where
  drop (x, y)  =  let () = drop x in drop y
  dup (x, y)   =  ((x', y'), (x'', y'')) where
    (x', x'')  =  dup x
    (y', y'')  =  dup y
\end{code}
The relationship between the linearity of $t$ and $u$ and the linearity of $\pair t u$ arises
organically from the typing of the \texttt{drop} and \texttt{dup} methods.  The argument for sums is
parallel, with the same results.

Of course, we do not intend programmers to use the \texttt{drop} and \texttt{dup} methods directly,
and we imagine that instances of \texttt{Un} would be inferred automatically from type declarations.
Instead, \lang allows variables to be used freely, and infers \texttt{Un} predicates as if any
duplication or discarding of variables had been done explicitly.  Thus, $\lambda x. (x, x+1)$ is a
well-typed \lang term with the type $(\mathtt{Num}\,t,\Unl t) \then t \uto \pair t t$, as above.

\subsection{The Problem of the Copyable Closure}

We have an appealing view of how to distinguish unrestricted from linear types, and how to account
for the linearity of products and sums.  Unfortunately, this view does not extend to provide a
uniform treatment of functions.  Consider the curried pair constructor
$\lambda x. \lambda y. (x,y)$.  We know that the linearity of the resulting pair depends only on the
linearity of its components.  But what about the intermediate result?  Suppose that we apply this
function to some value $V$ giving the term $\lambda y. (V, y)$.  Whether we can copy this term
depends upon the captured value $V$; intuitively, we can say it depends on the function's closure.
However, this is not reflected in the function type. (While the type of $V$ does appear in the
result type, so does the type of $y$, but the linearity of the function type is solely a consequence
of $V$'s type.)  We are thus forced to introduce distinct types for linear and unrestricted
functions.  This section discusses the resulting language design questions: how to handle
application and abstraction in a language with multiple function types, and how to relate the type
of a function to the type of its captured environment.

\subsubsection*{Application}

We begin with application, the simpler of the two problems.  Consider the uncurried application
function.  In intuitionistic calculi, this is $\lambda (f, x). f\,x$, of type $(t \to u, t) \to u$.
In the linear settings, things are not so simple: we must decide whether the argument $f$ and the
function being defined are linear or unrestricted functions.  These choices are independent, giving
four incomparable types:
\begin{gather*}
  \pair {(t \lto u)} t \lto u \qquad \pair {(t \lto u)} t \uto u \\
  \pair {(t \uto u)} t \lto u \qquad \pair {(t \uto u)} t \uto u
\end{gather*}
We can resolve this repetition by observing that (built-in) application is implicitly overloaded: we
would like to write $f\,x$ whether $f$ is a linear or unrestricted function.  We make this
overloading explicit in the types.  We introduce a new predicate, $\Fun f$, which holds when $f$ is
a function type; intuitively, we can think of this as corresponding to a class whose sole method is
application, and whose only members are $\lto$ and $\uto$.  We can then type application with
reference to this class, rather than in terms of either of the concrete function types.  This
reduces the number of application functions from four to two: we have
$\Fun f \then \pair {f\,t\,u} t \lto u$ and $\Fun f \then \pair {f\,t\,u} t \uto u$.  We introduce
syntactic sugar to make the $\mathsf{Fun}$ predicate easier to read.  We will write $t \labto f u$
to indicate the type $f\,t\,u$ constrained by $\Fun f$, and further write $t \to u$ to indicate
$t \labto f u$ for some fresh type variable $f$.  Using this sugar, we arrive at the most general
type for the application function, $\pair{(t \to u)} t \to u$.

In the previous section, we motivated the typing of contraction and weakening using the methods of an
\texttt{Un} class, even though we intend their use to be implicit in practice.  In the case of the
\texttt{Fun} predicate, the class method intuition is less helpful.  Defining primitive application
as a class method is difficult (how would it be used, except by application?), and we will rely on
the \texttt{Fun} predicate holding only for the built-in function types.  This reinforces the
expressiveness of qualified types, even beyond their traditional application to overloaded class
methods.

\subsubsection*{Abstraction}

We have accounted for the uncurried application function.  Now consider its curried equivalent,
expressed in an intuitionistic setting as $\lambda f.\lambda x.f\,x$ of type
$(t \to u) \to t \to u$.  The problem here is similar to the problem with the $K$ combinator or the
curried pair constructor.  Suppose that we apply this function to some value $V$, giving
$\lambda x. V\,x$: whether this function needs to be linear depends on the linearity of $V$.  We
thus have six incomparable types for the curried application function:
\begin{gather*}
  (t \lto u) \lto t \lto u \qquad (t \uto u) \lto t \lto u \qquad (t \uto u) \lto u \uto u \\
  (t \lto u) \uto t \lto u \qquad (t \uto u) \uto t \lto u \qquad (t \uto u) \uto u \uto u
\end{gather*}
Our approach to overloading application allows us to give names to individual function arrows in a
type.  Unfortunately, even this is not sufficient to account for the types of the application
function; it only allows us to reduce the six types above to two:
\[
   (t \labto f u) \to t \labto f u \qquad (t \uto u) \to t \lto u
\]
However, this observation suggests our actual solution.  Consider a more general type, subsuming the
two above (but admitting one erroneous case): $(t \labto f u) \to t \labto g u.$ The first case
above is where $f$ and $g$ are the same type, and the second is where $f$ is less restricted (i.e.,
admits more structural rules) than $g$.  The erroneous case is where $f$ is more restricted (i.e.,
admits fewer structural rules) than $g$.  We introduce a new predicate,
$\tau \moreunlimited \upsilon$, which holds when $\tau$ admits more structural rules than
$\upsilon$.  We can now give the principal type of the application operator:
$f \geq g \then (t \labto f u) \to t \labto g u$.

Our examples have focused on function types.  However, the $\moreunlimited$ relation is not limited
to functions; for example, consider the possible types of the curried pair constructor
$\lambda x. \lambda y. (x, y)$:
\[
   \Unl t \then t \to u \to \pair t u \qquad t \to u \lto \pair t u
\]
As for the application operator, we see that the linearity of the final arrow is restricted by the
types appearing before it in the type signature.  Unlike in that case, however, the earlier type in
question is not a function type.  We can give the curried pair constructor the principal type
$t \geq f \then t \to u \labto f \pair t u$.

\subsection{\lang in Action}\label{sec:inaction}

One of the pleasing aspects of this work has been the simplicity of our motivating examples: the $K$
combinator and application functions are very short, but reveal the unique benefits of \lang.  We
conclude this section by turning to several larger examples.  First, we consider a simple embedding
of dyadic session types, a typical application of linear typing.  Doing so demonstrates that we have
not made our system too permissive.  Second, we consider a presentation of Haskell's monad class and
several of its instances.  This shows that \lang supports the full generality of intuitionistic
functional programming abstractions, and demonstrates the additional information captured by a
linear type system.

For these examples, we will assume various language features present in Haskell, such as new type
definitions, multi-parameter type classes with functional dependencies, and \texttt{do} notation for
monads.  We believe these are representative of realistic settings for linear functional
programming.  However, these ideas are not fundamental to our approach, and our formalization in the
following sections will consider a core calculus that does not assume such language features or
syntactic sugar.

\subsubsection*{Dyadic Session Types}
\newcommand{\st}{\zeta}

Session types, introduced by Honda~\cite{Honda93}, provide a typing discipline for communication
protocols among asynchronous processes.  There is a significant body of work exploring the
combination of session-typed and functional programming.  Much of this work has focused on defining
new linear calculi, combining functional and concurrent
programming~\cite{VasconcelosGR06,GayV10,LindleyM15}.  These calculi frequently include details
specific to session typing in their type systems, and so seem a poor fit for general purpose
programming languages.  Pucella and Tov~\cite{PucellaT08} give an encoding of session types in
Haskell, wrapping an underlying untyped use of channels.  They express the session typing discipline
using the existing features of the Haskell class system.  However, they threat channels as
second-class values, capturing the session types of channels in a parameterized monad~\cite{Atkey09}
rather than in the types of the channels themselves.  One consequence of this is that sending and
receiving channels, while possible, requires primitive operations (with particularly involved types)
distinct from those for sending and receiving values.  We will show that \lang allows us to have the
best of both worlds: because \lang is linear, we can have first-class channels, and because \lang
fits into the existing work on qualified types we can encode the session typing discipline without
having to extend our core type system.

Honda gives five constructors for session types $\st$, interpreted as follows:
\[\begin{array}{ll}
    \gvout \tau \st & \text{Send a value of type $\tau$, then continue as $\st$}  \\
    \gvin \tau \st & \text{Receive a value of type $\tau$, then continue as $\st$} \\
    \st \uplus \st' & \text{Choose between behaviors $\st$ and $\st'$} \\
    \st \nplus \st' & \text{Offer a choice of behaviors $\st$ and $\st'$} \\
    \gvend & \text{No communication}
  \end{array}\]
(Our syntax for the choice constructors differs from Honda's to avoid conflict with the notation for
the linear logic connectives.)  Lindley and Morris~\cite{LindleyM15} observed that, in a linear
functional setting, the choice types can be encoded in terms of $\oplus$ and the input and
output types, and so we omit them from our example.  We introduce types for the remaining session
types---these types are empty, as we will use them as tags rather than to type channels directly.
\begin{code}
data t :!: s
data t :?: s
data End
\end{code}
Honda observed that communicating processes had dual expectations for their shared channels: if one
process expects to send a value of type $\tau$, the other process should expect to receive a value of
type $\tau$.  Following Pucella and Tov~\cite{PucellaT08}, we can capture this using a type class with
functional dependencies~\cite{Jones00}:
\begin{code}
class Dual t u | t -> u, u -> t
instance Dual s s' => Dual (t :!: s) (t :?: s')
instance Dual s s' => Dual (t :?: s) (t :!: s')
instance Dual End End
\end{code}
We now turn to channels and their primitive operators.
\begin{code}
data Ch s
instance Un (Ch End)
\end{code}
Unlike other approaches to encoding session types in functional languages, we treat $\gvend$
channels as unrestricted, avoiding the need for explicit \texttt{close} operations.  Previous work
on linearity has discussed the encapsulation of unrestricted types in linear ones, either via
existential types~\cite{MazurakZZ10,TovP11} or via a module system~\cite{PucellaT08}.
Alternatively, one might prefer to take the notion of linear channels as primitive.  Either approach
is possible in \lang; as we are primarily concerned with the use of linear types, we omit further
discussion of them here.  (But see the extended version of this paper~\cite{Morris16}
for the details of the packaging approach.)  The primitive operations on session-typed channels are
as follows:
\begin{code}
fork    :: Dual s s' => (Ch s -> M ()) -> M (Ch s')
send    :: t >= f => t -> Ch (t :!: s) -f> M (Ch s)
receive :: Ch (t :?: s) -> M (t :*: Ch s)
\end{code}
We adopt the \texttt{fork} construct of Lindley and Morris both for its simplicity and because it
assures deadlock freedom.  The \texttt{Dual} predicate assures that the session types \texttt{s} and
\texttt{s\textquotesingle} are well-formed and dual.  Gay and Vasconcelos~\cite{GayV10} give two
typings for the \texttt{send} function, depending on the linearity of its first argument:
\[\begin{array}{ll}
    t \uto (t \mathbin ! s) \uto s &\text{if $t$ is unrestricted} \\
    t \uto (t \mathbin ! s) \lto s &\text{otherwise}
  \end{array}\]
This fits precisely the pattern captured by the $\moreunlimited$ predicate in \lang.  Finally, as
the communication primitives are side-effecting, we assume the results are embedded in some suitable
monad \texttt{M}.  (This is not an entirely innocuous choice; we will return to monads in a linear
setting for our next example.)

We present a simple example using session-typed channels.  We begin with a process that performs an
arithmetic operation:
\begin{code}
multiplier c =
  do (x, c) <- receive c
     (y, c) <- receive c
     send (x * y) c
     return ()
\end{code}
The \texttt{multiplier} function defines a process that expects to read two numbers on channel
\texttt{c}, and then sends their product back along the same channel.  The inferred type for
\texttt{multiplier} is \lstinline~Num t => Ch (t :?: (t :?: (t :!: End))) -> M ()~.  Note that,
despite our reuse of the name \texttt{c}, each call to a communication primitive returns a new copy
of the channel, which is used linearly.  Next, we define a process to communicate with
\texttt{multiplier}.  To illustrate the use of channels as first-class values, we define it in a
round-about way.  First, we define a process that provides only one of the two expected values:
\begin{code}
sixSender c =
  do (d, c) <- receive c
     send 6 d
     send d c
     return ()
\end{code}
This function defines a process that begins by receiving a channel \texttt{d} along \texttt{c}; it
then sends 6 along the received channel before returning the received channel along \texttt{c}.
Thus, its type is
\begin{code}
Num t => Ch (Ch (t :!: s) :?: Ch s :!: End) -> M ()
\end{code}
Finally, we can define the main process, which uses the preceding processes to compute 42:
\begin{code}
answer = do d <- fork sixSender
            c <- fork multiplier
            d <- send c d
            (c, d) <- receive d
            c <- send 7 c
            (x, c) <- receive c
            return x
\end{code}
This example demonstrates the advantages of \lang for linear programming.  Unlike encoding-based
approaches, we have simple types and uniform treatment of channels and other data.  Unlike other
concurrency-focused approaches, we have not built any aspects of session typing into our language or
its type system.

\subsubsection*{Monads}
\newcommand{\bind}{\ensuremath{\mathbin{>\!\!>\!\!=}}}

In the previous example, we assumed that we could express our communication primitives monadically,
to account for their side effects.  As they are fundamentally reliant on higher-order functions, it
is worth examining the interaction between linearity and the monadic combinators.  For a simple
example, consider the desugaring of \texttt{answer}, which begins
\[
  \mathtt{fork\;sixSender} \bind \bs d \to \mathtt{fork\;multiplier} \bind \bs c \to M
\]
where $M$ denotes the remainder of \texttt{answer}, and both $c$ and $d$ are free in $M$.  As $d$ is
of linear type, we see that $\lambda c. M$ must be a linear function. Does this mean that the result
of $\bind$ must also be linear?  How does this play out for other monads, like the \texttt{Maybe}
monad?

Of course, we could transport standard intuitionistic definitions of monads directly into \lang,
treating all functions as unrestricted.  Doing so would allow us to use monads for unrestricted
values without any new complexity.  However, doing so would also rule out interesting cases, such
as those with channels in the previous example.  Here we take the opposite perspective, attempting
to generalize standard notions of monads to include the linear cases.  We will consider two
canonical examples, failure and state.

First, we consider failure.  We assume we have some type $\mathtt{Maybe}\,t$ with constructors
\texttt{Just} and \texttt{Nothing}; observe that $\mathtt{Maybe}\,t$ is unrestricted precisely when
$t$ is unrestricted.  To demonstrate that \texttt{Maybe} is a monad, we give implementations of the
\texttt{return} and $(\bind)$ operators, as follows:
\begin{code}
return = \x -> Just x
(>>=)  = \m -> \f -> case m of
                       Nothing -> Nothing
                       Just x -> f x
\end{code}
The typing of \texttt{return} is uninteresting.  On the other hand, consider the use of $f$ in the
body of $(\bind)$: if $m$ is \texttt{Nothing}, then $f$ is discarded, whereas if $m$ is
$\mathtt{Just}\,x$, then $f$ is used once.  So, we see that $f$ must be unrestricted, and so we have
the types:
\[\begin{array}{l@{\;::\;}l}
  \mathtt{return} & t \to \texttt{Maybe} \, t \\
  (\bind) & t \geq f \then \texttt{Maybe} \, t \to (t \uto \texttt{Maybe}\,u) \labto f \texttt{Maybe}\,u \\
  \end{array}\]
The requirement that $f$ be unrestricted captures that the remainder of the computation may not
occur, an important characteristic of the failure monad.  For example, this means that the monad !M!
in the session types example cannot include exceptions.  This should align with our expectations: if
a process fails, it cannot fulfill its outstanding session-typed obligations.

Next, we consider the state monad.  A state monad for state values of type \texttt{S} is typically
implemented in Haskell by the type $\mathtt{S} \to (t, \mathtt{S})$.  This introduces additional
choice in the linear case: should we consider values of type $\mathtt{S} \lto \pair t {\mathtt{S}}$
or of type $\mathtt{S} \uto \pair t {\mathtt{S}}$?  What constraints would this choice impose on the
use of the monad?  We can clarify these questions by considering the definition of \texttt{return}
and $(\bind)$.  (Relying on our generalization of abstraction and application, we consider these
implementations in parallel with the choice of the state monad itself.)
%
%
\begin{code}
return = \x -> \s -> (x, s)
(>>=)  = \m -> \f -> \s -> let (x, s') = m s in f x s
\end{code}
We make two observations about the state monad.  First, $x$ is captured in $\bs s \to (x, s)$;
therefore, a state computation can only be as unrestricted as its result values.  (This is true of
the failure monad as well, but is reflected in the inherent linearity of \texttt{Maybe} types.)
Second, note that the function $f$ is used linearly in the body of $(\bind)$, so its type need not
be unrestricted (unlike for the failure monad).  These observations are reflected in the types of
\texttt{return} and $(\bind)$.  We begin by introducing an alias for the state monad type:
\begin{code}
type State k s t = s -k> (t :*: s)
\end{code}
We can then type \texttt{return} and $(\bind)$ by
\begin{code}
return :: t >= State k s t => t -> State k s t
(>>=)  :: (State k s t >= g, f >= State k s u) =>
          State k s t -> (t -f> State k s u) -g>
          State k s u
\end{code}
The predicate $\texttt{State}\,k\,s\,t \geq g$ reflects that the term $\bs f \to \dots$
has captured $m$ of type $\texttt{State}\,k\,s\,t$.


Finally, we generalize these examples.  The problem is the type of the second argument to $(\bind)$:
to be useful in the linear context, we must sometimes include the restricted function type, but to
incorporate the full range of monads we must sometimes limit it to unrestricted functions.  We
encompass both cases using a multi-parameter type class for monads:
\begin{code}
class Monad f m | m -> f where
  return :: t >= m t => t -> m t
  (>>=)  :: (m t >= g, f >= m u) =>
            m t -> (t -f> m u) -g> m u
\end{code}
The definitions above give instances of our new \texttt{Monad} class:
\begin{code}
instance Monad (->*) Maybe
instance Monad k (State k s)
\end{code}
and that the example of dyadic session types will type in monads $m$ such that
$\texttt{Monad}\;(\lto)\;m$ is provable.

We should emphasize that, because $f$ is functionally dependent on $m$, our reformulation of the
\lstinline!Monad! class does not introduce any new polymorphism, or new potential for ambiguity.
Rather, it makes explicit (at the type level) existing differences in the composition of monadic
computations.


\section{Substructural Qualified Types}\label{sec:typesystem}

We have considered some of the challenges of using linear calculi in practice, given an intuitive
description of how we addresses these challenges using qualified types, and demonstrated how our
solution might be realized in a Haskell-like practical programming language.  In this section, we
give a formal account of our approach to substructural qualified types.  We begin by giving an
overview of a core \lang calculus and its type system~\secref{typing}.  We then give a
syntax-directed variant on the type system~\secref{styping}, preparatory to giving an
Algorithm~$\mathcal{M}$~\cite{LeeY98} style type inference algorithm~\secref{inference}.  Finally,
we relate \lang typing to typing for a non-substructural core calculus~\secref{conservativity},
making concrete our claims that \lang encompasses existing functional programming practice.

\subsection{\lang Terms and Typing}\label{sec:typing}
\newcommand{\T}{\mathcal{T}}

\begin{figure}
\[\begin{array}{ll@{\hspace{3mm}}ll}
  \text{Term variable} & x,y \in Var & \text{Type variables} & t,u \in TVar \\
  \text{Multienvironments} & \Eta & \text{Environments} & \Gamma,\Delta \\
  \text{Type constructors} & \multicolumn{3}{l}{T^\kappa \in \T^\kappa \text{ where $\Set{\oplus,\uto,\lto} \subseteq \T^{\star \to \star \to \star}$}}
\end{array}\]
\begin{longsyntax}
  \text{Kinds} & \tcr{\kappa} & ::= & \star \mid \kappa \to \kappa \\
  \text{Types} & \tcr{\tau^\kappa} & ::= & t \mid T^\kappa \mid \tau^{\kappa' \to \kappa}\,\tau^{\kappa'} \\
  \tcl{\text{Predicates}} & \pi & ::= & \Unl\tau \mid \Fun\tau \mid \tau \moreunlimited \upsilon \\
  \tcl{\text{Qualified types}} & \rho & ::= & \tau^\star \mid \pi \then \rho \\
  \tcl{\text{Type schemes}} & \sigma & ::= & \rho \mid \forall t. \sigma \\
  \text{Expressions} & \tcr{M,N} & ::= & x \mid K\,M \mid \lambda x. M \mid M\,N \mid \inl{M} \mid \inr{N} \\ 
   &&& \mid & \case{M}{x}{N}{y}{N'} \\
   &&& \mid & \mlet{x}{M}{N} \mid \mlet{K\,x}{M}{N}
\end{longsyntax}
\caption{\lang types and terms.}\label{fig:syntax}
\end{figure}

\newcommand{\Without}[2]{#1}

\begin{figure*}
\begin{gather*}
\fbox{$P \mid \Eta \vdash M: \sigma$}
\\
\infbox{\irule[\trule{var}];{P \mid \Bag{x:\sigma} \vdash x:\sigma}}
\isp
\infbox{\irule[\trule{\C}]
              {P \mid \Eta,\Eta',\Eta' \vdash M: \sigma}
              {P \vdash \Junl{\Eta'}};
              {P \mid \Eta,\Eta' \vdash M: \sigma}}
\isp
\infbox{\irule[\trule{\W}]
              {P \mid \Eta \vdash M: \sigma}
              {P \vdash \Junl{\Eta'}};
              {P \mid \Eta,\Eta' \vdash M: \sigma}}
\\
\infbox{\irule[\trule{\I\to}]
              {\begin{array}{c}
                {P \mid \Without \Eta x, x:\tau \vdash M: \upsilon} \\[2px]
                {P \entails \Fun \phi} \isp
                {P \vdash \Eta \moreunlimited \phi}
               \end{array}};
              {P \mid \Without \Eta x \vdash \lambda x.M: \phi \, \tau \, \upsilon}}
\isp
\infbox{\irule[\trule{\E\to}]
              {\begin{array}{c}
                  {P \mid \Eta \vdash M: \phi \, \tau \, \upsilon} \\[2px]
                  {P \mid \Eta' \vdash N: \tau} \isp
                  {P \entails \Fun \phi}
               \end{array}};
              {P \mid \Eta,\Eta' \vdash M\,N: \upsilon}}
\isp
\infbox{\irule[\trule{let}]
              {P \mid \Eta  \vdash M: \sigma}
              {P \mid \Eta', x: \sigma \vdash N:\tau};
              {P \mid \Eta,\Eta' \vdash \mlet{x}{M}{N}: \tau}}
\\
\infbox{\irule[\trule{\I\oplus$_i$}]
              {P \mid \Eta \vdash M: \tau_i};
              {P \mid \Eta \vdash \ini{M}\vdash \opt{\tau_1}{\tau_2}}}
\isp
\infbox{\irule[\trule{\E\oplus}]
              {\begin{array}{c}
                 P \mid \Eta \vdash M: \opt{\tau_1}{\tau_2} \isp
                 P \mid \Eta'_{x},x:\tau_1 \vdash N: \upsilon \isp
                 P \mid \Eta'_{x},x:\tau_2, \vdash N': \upsilon
               \end{array}};
              {P \mid \Eta,\Eta'_{x} \vdash \case{M}{x}{N}{x}{N'}: \upsilon}}
\\
\infbox{\irule[\trule{make}]
              {\begin{array}{c}
                 {K: (\forall \seq t. \exists \seq u. Q \then \tau') \uto \tau} \isp
                 {P \entails [\seq{\upsilon}/\seq{u}]Q} \\[2px]
                 {P \mid \Eta \vdash M: [\seq{\upsilon}/\seq{u}]\tau'} \isp
                 {\seq u \notin ftv(P,\Eta)}
               \end{array}};
              {P \mid \Eta \vdash K\,M : \tau}}
\isp
\infbox{\irule[\trule{break}]
              {\begin{array}{c}
                 {K: (\forall \seq t. \exists \seq u. Q \then \tau') \uto \tau} \isp
                 {P \mid \Eta \vdash M: \tau} \\[2px]
                 {P,[\seq{\upsilon}/\seq t]Q \mid \Eta',x:[\seq\upsilon/\seq t]\tau' \vdash N: \upsilon'} \isp
                 {\seq u \not\in ftv(P,\Eta,\Eta',\upsilon')}
               \end{array}};
              {P \mid \Eta,\Eta' \vdash \mlet{K\,x}{M}{N} : \upsilon'}}
\\
\infbox{\irule[\trule{\I\then}]
              {P,\pi \mid \Eta \vdash M: \rho};
              {P \mid \Eta \vdash M: \pi \then \rho}}
\isp
\infbox{\irule[\trule{\E\then}]{P \mid \Eta \vdash M: \pi \then \rho}
                               {P \entails \pi};
                               {P \mid \Eta \vdash M: \rho}}
\isp
\infbox{\irule[\trule{\I\forall}]
              {P \mid \Eta \vdash M: \sigma}
              {t \notin ftv(P,\Eta)};
              {P \mid \Eta \vdash M: \forall t. \sigma}}
\isp
\infbox{\irule[\trule{\E\forall}]
              {P \mid \Eta \vdash M: \forall t. \sigma};
              {P \mid \Eta \vdash M: [\tau/t]\sigma}}
\\[5px]
\begin{gathered}
\fbox{$P \vdash \Junl\cdot$}
\\
\infbox{\irule[\trule{un-$\tau$}]
              {P \entails \Unl\tau};
              {P \vdash \Junl\tau}}
\isp
\infbox{\irule[\trule{un-$\rho$}]
              {P, \pi \vdash \Junl\rho};
              {P \vdash \Junl{\pi \then \rho}}}
\\
\infbox{\irule[\trule{un-$\sigma$}]
              {P, \Unl t \vdash \Junl\sigma};
              {P \vdash \Junl{\forall t.\sigma}}}
\isp
\infbox{\irule[\trule{un-$\Eta$}]
              {\bigwedge_{x:\sigma \in \Eta} P \vdash \Junl\sigma};
              {P \vdash \Junl\Eta}}
\end{gathered}
\isp\isp
\begin{gathered}
\fbox{$P \vdash \cdot \moreunlimited \phi$}
\\
\infbox{\irule[\trule{$\moreunlimited$-$\tau$}]
              {P \entails \tau \moreunlimited \phi};
              {P \vdash \tau \moreunlimited \phi}}
\isp
\infbox{\irule[\trule{$\moreunlimited$-$\rho$}]
              {P, \pi \vdash \rho \moreunlimited \phi};
              {P \vdash (\pi \then \rho) \moreunlimited \phi}}
\\
\infbox{\irule[\trule{$\moreunlimited$-$\sigma$}]
              {P, \Unl t \vdash \sigma \moreunlimited \phi};
              {P \vdash (\forall t. \sigma) \moreunlimited \phi}}
\isp
\infbox{\irule[\trule{$\moreunlimited$-$\Eta$}]
              {\bigwedge_{x:\sigma \in \Eta} P \vdash \sigma \moreunlimited \phi};
              {P \vdash \Eta \moreunlimited \phi}}
\end{gathered}
\end{gather*}
\caption{Typing rules.}\label{fig:typing}
\end{figure*}


The syntax of \lang types and terms is shown in \figref{syntax}.  \lang types are stratified
according to a simple kind system; we write $\tau$, $\upsilon$ and $\phi$ (without superscripts) to
range over types of any kind.  (Unlike Mazurak et al.~\cite{MazurakZZ10}, we use kind $\star$ for
all types, not just unrestricted ones.) We assume that $\lto$, $\uto$ and $\oplus$ are binary type
constructors, which we will write infix, corresponding to linear and unrestricted functions and
additive sums.  We do not include multiplicative or additive products ($\pair \tau \upsilon$ and
$\tau \with \upsilon$), as these can be encoded in terms of the other types.  (These encodings
depend on our overloading of abstraction for their full generality.)  We allow arbitrary additional
type constructors, providing other (user-defined) data types.  Data types capture first-class
universal and existential types, following the approach of Jones's FCP~\cite{Jones97}.  While we
have not used these features in our examples, we include them for two reasons.  First, existential
types are used prominently in other approaches to linear functional
programming~\cite{MazurakZZ10,TovP11}, particularly to construct linear wrappers around unrestricted
types, and so we show that Quill accommodates similar constructions.  Second, existentials provide a
another application of the techniques developed to account for the linearity of functions; we
describe this at more length when consider extensions to our core calculus~\secref{extensions}.
Predicates $\pi$ include those necessary for our treatment of linearity (and can constrain
higher-kinded types).  Qualified types and type schemes are standard for overloaded Hindley-Milner
calculi.  We write $\forall \seq t.Q \then \tau$ to abbreviate
$\forall t_1 \dots \forall t_n.P_1 \then \dots \then P_m \then \tau.$

\lang includes standard terms for variables, abstractions, applications, and (additive) sums.  We
introduce polymorphism at \mkwd{let} bindings.  First-class existential and universal types are
expressed using constructors $K$.  We assume an ambient signature mapping individual constructors
$K$ to types $\forall \vec{v}. (\forall \seq t. \exists \seq u. Q \then\tau') \uto \tau$.  (This
type is not included in $\sigma$; $\sigma$ denotes inferable type schemes.)  Construction $K\,M$
builds a value of type $\tau$, assuming that $\tau'$ has a suitably generic type.  We insist that
constructors be fully applied.  Deconstruction $\mlet{K\,x}{M}{N}$ eliminates such values.  The
introduction of first-class polymorphism through data types corresponds to common practice in
Haskell, and allows us to make clear the extent of type inference.

\figref{typing} gives the \lang type system.  The typing judgment is $P \mid \Eta \vdash M: \sigma$,
where $P$ is a set of predicates on the type variables in the remainder of the judgment, $\Eta$ is a
typing environment, $M$ is a term and $\sigma$ a type scheme.  Our treatment of the typing
environment follows a standard approach for linear logic, but differs from some of the existing work
on linear type systems.  We use multisets of typing assumptions (which we call
``multienvironments''); thus, the multienvironment $\Bag{x:\sigma,x:\sigma}$ is distinct from
$\Bag{x:\sigma}$.  We require that multienvironments be consistent, so if
$\Bag{x:\sigma,x:\sigma'} \subseteq H$ then we must have $\sigma = \sigma'$, and we write $H,H'$ for
the consistent multiset union of $H$ and $H'$.  Assumptions of unrestricted type are duplicated in
\trule\C and discarded in \trule\W; both rules use the auxiliary judgment $P \vdash \Junl \cdot$,
which lifts the \texttt{Un} predicate to typing environments.  (The assumption $\Unl t$ in
\trule{un-$\sigma$} accounts for terms like the empty list, which should be treated as unrestricted
until their type variables are instantiated.)  This allows us to simplify the other rules: in
\trule{var}, there must only be one binding in the environment, while rules like \trule{\E\to} can
split the typing environment among their hypotheses.  In particular, we avoid introducing an
auxiliary judgment to split type environments while sharing assumptions of unrestricted type present
in many linear type systems~\cite{MazurakZZ10,TovP11}.  Rules \trule{\I\to} and \trule{\E\to}
implement overloading of abstraction and application.  In \trule{\E\to}, note that we allow the
function term to be of any type $\phi\,\tau\,\upsilon$, so long as it satisfies the constraint
$\Fun\phi$.  In \trule{\I\to}, we allow a term $\lambda x.M$ to have any function type, so long as
that type is more restricted than its environment; the auxiliary judgment
$P \vdash \cdot \geq \cdot$ lifts the $\geq$ predicate to type environments.  We will assume
throughout this presentation that binders introduce fresh names.  First-class polymorphism is
introduced in \trule{make} and eliminated in \trule{break}; our approach follows
Jones~\cite{Jones97} almost exactly, but adds a predicate context $Q$.  We write
$K : (\forall t. \exists u. Q \then \tau') \to \tau$ to denote an instantiation of the signature for
$K$ in the ambient context.  We assume that each data type has at most one constructor; more complex
data types can be expressed using the other features of the type system.  The remaining rules are
standard for linear sums and qualified polymorphism.

\begin{figure}
\begin{gather*}
\infbox{\irule{P \ni \pi};{P \entails \pi}}
\isp
\infbox{\irule{\bigwedge_{\pi \in Q} P \entails \pi};
              {P \entails Q}}
\isp
\infbox{\irule{\tau = {\lto} \lor \tau = {\uto}};
              {P \entails \Fun\tau}}
\\
\infbox{\irule{K : (\forall \seq t. \exists \seq u. Q \then \tau') \uto \tau}
              {P, Q, \Unl{\seq t} \entails \Unl{\tau'}};
              {P \entails \Unl\tau}}
\\
\infbox{\irule;{P \entails \Unl{(\tau \uto \upsilon)}}}
\isp
\infbox{\irule{P \entails \Unl{\tau_1}}
              {P \entails \Unl{\tau_2}};
              {P \entails \Unl{(\opt{\tau_1}{\tau_2})}}}
\\
\infbox{\irule{P \entails \Unl\tau};{P \entails \tau \moreunlimited (\upsilon \uto \upsilon')}}
\isp
\infbox{\irule;{P \entails \tau \moreunlimited (\upsilon \lto \upsilon')}}
\\
\infbox{\irule{P \entails \tau \moreunlimited \phi\,t}
              {t \text{ fresh}};
              {P \entails \tau \moreunlimited \phi}}
\isp
\infbox{\irule{P \entails \tau\,t \moreunlimited \phi}
              {t \text{ fresh}};
              {P \entails \tau \moreunlimited \phi}}
\end{gather*}
\caption{Entailment rules.}\label{fig:entail}
\end{figure}

\figref{entail} gives a minimal definition of the predicate entailment relation $P \entails Q$.  One
strength of type systems based on qualified types is that the predicate system provides a natural
point of extension, and our approach here is no different.  Nevertheless, we specify some rules for
the entailment judgment, namely the linearity of the built-in types and that (only) $\lto$ and
$\uto$ are in class $\mathtt{Fun}$.  In determining the linearity of a data type $\tau$, we assume
that the universally quantified type variables $\seq t$ are unrestricted (as a term of type $\tau$
cannot have made any assumptions of $\seq t$), but cannot do so for the existentially quantified
variables $\seq u$, as they may have been instantiated arbitrarily in constructing the $\tau$ value.
We have intentionally given a minimal specification of $\geq$; in particular, we have omitted
various simplification rules which might be expected in a practical implementation, and have limited
the our attention to cases of $\geq$ with function types on their right-hand side (as those are the
only such predicates introduced by our typing rules).  We will return to the definition of this
class when we discuss extensions to \lang~\secref{extensions}.  Finally, the lifting cases for
$\geq$ are a notational convenience; for example, they allow us to write $\tau \geq \phi$ rather
than $\tau \geq \phi\,t\,u$ for fresh $t$ and $u$.

\subsection{A Syntax-Directed \lang Type System}\label{sec:styping}

The \lang type system has a number of rules that are not syntax directed, including the structural
rules and the rules introducing and eliminating polymorphism.  To simplify the definition of type
inference and the proofs of its correctness, we give a syntax-directed variant of the \lang type
system.  In doing so, we address two independent concerns.  First, the rules \trule{\I\forall},
\trule{\E\forall}, \trule{\I\then}, and \trule{\E\then} may be used at any point in a derivation.
This problem has already been studied in the general context of qualified types.  An identical
solution applies in \lang: uses of \trule{\E\forall} and \trule{\E\then} may always be permuted to
occur at occurrences of \trule{var}, while uses of \trule{\I\forall} and \trule{\I\then} may always
be permuted to occur at occurrences of \trule{let} or at the end of the derivation.  Second, the
structural rules \trule\W and \trule\C may also appear at any point in a typing derivation.  As in
the polymorphism cases, we show that uses of these rules can be permuted to definite places in the
derivation: uses of \trule\C can be permuted to appear immediately below a rule with multiple
hypotheses (such as \trule{\E\to} or \trule{\E\oplus}) and uses of \trule\W can be permuted to
occurrences of \trule{var}.

\begin{figure*}
\begin{gather*}
\infbox{\irule[\strule{var}]
              {P \vdash \Junl\Delta}
              {(P \then \tau) \lessgeneral \sigma};
              {P \mid \Delta,x:\sigma \vdashS x: \tau}}
\isp
\infbox{\irule[\strule{let}]
              {\begin{array}{c}
                  Q \mid \Gamma,\Delta \vdashS M: \tau \isp
                  {\sigma = Gen(\Gamma,\Delta;Q \then \tau)} \\[2px]
                  P \mid \Without {\Gamma'} x,\Without{\Delta}{x},x:\sigma \vdashS N:\upsilon \isp
                  P \vdash \Junl\Delta
               \end{array}};
              {P \mid \Gamma,\Without {\Gamma'} x,\Delta \vdashS \mlet{x}{M}{N}: \upsilon}}
\\
\infbox{\irule[\strule{\I\to}]
              {\begin{array}{c}
                 P \mid \Without \Gamma x, x:\tau \vdashS M: \upsilon \\
                 P \entails \Fun \phi \isp
                 P \vdash \Gamma \moreunlimited \phi \isp
               \end{array}};
              {P \mid \Without \Gamma x \vdashS \lambda x.M: \phi \, \tau \, \upsilon}}
\isp
\infbox{\irule[\strule{\E\to}]
              {\begin{array}{c}
                 P \mid \Gamma,\Delta \vdashS M: \phi \, \tau \, \upsilon \isp
                 P \mid \Gamma',\Delta \vdashS N: \tau \\[2px]
                 P \entails \Fun \phi \isp
                 P \vdash \Junl\Delta
               \end{array}};
              {P \mid \Gamma,\Gamma',\Delta \vdashS M\,N: \upsilon}}
\\
\infbox{\irule[\strule{\I\oplus$_i$}]
              {P \mid \Gamma \vdashS M: \tau_i};
              {P \mid \Gamma \vdashS \ini{M}: \opt{\tau_1}{\tau_2}}}
\isp
\infbox{\irule[\strule{\E\oplus}]
              {\begin{array}{c}
                 P \mid \Gamma,\Delta \vdashS M: \opt{\tau_1}{\tau_2} \isp
                 P \vdash \Junl\Delta \\
                 P \mid \Without {\Gamma'} x,\Without \Delta x,x:\tau_1 \vdashS N: \upsilon \isp
                 P \mid \Without {\Gamma'} x,\Without \Delta x,x:\tau_2 \vdashS N': \upsilon
               \end{array}};
              {P \mid \Gamma,\Without {\Gamma'} x,\Delta \vdashS \case{M}{x}{N}{x}{N'}: \upsilon}}
\\
\infbox{\irule[\trule{make}]
              {\begin{array}{c}
                 {K: (\forall \seq t. \exists \seq u. Q \then \tau') \uto \tau} \isp
                 {P \entails [\seq\upsilon/\seq u]Q} \\[2px]
                 {P \mid \Gamma \vdashS M: [\seq\upsilon/\seq u]\tau'} \isp
                 {\seq t \notin ftv(P,\Gamma)}
               \end{array}};
              {P \mid \Gamma \vdashS K\,M : \tau}}
\isp
\infbox{\irule[\trule{break}]
              {\begin{array}{c}
                 {K: (\forall \seq t. \exists \seq u. Q \then \tau') \uto \tau} \isp
                 {P \mid \Gamma,\Delta \vdashS M: \tau} \isp
                 {P \vdashS \Junl\Delta} \\[2px]
                 {P,[\seq\upsilon/\seq t]Q \mid \Gamma', \Without \Delta x,x:[\seq\upsilon/\seq t]\tau' \vdashS N: \upsilon'} \isp
                 {\seq u \not\in ftv(P,\Gamma,\Gamma',\Delta,\upsilon')}
               \end{array}};
              {P \mid \Gamma,\Without {\Gamma'} x,\Delta \vdashS \mlet{K\,x}{M}{N} : \upsilon'}}
\end{gather*}
\caption{Syntax-directed typing rules.}\label{fig:styping}
\end{figure*}


\figref{styping} gives the syntax-directed variant of the \lang system.  The judgment
$P \mid \Gamma \vdashS M: \tau$, is a syntax-directed variant of $P \mid \Eta \vdash M : \sigma$,
and uses standard type environments $\Gamma$ rather than multienvironments $\Eta$.  We write
$\Gamma,\Gamma'$ to denote partitioning the environment; the treatment of contraction and weakening
is explicit in the typing rules, rather than via a partitioning relation on typing environments.
The auxiliary judgments are unchanged.  The syntax-directed system differs from the original type
system in two ways.  First, we account for polymorphism.  Our approach is identical to Jones's
approach for (intuitionistic) qualified types~\cite{Jones94}: we introduce instantiation and
generalization operators, accounting for the role of predicates, and collapse the treatment of
polymorphism into the instances of \strule{var} and \strule{let}.
\begin{defn}
  We define instantiation and generalization as follows:
  \begin{enumerate}
  \item Let $\sigma$ be some type scheme $\forall \seq t. P \then \tau'$.  We say that
    $Q \then \tau$ is an instance of $\sigma$, written $(Q \then \tau) \lessgeneral \sigma$, if
    there is some $\seq\upsilon$ such that $\tau = [\vec\upsilon/\vec t]\tau'$ and
    $Q \entails [\vec\upsilon/\vec t]P$.
  \item Let $\Gamma$ be a typing environment, and $\rho$ a qualified type.  We define
    $Gen(\Gamma,\rho)$ to be the type scheme $\forall (ftv(\rho) \setminus ftv(\Gamma)). \rho$.
  \end{enumerate}
\end{defn}
\noindent%
We use instantiation in \strule{var}, collapsing a use of \trule{var} and subsequent uses of
\trule{\E\forall} and \trule{\E\then}, and generalization in \strule{let} collapsing a use of
\trule{let} and preceding uses of \trule{\I\then} and \trule{\I\forall}.  Second, we account for
contraction and weakening.  In \strule{var}, we allow an arbitrary environment, so long as the
unused assumptions $\Delta$ are unrestricted.  In \strule{\E\to}, we partition the input environment
into three parts: $\Gamma$ is used exclusively in typing $M$, $\Gamma'$ is used exclusively in
typing $N$, and $\Delta$ is used in both; consequently, assumptions in $\Delta$ must be
unrestricted.  The remaining rules follow the same pattern.

The goal of the syntax-directed type system is a one-to-one correspondence between syntactic forms
and typing rules.  However, it is not the case that a typeable term has exactly one syntax-directed
typing derivation.  For example, while the contents of the linear environments are determined by the
term structure, the contents of the unrestricted environments are not.  For another example,
consider the term $(\lambda x. x) \, y$.  We can choose to type the abstraction as either $t \lto t$
or $t \uto t$ (or even as $t \labto f t$ assuming the predicate $\Fun f$).  Each of these choices
would make the term well-typed, and we assume that the terms obey the same laws (i.e., the choice of
type introduces no observable distinction in the semantics of the term).

We now relate our original and syntax-directed type systems.  We start with environments.
Intuitively, the syntax-driven system introduces contraction when needed, guarded by $\Unl\!\!$
constraints; however, a multienvironment $\Eta$ could contain multiple instances of assumptions with
linear types.  We introduce a notion of approximation between multienvironments $H$ and environments
$\Gamma$ that holds when the only repeated assumptions in $H$ are for unrestricted types.
\begin{defn}
  If $\Eta$ is a multienvironment, $\Gamma$ is an environment, and $P$ is some context, then we say
  that $\Gamma$ approximates $\Eta$ under $P$, written $P \vdash \Eta \approx \Gamma$, if
  $x:\sigma \in \Eta$ if and only if $x:\sigma \in \Gamma$, and if
  $\Bag{x:\sigma,x:\sigma} \subseteq \Eta$, then $P \vdash \Junl\sigma$.
\end{defn}

We now turn to our primary results.  First, derivations in the syntax-directed system correspond to
derivations in the original system.
\begin{theorem}[Soundness of $\vdashS$]\label{thm:soundness-s}
  If $P \mid \Gamma \vdashS M: \tau$ and $P \vdash \Eta \approx \Gamma$, then $P \mid \Eta \vdash M:
  \tau$.
\end{theorem}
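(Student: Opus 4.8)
The plan is to argue by induction on the derivation of $P \mid \Gamma \vdashS M : \tau$, treating each syntax-directed rule as a case and reassembling the conclusion from the declarative rules of \figref{typing}. In every case I will apply the induction hypothesis to the subderivations after exhibiting, for each premise, a multienvironment that approximates the corresponding syntax-directed environment under the relevant context; the gap between $\Eta$ and the multiset union of those multienvironments is then closed using the structural rules \trule{\C} and \trule{\W}. Two mechanisms that the syntax-directed system builds into \strule{var} and \strule{let} must be expanded back out: instantiation (reconstructed with \trule{\E\forall} and \trule{\E\then}) and generalization (reconstructed with \trule{\I\forall} and \trule{\I\then}).

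For the base case \strule{var}, where the conclusion is $P \mid \Delta, x:\sigma \vdashS x : \tau$ with $(P \then \tau) \lessgeneral \sigma$ and $P \vdash \Junl\Delta$, I would begin from the axiom $P \mid \Bag{x:\sigma} \vdash x : \sigma$ and recover $\tau$ from $\sigma = \forall \seq t. Q \then \tau'$ by instantiating $\seq t$ with the witnesses $\seq\upsilon$ supplied by $\lessgeneral$ via \trule{\E\forall}, then discharging each predicate of the instantiated $Q$ via \trule{\E\then}, which $P \entails [\seq\upsilon/\seq t]Q$ justifies. Because $P \vdash \Eta \approx \Delta, x:\sigma$, the multienvironment $\Eta$ exceeds $\Bag{x:\sigma}$ only by bindings of $\Delta$ and by duplicates, all unrestricted under $P$, so one use of \trule{\W} yields $P \mid \Eta \vdash x : \tau$. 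The multi-premise cases \strule{\E\to}, \strule{\E\oplus}, and \trule{break} share a pattern: their split shares an unrestricted environment $\Delta$, so I would choose sub-multienvironments approximating the premise environments whose union is $\Eta$ together with one extra copy of $\Delta$, apply the induction hypothesis to each premise, combine them with the matching declarative rule, and remove the duplicated $\Delta$ with \trule{\C}. The case \strule{\I\oplus} is immediate from the induction hypothesis, and \strule{\I\to} needs only that the approximation extends to the fresh binder $x:\tau$ and that $P \vdash \Gamma \moreunlimited \phi$ transports to $P \vdash \Eta \moreunlimited \phi$, since that judgment inspects only the binding types of the environment and not their multiplicity.

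The delicate case is \strule{let}, whose first premise $Q \mid \Gamma, \Delta \vdashS M : \tau$ is typed under a context $Q$ that need not coincide with the ambient $P$, and from which $\sigma = \mathit{Gen}(\Gamma,\Delta; Q \then \tau)$ is formed. Here I would route all duplicate slack of $\Eta$ into the multienvironment feeding $N$ (whose context is already $P$), keeping the multienvironment $\Eta_M$ feeding $M$ duplicate-free, so that $Q \vdash \Eta_M \approx \Gamma, \Delta$ holds vacuously and the induction hypothesis applies under $Q$, yielding $Q \mid \Eta_M \vdash M : \tau$. I would then reconstruct $\sigma$ by absorbing $Q$ with \trule{\I\then} and quantifying the variables $\seq t = \mathit{ftv}(Q \then \tau) \setminus \mathit{ftv}(\Gamma, \Delta)$ with \trule{\I\forall}; these are fresh for $\Eta_M$ because $\mathit{ftv}(\Eta_M) = \mathit{ftv}(\Gamma, \Delta)$. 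The main obstacle is that the declarative \trule{let} demands both premises under the same context $P$, whereas the reconstruction above produces the derivation of $M$ under $Q$. Closing this gap requires a separate monotonicity lemma stating that enlarging the predicate context preserves derivability; its proof must $\alpha$-rename the variables bound by \trule{\I\forall} (and the existentials of \trule{make} and \trule{break}) to keep the freshness side conditions from clashing with the newly added predicates of $P$. I would therefore prove this monotonicity-with-renaming property as an auxiliary lemma in advance, so that the \strule{let} case can invoke it to reach $P \mid \Eta_M \vdash M : \sigma$ and conclude with \trule{let}.
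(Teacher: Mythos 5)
Your proposal follows essentially the same route as the paper's proof: induction on the syntax-directed derivation, rebuilding instantiation and generalization with \trule{\E\forall}/\trule{\E\then} and \trule{\I\forall}/\trule{\I\then}, splitting $\Eta$ so that premises of multi-premise rules each receive a copy of the shared $\Delta$ that is then removed by \trule{\C}, and discharging unused unrestricted bindings with \trule{\W}. The only notable difference is that you explicitly isolate the predicate-context monotonicity lemma (and its $\alpha$-renaming subtlety) needed in the \strule{let} case, which the paper's proof uses silently when it passes from the generalized premise $\emptyset \mid \Gamma,\Delta \vdash M: \sigma$ to an application of \trule{let} under $P$.
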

\noindent%
The proof is by structural induction on the derivation of $P \mid \Gamma \vdashS M: \tau$, and
relies on introducing instances of the structural and polymorphism rules.

Second, we show completeness of the syntax directed system.  A derivation in the original system may
end with uses of \trule{\I\then} or \trule{\I\forall}, moving predicates from the context to the
type or quantifying over free type variables.  In contrast, there are no such steps in a derivations
in the syntax-directed system. To account for this difference, we introduce a notion of qualified
type schemes, again following Jones~\cite{Jones94}.
\begin{defn}
  A qualified type scheme $(P \mid \sigma)$ pairs a type scheme $\sigma$ with a set of predicates
  $P$.  Let $\sigma$ be $\forall \seq t. Q \then \tau$ and $\sigma'$ be
  $\forall \seq {t'}. Q' \then \tau'$.  We say that $(P \mid \sigma)$ is an instance of
  $(P' \mid \sigma')$, written $(P \mid \sigma) \lessgeneral (P' \mid \sigma')$ iff there are
  $\seq\upsilon$ such that $\tau = [\upsilon_i/t'_i]\tau'$ and
  $P,Q \entails P',[\upsilon_i/t'_i]Q'$.  We treat type schemes $\sigma$ as abbreviations for
  qualified type schemes $(\emptyset \mid \sigma)$.
\end{defn}
\noindent%
We can now state the completeness of the syntax-directed system.
\begin{theorem}[Completeness of $\vdashS$]\label{thm:completeness-s}
  If $P \mid \Eta \vdash M: \sigma$ and $P \vdash \Eta \approx \Gamma$, then there are some $Q$ and
  $\tau$ such that $Q \mid \Gamma \vdashS M: \tau$ and
  $(P \mid \sigma) \lessgeneral Gen(\Gamma,Q \then \tau).$
\end{theorem}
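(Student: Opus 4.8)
The plan is to prove \thmref{completeness-s} by structural induction on the derivation of $P \mid \Eta \vdash M : \sigma$. Before the induction I would establish three closure properties of the syntax-directed judgment, each by a routine induction on $\vdashS$ derivations: first, a substitution lemma, that $Q \mid \Gamma \vdashS M : \tau$ implies $SQ \mid S\Gamma \vdashS M : S\tau$ for every type substitution $S$; second, context strengthening, that $Q \mid \Gamma \vdashS M : \tau$ and $Q' \entails Q$ imply $Q' \mid \Gamma \vdashS M : \tau$, which follows because the auxiliary judgments $\vdash \Junl{\cdot}$, $\vdash \cdot \moreunlimited \phi$, and the entailment of $\Fun{\cdot}$ are all monotone in the predicate context; and third, unrestricted weakening, that $Q \mid \Gamma \vdashS M : \tau$ and $Q \vdash \Junl\sigma$ imply $Q \mid \Gamma, x{:}\sigma \vdashS M : \tau$ for fresh $x$, pushing the extra assumption into the shared ($\Delta$) component of each rule and ultimately into \strule{var}. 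I would also record the standard facts that $\lessgeneral$ on qualified type schemes is reflexive and transitive, and that $(P \mid \pi \then \rho) \lessgeneral X$ holds exactly when $(P, \pi \mid \rho) \lessgeneral X$, both immediate from the definition of the scheme instance relation.

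The induction splits into two families. The rules that leave the term unchanged---\trule{\C}, \trule{\W}, \trule{\I\forall}, \trule{\E\forall}, \trule{\I\then}, and \trule{\E\then}---involve only bookkeeping on $\approx$ and $\lessgeneral$. For \trule{\C} the multienvironment $\Eta, \Eta', \Eta'$ has the same domain as $\Eta, \Eta'$ and its extra duplicates are unrestricted under $P$, so the \emph{same} $\Gamma$ approximates both premise and conclusion and the induction hypothesis transfers unchanged; contraction is thereby absorbed. For \trule{\W} I would restrict $\Gamma$ to $\dom(\Eta)$, apply the induction hypothesis, and re-add the discarded (unrestricted) bindings using the weakening lemma, augmenting $Q$ with the corresponding $\Unl{\cdot}$ predicates (each entailed by $P$); by the freshness convention these bindings add no free variables to $Q \then \tau$ outside $ftv(\Gamma)$, so $Gen$ and the instance relation are undisturbed. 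The four polymorphism and qualification rules reduce to reflexivity and transitivity of $\lessgeneral$, the $(P \mid \pi \then \rho)$/$(P, \pi \mid \rho)$ equivalence (with $P \entails \pi$ used for \trule{\E\then}), and the observation that $t \notin ftv(P, \Eta)$ gives $t \notin ftv(\Gamma)$, so that quantifying $\sigma$ over $t$ in \trule{\I\forall} preserves the instance relation.

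For the properly syntax-directed rules the treatment is uniform; I illustrate with \trule{\E\to}. Taking $\Gamma_M = \restrict{\Gamma}{\dom(\Eta)}$ and $\Gamma_N = \restrict{\Gamma}{\dom(\Eta')}$ (which approximate the two premises), the induction hypothesis yields $Q_M \mid \Gamma_M \vdashS M : \tau_M$ and $Q_N \mid \Gamma_N \vdashS N : \tau_N$ together with their instance relations. The instance relations supply substitutions $S_M$ and $S_N$, acting on disjoint freshly chosen generalized variables, with $S_M \tau_M = \phi\,\tau\,\upsilon$, $S_N \tau_N = \tau$, $P \entails S_M Q_M$, and $P \entails S_N Q_N$; applying the substitution lemma (which fixes the environments, since $S$ touches only generalized variables) re-types the premises at the needed types. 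I would then set $\Delta = \Gamma_M \cap \Gamma_N$---its bindings are duplicated across $\Eta, \Eta'$, hence unrestricted under $P$---and form the common context $Q^\star = S_M Q_M, S_N Q_N, \Fun\phi$ extended by the $\Unl{\cdot}$ predicates witnessing $P \vdash \Junl\Delta$. Context strengthening promotes both premises to $Q^\star$, and \strule{\E\to} assembles the conclusion. Since every predicate of $Q^\star$ is entailed by $P$, the instance relation $(P \mid \upsilon) \lessgeneral Gen(\Gamma, Q^\star \then \upsilon)$ holds under the identity instantiation. The rules \trule{\I\to}, \trule{\I\oplus}, \trule{\E\oplus}, \trule{make}, and \trule{break} follow the same recipe, the last two additionally threading the constructor context $Q$ and discharging the freshness side conditions on $\seq u$.

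The main obstacle I expect is \trule{let}, where the declarative premise types $M$ at an explicit scheme $\sigma_M$ while \strule{let} regeneralizes $M$ as $\sigma_{sd} = Gen(\Gamma_M, Q_M \then \tau_M)$. The induction hypothesis on the first premise gives $(P \mid \sigma_M) \lessgeneral (\emptyset \mid \sigma_{sd})$, so $\sigma_{sd}$ is at least as general as $\sigma_M$; the difficulty is that the induction hypothesis for $N$ produces a derivation binding $x{:}\sigma_M$, whereas \strule{let} binds the more general $x{:}\sigma_{sd}$. Reconciling these needs a monotonicity lemma for let-bound assumptions: replacing a variable's scheme by a more general one preserves $\vdashS$. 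Its proof rests on the \strule{var} side conditions---$(P' \then \tau) \lessgeneral \sigma$ survives by transitivity of $\lessgeneral$, and $\vdash \Junl{\sigma_{sd}}$ follows from $\vdash \Junl{\sigma_M}$ because \trule{un-$\sigma$} only adds $\Unl{\cdot}$ hypotheses as it descends through the extra quantifiers. The care here is genuinely in the interaction of qualified-type generalization with the linear split of the environment: I must check that the variables generalized at the \mkwd{let} are exactly those absent from $\Gamma_M$ (so that nothing free in the part of the environment typing $N$ is wrongly quantified), and that $P$ still entails the instantiated residual context $Q_M$ carried into $\sigma_{sd}$. This is the one point where the linear and qualified-type machinery must be reconciled most delicately; the remaining arithmetic of environment splitting is routine by comparison.
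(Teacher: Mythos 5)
Your proposal is correct and matches the paper's proof in all essentials: the same induction on the declarative derivation, contraction absorbed by the approximation relation, a dedicated unrestricted-weakening lemma (the paper's \lemref{weakening}), the polymorphism and qualification rules dispatched by transitivity of $\lessgeneral$, and the \trule{let} case resolved by exactly the monotonicity lemma you identify---that replacing a let-bound scheme by a more general one preserves $\vdashS$ (the paper's \lemref{more-general-context}). The only difference is one of factoring: where you invoke a substitution lemma plus entailment strengthening to re-type the premises at the required instances, the paper packages that same reasoning as a single instance-closure lemma (\lemref{more-specific}), stating that any instance of $Gen(\Gamma, P \then \tau)$ is itself derivable.
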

\noindent%
Intuitively, this states that for any derivation in our original type system, there is a derivation
of at least as general a result in the syntax-directed system.  The proof is by induction on the
derivation of $P \mid \Eta \vdash M: \sigma$.  The interesting cases rely on the role of
generalization and instantiation in the syntax-directed type system and the safe movement of
structural rules up derivation trees.

\subsection{Type Inference for \lang}\label{sec:inference}

Having defined a suitable target type system, we can give a type inference algorithm for \lang.  We
have three separate concerns during type inference.  First, we use a standard Hindley-Milner
treatment of polymorphism.  Second, we introduce \texttt{Un} predicates for non-linear use of
variables.  We track the variables used in each expression, and so detect when variables are reused
or discarded.  Third, we account for first-class polymorphism.  We introduce a distinction between
rigid and flexible type variables; only the latter are bound in unification.  These three concerns
add apparent complexity to the type inference algorithm, but can be understood separately.

\newcommand{\Wrule}[1]{\multicolumn{2}{@{}l}{#1}}
\newcommand{\Wgiven}[1]{\ \ & \begin{array}{@{}r@{\hspace{4px}}r@{\hspace{4px}}l@{}} #1 \end{array}}
\newcommand{\Where}{\text{where} &}
\renewcommand{\And}{&}
\newcommand{\Break}{&&\qquad}

\newcommand{\RestrictEnv}[2]{#1|_{#2}}

\newcommand{\M}{\mathcal{M}}
\newcommand{\Mapp}[5]{\M(#1, #2; #3 \vdash #4 : #5)}
\newcommand{\Mres}[3]{#1,#2,#3}
\newcommand{\Unif}[3]{Mgu_{#1}(#2, #3)}

\begin{figure*}
\renewcommand{\arraystretch}{1.1}
\[\fbox{$\Mapp S X \Gamma M \tau = \Mres P {S'} \Sigma$}\]
\[
\begin{array}[t]{@{}ll}
\Wrule{\Mapp S X \Gamma x \tau = \Mres {([\vec u/\vec t]\,P)} {U \circ S} {\Set x}} \\
\Wgiven{
  \Where \multicolumn{2}{@{}l}{(x:\forall \vec{t}. P \then \upsilon) \in S\,\Gamma}\\
  \And U &= \Unif X {[\vec u/\vec t]\upsilon} {S\,\tau}} \\
\Wrule{\Mapp S X \Gamma {\lambda x.M} \tau = \Mres {(P \cup Q)} {S'} {\Sigma \setminus x}}\\
\Wgiven{
  \Where P;S';\Sigma &= \Mapp {\Unif X \tau {u_1\,u_2\,u_3} \circ S} X {\\
  \Break \Gamma,x:u_2} M {u_3} \\
  \And Q &= \Set{\Fun u_1} \cup Leq(u_1,\RestrictEnv{\Gamma}{\Sigma}) \, \cup \\
  \Break Weaken(x,u_2,\Sigma)} \\
\Wrule{\Mapp S X \Gamma {M\,N} \tau = \Mres Q {R'} {\Sigma\cup\Sigma'}}\\
\Wgiven{
  \Where \Mres P R \Sigma &= \Mapp S X \Gamma M {u_1\,u_2\,\tau} \\
  \And \Mres {P'} {R'} {\Sigma'} &= \Mapp R X \Gamma N {u_2} \\
  \And Q &= P \cup P' \cup \Set{\Fun u_1} \, \cup \\
  \Break Un(\RestrictEnv{\Gamma}{\Sigma\cap\Sigma'})} \\
\Wrule{\Mapp S X \Gamma {\ini{M}} \tau = \Mres P R \Sigma} \\
\Wgiven{
  \Where \Mres P R \Sigma &= \Mapp {\Unif X {\tau} {\opt{u_1}{u_2}} \circ S} X \Gamma M {u_i}} \\
\Wrule{\Mapp S X \Gamma {\case{M}{x}{N}{y}{N'}} \tau = } \\
 & \Mres {(P_M \cup P_N \cup P_{N'} \cup Q)} {R''} {\Sigma_M \cup \Sigma_N \cup \Sigma_{N'}} \\
\Wgiven{
  \Where \Mres {P_M} R {\Sigma_M} &= \Mapp S X \Gamma M {\opt{u_1}{u_2}} \\
  \And \Mres {P_N} {R'} {\Sigma_N} &= \Mapp R X {\Gamma,x:u_1} N \tau \\
  \And \Mres {P_{N'}} {R''} {\Sigma_{N'}} &= \Mapp {R'} X {\Gamma,y:u_2} {N'} \tau \\
  \And \Sigma' &= (\Sigma_N \setminus \Sigma_{N'}) \cup (\Sigma_{N'} \setminus \Sigma_{N}) \,\cup \\
  \Break \Sigma_M \cap (\Sigma_N \cup \Sigma_{N'}) \\
  \And Q &= Un(\RestrictEnv{\Gamma}{\Sigma'}) \cup Weaken(x,u_1,\Sigma_N) \,\cup \\
  \Break Weaken(y,u_2,\Sigma_{N'})}
\end{array}
\begin{gathered}[t]
\begin{array}[t]{@{}ll}
\Wrule{\Mapp S X \Gamma {\mlet{x}{M}{N}} \tau = \Mres {(P' \cup Q)} {R'} {\Sigma\cup(\Sigma'\setminus x)}}\\
\Wgiven{
  \Where \Mres P R \Sigma &= \Mapp S X \Gamma M {u_1} \\
  \And \sigma &= GenI(R\,\Gamma,R\,(P \then u_1)) \\
  \And \Mres {P'} {R'} {\Sigma'} &= \Mapp R X {\Gamma,x:\sigma} N \tau) \\
  \And Q &= Un(\RestrictEnv{\Gamma}{\Sigma\cap\Sigma'}) \cup Weaken(x,\sigma,\Sigma')
} \\
\Wrule{\Mapp S X \Gamma {K\,M} \tau = \Mres {(P \cup [\vec{u_1}/\vec{t_1}, \vec{u_3}/\vec{t_3}] Q)} R \Sigma} \\
\Wgiven{
  \Where K &: \forall \vec t_1. (\forall \vec{t_2}. \exists \vec{t_3}. Q \then \upsilon') \uto \upsilon \\
  \And U &= \Unif X {[\vec{u_1}/\vec{t_1}]\upsilon} \tau \\
  \And \Mres P R \Sigma &= \Mapp {U \circ S} {X \cup \vec{t_2}} \Gamma M {[\vec {u_1}/\vec {t_1},\vec {u_3}/\vec {t_3}]\upsilon'}  \\
  \And \vec{t_2} &\# \; ftv(P, R \, \Gamma)
} \\
\Wrule{\Mapp S X \Gamma {\mlet{K\,x}{M}{N}} \tau = \Mres P {R'} {\Sigma_M \cup (\Sigma_N \setminus x)}} \\
\Wgiven{
  \Where K &: \forall \vec t_1. (\forall \vec{t_2}. \exists \vec{t_3}. Q \then \upsilon') \uto \upsilon \\
  \And \Mres {P_M} R {\Sigma_M} &= \Mapp S X \Gamma M {[\vec{u_1}/\vec{t_1}]\upsilon} \\
  \And \Mres {P_N} {R'} {\Sigma_N} &= \Mapp R {X \cup \vec{t_3}} {\Gamma,x:[\vec {u_1}/\vec {t_1}, \vec {u_2}/\vec {t_2}]\upsilon'} N \tau \\
  \And \vec{t_3} &\# \; ftv(P_N, R'\,\Gamma, R'\,\tau) \\
  \And P_N' &\cup \; [\vec{u_1}/\vec{t_1}, \vec{u_2}/\vec{t_2}] Q \then P_N \\
  \And P &= P_M \cup P_N' \cup Weaken(x,\sigma,\Sigma_N) \cup Un(\RestrictEnv{\Gamma}{\Sigma_M \cap \Sigma_N})
}
\end{array} \\[5px]
\fbox{$Leq,Un,Weaken,GenI$} \\ 
\begin{aligned}
Leq(\phi,\Gamma) &= \bigcup \Set{P \mid P \vdash \phi \leq \tau} \\ 
Un(\Gamma) &= \bigcup \Set{ P \mid (y:\sigma) \in \Gamma, P \vdash \Junl \sigma} \\ 
Weaken(x,\sigma,\Sigma) &= \begin{cases}
  P &\text{if $x \not\in \Sigma$, $P \vdash \Junl\sigma$} \\ 
  \emptyset &\text{otherwise}
\end{cases} \\
GenI(\Gamma,P \then \tau) &= \forall (ftv(S\,P,\tau)). S\,P \then \tau \\
& \text{where $S$ improves $ftv(P) \setminus ftv(\Gamma,\tau)$ in $P$}
\end{aligned}
\end{gathered}
\]
\caption{Type inference algorithm $\mathcal{M}$.  We let $u_i$ range over fresh variables, and write $A \# B$ to require that $A$ and $B$ be disjoint.}
\label{fig:infer}
\end{figure*}


The inference algorithm is given in \figref{infer}, in the style of Algorithm~$\M$~\cite{LeeY98}.
The inputs include the environment $\Gamma$, expression $M$ and expected type $\tau$, along with the
current substitution $S$ and the rigid type variables $X$.  The output includes the generated
predicates $P$, the resulting substitution $S'$, and a set of used (term) variables $\Sigma$.  We
let $u_i$ range over fresh type variables, and let $U,R,S$ range over substitutions.  We will look
at illustrative cases of the algorithm in detail; the remaining cases are constructed along the same
lines.

In the variable case, we are given both the variable $x$ and its expected type $\tau$.  We unify
$x$'s actual type, given by $\Gamma$, with its expected type $\tau$.  This illustrates the primary
difference between Algorithm~$\mathcal{M}$ and Milner's Algorithm~$\mathcal{W}$: unification is
moved as close to the leaves as possible.  We defer the details of unification to a separate
algorithm $\Unif X \tau \upsilon$, where the type variables in $X$ are not bound in the resulting
unification procedure.  The implementation of unification does not differ from previous
presentations, such as Jones's unification algorithm for FCP~\cite{Jones97}.  We return any
predicates in the type scheme of $x$, the updated substitution, and the observation that $x$ has
been used.

The application case demonstrates the sets of used variables.  We check the subexpressions $M$ and
$N$; to account for the overloading of functions, we only assume that $M$ has type
$u_1 \, u_2 \, \tau$, for some function type $u_1$.  The variables used in $M$ are captured by
$\Sigma$, and those used in $N$ are captured by $\Sigma'$.  Any variables used in both must be
unrestricted, and so the predicates inferred for the application include not just the predicates
inferred for each sub expression ($P$ and $P'$), but also that any variables used in
$\Sigma \cap \Sigma'$ must have unrestricted type.  We capture this with the auxiliary function
$Un(\restrict{\Gamma}{\Sigma \cap \Sigma'})$, where $\restrict \Gamma \Sigma$ denotes the
restriction of $\Gamma$ to variables in $\Sigma$.  We give a declarative specification of $Un(-)$;
an implementation that finds the simplest such $P$ can be straightforwardly derived from the
definitions of $P \vdash \Junl \cdot$ and entailment.

The \texttt{let} case demonstrates the treatment of polymorphism and binders.
First, we must account for the possibility that $x$ was not used in $N$, and thus must be of
unrestricted type.  This is captured by $Weaken(x,\sigma,\Sigma)$.  Second, we consider
generalization.  Recall the term $(\lambda x.x)\,y$, where $y$ has type $\tau$.  The algorithm will
infer that this term has type $\tau$ under the assumption $\Fun u$ for some variable $u$.  But $u$
appears neither in the typing environment nor in the result, so naively generalizing this expression
would give the (apparently ambiguous) type scheme $(\Fun u) \then \tau$.  However, this is not a
real ambiguity: we have no way of observing the choice of $u$ in the resulting expression, so we
could assume it to be $\lto$ without decreasing the expressiveness or safety of type inference.  We
formalize this observation using an adaptation of Jones's notion of improvement for qualified
types~\cite{Jones95}.  An improving substitution for a qualified type $P \then \tau$ is a
substitution $S$ such that any satisfiable instance of $P \then \tau$ is also a unambiguous
satisfiable instance of $S\,(P \then \tau)$.  For example, $[{\uto}/f]$ is an improving substitution
for $(\Unl f, \Fun f) \then \tau$, as the only ways to prove $\Fun f$ are if $f$ is $\uto$ or $\lto$
and only the former is unrestricted.  In the type $(\Fun f) \then \tau$, where $f$ is not free in
$\tau$ or the environment, we can instantiate $f$ to either $\uto$ or $\lto$ and cannot observe the
choice.  As this choice does not introduce ambiguity, we consider $[{\lto}/f]$ to be an improving
substitution in such cases.  We say that $S$ is an improving substitution for $X$ in $P$ if $S$ is
the union of such improvements for each variable in $X$, and apply such an improving substitution
before generalizing.  Again, we give a declarative specification of $GenI(-,-)$, as the derivation
of its implementation is entirely straightforward.

We can now relate type inference and the syntax-directed type system.  First, inference constructs
valid typings.
\begin{theorem}[Soundness of $\M$]\label{thm:soundness-m}
  If $\Mapp S X \Gamma M \tau = \Mres P {S'} \Sigma$, then
  $S'\,P \mid S'\,(\restrict \Gamma \Sigma) \vdash M: S'\,\tau$.
\end{theorem}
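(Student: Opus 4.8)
The plan is to prove the statement by induction on the expression $M$, which, since $\mathcal{M}$ is syntax-directed, coincides with induction on the recursive structure of the algorithm. In each case I take the outputs of the recursive calls, invoke the induction hypothesis to obtain syntax-directed derivations for the subexpressions, and assemble the matching instance of the corresponding $\vdashS$ rule, discharging its side conditions from the predicates that $\mathcal{M}$ has accumulated in $P$. Because the goal concerns $\vdashS$-derivability of $S'\,(\restrict \Gamma \Sigma)$, the invariant I will maintain throughout is that $\Sigma$ records exactly those term variables whose assumptions occur in the linear part of the constructed derivation; every discarded or shared assumption is accounted for by the predicates produced by $Un$, $Leq$, and $Weaken$.

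Before the induction I would establish a few routine structural lemmas: (i) $\vdashS$ is closed under substitution, so $P \mid \Gamma \vdashS M: \tau$ yields $S\,P \mid S\,\Gamma \vdashS M: S\,\tau$; (ii) $\vdashS$ is monotone in its predicate context, since every side condition is an entailment and $\entails$ is monotone, so enlarging $P$ to $P \cup P'$ preserves derivability; (iii) $\mathcal{M}$'s output substitution extends its input on the non-rigid variables, so a subderivation obtained under an intermediate substitution $R$ can be transported to the final substitution $R'$ via (i); and (iv) weakening by an unused unrestricted assumption is admissible, i.e.\ $P \vdash \Junl\sigma$ together with $P \mid \Gamma \vdashS M: \tau$ gives $P \mid \Gamma, x{:}\sigma \vdashS M: \tau$. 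I would also record the defining properties of the auxiliary functions, namely that $Un(\Gamma)$ entails $\Junl\Gamma$, that $Leq(\phi,\Gamma)$ entails $\Gamma \moreunlimited \phi$, and that $Weaken(x,\sigma,\Sigma)$ entails $\Junl\sigma$ whenever $x \notin \Sigma$ --- precisely the premises appearing in the $\vdashS$ rules.

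The representative cases then proceed as follows. For \strule{var}, unification gives $S'([\vec u/\vec t]\upsilon) = S'\,\tau$, which is exactly the instance witness for $(S'\,P \then S'\,\tau) \lessgeneral S'\,\sigma$; and since $\Sigma = \Set{x}$ the residual environment is empty, so its unrestrictedness is trivial. For \strule{\E\to} I apply the induction hypothesis to both subcalls, transport the first from $R$ to $R'$ using (i)--(iii), weaken both contexts to $R'\,Q$ by (ii), and partition $\restrict\Gamma{\Sigma \cup \Sigma'}$ into the $M$-only, $N$-only, and shared parts, where the shared part is $\restrict\Gamma{\Sigma \cap \Sigma'}$ and its unrestrictedness is supplied by the $Un$ predicate, while $\Set{\Fun u_1}$ discharges $\Fun\phi$. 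For \strule{\I\to} I wrap the body derivation, using $\Set{\Fun u_1}$ for $\Fun\phi$ and $Leq$ for $\Gamma \moreunlimited \phi$; here I use that the captured environment coincides with $\restrict\Gamma\Sigma$ because binders are fresh (so $x \notin \dom\Gamma$), and I invoke (iv) with $Weaken(x,u_2,\Sigma)$ exactly when $x$ is unused so that the bound assumption can be reinstated. The $\oplus$ and constructor rules follow the same template, with \strule{\E\oplus} additionally reconciling the two branches' used-variable sets.

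The main obstacle I expect is the \mkwd{let} case, where $\mathcal{M}$ generalizes via $GenI$ rather than the declarative $Gen$. Here I will need a lemma that the improving substitution folded into $GenI$ is sound---that it preserves the satisfiable, unambiguous instances---so that the scheme $GenI(R\,\Gamma, R\,(P \then u_1))$ may legitimately play the role of $\sigma$ in \strule{let} while the residual predicates still entail those required to type $N$. Threading the accumulating substitutions and predicate sets through this generalization step, and matching $\mathcal{M}$'s freshness and disjointness side conditions (such as $\vec{t_2} \mathbin\# ftv(\dots)$) to the $\seq u \notin ftv(\dots)$ conditions of \trule{make} and \trule{break}, is where most of the care lies; the remaining reasoning about used-variable sets is routine bookkeeping. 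Finally, should the ambient $\vdash$ in the statement denote the original (non-syntax-directed) system, a single appeal to \thmref{soundness-s}, taking $\Eta = S'\,(\restrict\Gamma\Sigma)$ (which approximates itself, as it has no repeated assumptions), closes the gap.
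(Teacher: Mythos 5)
Your proposal matches the paper's proof essentially step for step: the paper also proceeds by structural induction on $M$, first establishing exactly your preliminary lemmas (closure of $\vdashS$ under substitution, monotonicity in the predicate context, admissible weakening by unrestricted assumptions, and preservation of rigid variables), plus a dedicated improvement lemma showing that the improving substitution inside $GenI$ preserves $\vdashS$-derivability---precisely the \mkwd{let}-case obstacle you anticipated---and then discharges each rule's side conditions from the predicates accumulated by $Un$, $Leq$, $Weaken$, and $\Set{\Fun u_1}$. Your closing remark about bridging to the original system via \thmref{soundness-s} also agrees with the paper, whose proof constructs $\vdashS$ derivations and defers the $\vdash$ version to that combination.
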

\noindent%
The proof is by induction on the structure of $M$; each case involves comparing the predicates
generated in inference to the predicates needed for typing.  In combination with
\thmref{soundness-s}, this gives a similar soundness result for inference with respect to the
original type system.  Next, we want to show that any valid typing can be found by inference.
\begin{theorem}[Completeness of $\M$]\label{thm:completeness-m}
  If $S$ is a substitution and $X$ is a set of type variables such that
  $P \mid S\,\Gamma \vdashS M : S\,\tau$, and $\restrict S X = id$, then
  $\Mapp {id} X \Gamma M \tau = \Mres Q {S'} \Sigma$ such that
  $(P \then S\,\tau) \lessgeneral GenI(S'\,\Gamma, S'\,Q \then S'\,\tau)$.
\end{theorem}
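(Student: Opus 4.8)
The plan is to prove the statement by structural induction on the term $M$, with the statement read as universally quantified over $S$, $X$, $\Gamma$, $\tau$, and $P$, strengthening the usual Algorithm~$\M$ completeness argument so as to also track the linear used-variable sets $\Sigma$ and the rigid-variable set $X$. The invariant I would carry through the induction is that the ``external'' substitution $S$ witnessing the given syntax-directed derivation factors through the substitution the algorithm produces: at each node I want to exhibit $R$ with $\restrict R X = id$ and $S = R \circ S'$ (modulo the fresh variables $u_i$ introduced at that node, onto which $R$ supplies the types that $S$ has implicitly guessed). This $R$ is also what will instantiate the quantified variables of $GenI(S'\,\Gamma, S'\,Q \then S'\,\tau)$ to yield $(P \then S\,\tau) \lessgeneral GenI(S'\,\Gamma, S'\,Q \then S'\,\tau)$. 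Establishing the factorization rests on a completeness property of the unification subroutine, which I would assume from Jones's FCP unification~\cite{Jones97}: if $S$ unifies $\tau_1$ and $\tau_2$ with $\restrict S X = id$, then $\Unif X {\tau_1}{\tau_2}$ succeeds returning some $U$, and $S = R \circ U$ for some $R$ with $\restrict R X = id$. Since the syntax-directed derivation types $M$ at $S\,\tau$, the types passed to each unification call are always unifiable by $S$, so every call in the algorithm succeeds.

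For the leaf and first-order cases (\strule{var}, \strule{\E\to}, \strule{\I\to}, \strule{\I\oplus$_i$}, \strule{\E\oplus}), I would invert the given syntax-directed derivation to expose its sub-derivations and side conditions, check that the invariant $\restrict S X = id$ survives the descent, apply the induction hypothesis to each subterm, and then combine the results using the unification-factorization property above. Two things must then be verified at each node. First, the computed used-variable set $\Sigma$ must be reconciled with the explicit environment partition of the syntax-directed rule: the variables the algorithm records as used in two subterms must lie in the shared, necessarily-unrestricted part $\Delta$ of rules such as \strule{\E\to}, which is exactly the content of the side condition $P \vdash \Junl\Delta$. Second, every predicate emitted by the algorithm must be entailed (after $S'$, and modulo $R$) by the external context $P$: the $Un(\restrict \Gamma {\Sigma\cap\Sigma'})$ constraints follow from $P \vdash \Junl\Delta$, the $Weaken$ constraints from the $\Junl$ side conditions on discarded binders, and the $Leq(u_1,\restrict\Gamma\Sigma)$ constraints in the abstraction case from the premise $P \vdash \Gamma \moreunlimited \phi$, all via the definitions of the auxiliary judgments $P \vdash \Junl\cdot$ and $P \vdash \cdot \moreunlimited \cdot$.

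The \strule{let} case adds polymorphism and the improvement performed by $GenI$. After applying the induction hypothesis to the bound expression and generalizing, I would push the result through the outer binder using the instance ordering $\lessgeneral$ on qualified type schemes, exactly as in Jones's completeness argument for qualified types~\cite{Jones94}. The genuinely new ingredient is that $GenI$ applies an improving substitution before generalizing, so I must check that this loses no generality, i.e.\ that $(P \then S\,\tau) \lessgeneral GenI(\dots)$ still holds. This follows from the defining property of improvement~\cite{Jones95}: an improving substitution preserves every satisfiable (unambiguous) instance of a qualified type, so each instance available from the un-improved scheme remains available and the required $\lessgeneral$ is maintained.

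I expect the main obstacle to be the first-class-polymorphism cases, \trule{make} and \trule{break}. Here the rigid-variable set $X$ is enlarged --- by the constructor's universals $\vec{t_2}$ in \trule{make} and its existentials $\vec{t_3}$ in \trule{break} --- and I must re-establish $\restrict S X = id$ on the enlarged $X$ before invoking the induction hypothesis, relying on the freshness and escape conditions $\vec u \notin ftv(\dots)$ carried by the syntax-directed rules to guarantee that $S$ does not act on these skolemized variables. Simultaneously the constructor's predicate context $Q$ must be threaded through: in \trule{break} I must match the locally assumed instance of $Q$ against the predicates returned by inference on the body and show the residual is entailed. Making the bookkeeping of these rigid variables, their escape conditions, and the constructor predicate context line up with the algorithm's freshness annotations is where I anticipate the real work; the remaining cases become routine once the factorization invariant and the predicate-entailment discipline are in place.
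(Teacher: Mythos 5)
Your overall strategy---the classical factorization-invariant completeness argument, with most-general-unifier completeness supplying the factors---is sound in outline, but as set up the induction cannot be pushed through the recursive structure of Algorithm~$\M$. You quantify the statement over $S$, $X$, $\Gamma$, $\tau$, and $P$, yet the invocation in the statement remains fixed at input substitution $id$. The algorithm threads substitutions through its recursive calls: in the application case the second call is $\Mapp R X \Gamma N {u_2}$ where $R$ is the \emph{output} of the first call, and the same pattern occurs in the \mkwd{let}, \mkwd{case}, \trule{make}, and \trule{break} cases. Your induction hypothesis speaks only about $id$-seeded invocations, so the step ``apply the induction hypothesis to each subterm'' fails at every second premise. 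Relating $\Mapp R X \Gamma N {u_2}$ to an $id$-seeded run is not something that can be left implicit: it is itself a property of the algorithm requiring its own induction over $M$.

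There are two standard repairs, and the paper is organized around one of them. Either (i) generalize the statement so the algorithm's input substitution is an arbitrary $S_0$, with the oracle typing given for $S_0$-instantiated environment and type (the formulation of Lee and Yi~\cite{LeeY98}); or (ii) prove a transfer lemma between differently-seeded runs---exactly the paper's Lemma~\ref{thm:infer-more-specific}, which states that if $\Mapp S X \Gamma M \tau = \Mres P {S'} \Sigma$ then $\Mapp {id} X \Gamma M \tau = \Mres {P'} {S''} \Sigma$ with $P \entails S\,P'$ and $S' = S \circ S''$. The paper then runs its main induction in the opposite direction from yours: it seeds the algorithm with the oracle substitution $S$ taken from the given derivation, so every unification succeeds while binding only fresh variables (no inline appeal to unifier completeness is needed) and the induction hypothesis threads through the recursive calls unchanged; the factorization you want to carry as an invariant is recovered once, at the end, from the transfer lemma. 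Your per-node treatment of the emitted predicates ($Un$, $Weaken$, $Leq$ against the side conditions $P \vdash \Junl\Delta$ and $P \vdash \Gamma \moreunlimited \phi$), your use of Jones's improvement property in the \mkwd{let} case, and your anticipated bookkeeping for the rigid variables in \trule{make} and \trule{break} all align with the paper's cases; only the induction scaffolding needs the repair above.
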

\noindent%
The proof is by induction on the typing derivation, observing in each case that the computed type
generalizes the type in the derivation.  Again, in combination with \thmref{completeness-s}, we have
a completeness result for inference with respect to the original type system.  Finally, this allows
us to give a constructive proof that \lang enjoys principal types.
\begin{theorem}[Principal Types]\label{thm:principal}
  If $P_0 \mid \Eta \vdash M: \sigma_0$ and $P_1 \mid \Eta \vdash M: \sigma_1$ then there is
  some $\sigma$ such that $\emptyset \mid \Eta \vdash M: \sigma$ and $(P_0 \mid \sigma_0) \lessgeneral
  \sigma, (P_1 \mid \sigma_1) \lessgeneral \sigma.$
\end{theorem}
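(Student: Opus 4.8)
The plan is to produce a principal type by a \emph{single} run of the inference algorithm $\M$ on $M$ with a fresh expected type, and to exploit the determinism of $\M$: the one type it computes must dominate both given derivations. Throughout, let $\Gamma$ be the underlying environment of $\Eta$ (one copy of each distinct assumption) and suppose we have established $P_0 \vdash \Eta \approx \Gamma$ and $P_1 \vdash \Eta \approx \Gamma$, i.e. that every assumption occurring more than once in $\Eta$ is unrestricted under both contexts. The first step then moves both hypotheses into the syntax-directed system: by \thmref{completeness-s} there are $Q_i, \tau_i$ with $Q_i \mid \Gamma \vdashS M : \tau_i$ and $(P_i \mid \sigma_i) \lessgeneral Gen(\Gamma, Q_i \then \tau_i)$ for $i \in \{0,1\}$.

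Next I would run $\M$ once on $\Gamma$, $M$, and a fresh expected type $u$, with rigid set $X = ftv(\Gamma)$ (so $u \notin X$). To connect each syntax-directed derivation to this run, take $S_i = [\tau_i/u]$; then $S_i\,\Gamma = \Gamma$, $S_i\,u = \tau_i$, and $\restrict{S_i}{X} = id$, so $Q_i \mid S_i\,\Gamma \vdashS M : S_i\,u$ is exactly the hypothesis of \thmref{completeness-m}. Because the algorithm's inputs $(id, X, \Gamma, M, u)$ are identical for both $i$, it returns a single result $\Mres{Q}{S'}{\Sigma}$; setting $\sigma := GenI(S'\,\Gamma, S'\,Q \then S'\,u)$, \thmref{completeness-m} gives $(Q_i \then \tau_i) \lessgeneral \sigma$ for both $i$. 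Chaining this with the instance facts from the previous step---using transitivity of $\lessgeneral$ together with the standard observation that every instance of $Gen(\Gamma, Q_i \then \tau_i)$ is an instance of any $\sigma$ of which $Q_i \then \tau_i$ is itself an instance---yields $(P_i \mid \sigma_i) \lessgeneral \sigma$.

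It remains to derive $\emptyset \mid \Eta \vdash M : \sigma$. Here I would appeal to \thmref{soundness-m}, which gives $S'\,Q \mid S'\,(\restrict{\Gamma}{\Sigma}) \vdash M : S'\,u$ in the original system, then apply the improving substitution used by $GenI$ (typing is preserved under improvement), discharge the remaining predicates with \trule{\I\then}, and quantify the generalized variables with \trule{\I\forall}, reaching $\emptyset \mid \restrict{\Gamma}{\Sigma} \vdash M : \sigma$ (using that $S'$ fixes the rigid variables $X \supseteq ftv(\Gamma)$, so $S'\,(\restrict{\Gamma}{\Sigma}) = \restrict{\Gamma}{\Sigma}$). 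Finally I would rebuild $\Eta$ from $\restrict{\Gamma}{\Sigma}$ by \trule{\W} (for assumptions of $\Gamma$ not used by $M$) and \trule{\C} (for assumptions duplicated in $\Eta$), each structural step licensed by an unrestrictedness judgment $P \vdash \Junl\sigma'$ now sitting in the body's context---precisely the obligations that $\M$ records through its $Un$ and $Weaken$ side-conditions.

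The main obstacle is the environment bookkeeping rather than the instance algebra. Two points need care. First, the opening step presupposes a single $\Gamma$ approximating $\Eta$ under \emph{both} $P_0$ and $P_1$; securing this (and choosing the right canonical $\Gamma$) is delicate, because a multienvironment may use a duplicated \emph{linear} assumption once per copy, a pattern the syntax-directed system---with its sharing side-conditions $P \vdash \Junl\Delta$---cannot replay, so the approximation need not hold for every $\Eta$. Second, and relatedly, the passage from the used-variable environment $\restrict{\Gamma}{\Sigma}$ produced by \thmref{soundness-m} back to all of $\Eta$ must reintroduce exactly those weakenings and contractions whose unrestrictedness obligations are covered by $Q$; verifying that the predicate context threaded through \trule{\I\then} is strong enough to justify every such structural step is where the argument must be checked most carefully. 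The remaining instance-chaining is routine Jones-style reasoning.
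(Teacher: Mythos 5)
Your proposal follows essentially the same route as the paper's own proof: move both derivations into the syntax-directed system via \thmref{completeness-s}, apply completeness of inference (\thmref{completeness-m}) to a single deterministic run of $\M$ to obtain one scheme dominating both, and finish by transitivity of $\lessgeneral$. If anything you are more thorough than the paper, which silently skips both the $\emptyset \mid \Eta \vdash M : \sigma$ obligation and the $\Eta \approx \Gamma$ side condition you rightly flag; the only slip is ordering in that last step---the structural rules \trule{\W} and \trule{\C} must be replayed \emph{before} \trule{\I\then} empties the predicate context, since afterwards the required judgments $\emptyset \vdash \Junl{\sigma'}$ need not hold.
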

\noindent%
The soundness of inference tells us that, if there are any typings for a term in an environment,
then the inference algorithm will compute some typing for that term.  The completeness of inference
tells us that the computed type will be at least as general as the original types.

\subsection{Conservativity of Typing}\label{sec:conservativity}
\newcommand{\vdashSO}{\vdashS_{\raisebox{0px}{{\tiny\upshape\!\!\!OML}}}}

We have claimed that \lang is as expressive as functional languages without linearity.  To
formalize that claim, we will show that any expression typeable in OML, Jones's core calculus for
qualified types~\cite{Jones94}, is also typeable in \lang.

OML is a Core ML-like language with qualified types.  Its types and terms are pleasingly simple: the
former contains functions, type variables, and qualified and quantified types, and the latter
contains variables, applications, abstractions, and \texttt{let} (to introduce polymorphism).  We do
not give a full description of OML typing here, partly as it is so similar to \lang typing.  In
particular, as in \lang typing, OML has a syntax directed typing judgment
$P \mid \Gamma \vdashSO M : \tau$, where $P$ is a collection of predicates, $\Gamma$ an OML typing
environment and $\tau$ an OML type.

The crux of our argument is that (by construction) the syntax-directed typing rules of \lang can
each be seen as generalizations of the corresponding rules of OML.  For example, rules
\strule{\I\to} and \strule{\E\to} can introduce generalization over function types, a feature not
present in OML.  However, they need not do so; if all functions are unrestricted, \strule{\I\to} and
\strule{\E\to} are elaborate restatements of the corresponding rules of OML.  The remaining
difference is in the treatment of variables: \lang may insist on predicates to capture their
unrestricted use, where there are no corresponding predicates required by OML, but this will never
cause a term to be ill-typed.
\begin{theorem}\label{thm:conservative}
  If $P \mid \Gamma \vdashSO M : \tau$, then there is some $Q$ such that $Q \mid \Gamma \vdashS
  M : \tau$, and $Q \entails P$.
\end{theorem}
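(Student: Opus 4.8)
The plan is to read OML as the fragment of \lang in which every arrow is $\uto$ and in which there are no sums or constructors, so that OML monotypes, schemes, environments, and terms are literally \lang ones and the conclusion concerns the very same $\Gamma$ and $\tau$. I would then induct on the syntax-directed OML derivation, applying the matching \lang rule in each case with every function-type metavariable $\phi$ chosen to be $\uto$. Under this choice the obligation $P \entails \Fun\phi$ becomes $\Fun\uto$, which is entailed outright, while the relative-linearity and sharing obligations $P \vdash \cdot \moreunlimited \uto$ and $P \vdash \Junl\cdot$ that the \lang rules impose on captured, shared, and discarded assumptions all reduce to $\Unl$ predicates on the environment. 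Because OML has no constructors, the \trule{make} and \trule{break} rules never arise, so only the \strule{var}, \strule{\E\to}, \strule{\I\to}, and \strule{let} cases need treatment.

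Two auxiliary facts carry the bookkeeping. First, a predicate-weakening lemma: if $P \subseteq P'$ and $P \mid \Gamma \vdashS M : \tau$ then $P' \mid \Gamma \vdashS M : \tau$ (and similarly for $\vdash \Junl\cdot$ and $\vdash \cdot \moreunlimited \cdot$), since every premise of the syntax-directed system is monotone in its predicate context. This lets me combine a rule's subderivations by taking the union of the predicate sets the induction hypothesis supplies for them. Second, an unrestrictedness lemma for OML types: an OML monotype is either a bare variable $t$, in which case $\Unl t$ (hence $t \moreunlimited \uto$) follows from assuming $\Unl t$, or a function type $\upsilon \uto \upsilon'$, in which case $\Unl(\upsilon \uto \upsilon')$ and $\upsilon \uto \upsilon' \moreunlimited \uto$ hold unconditionally by the entailment rules. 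Thus every $\Junl\cdot$ and $\cdot \moreunlimited \uto$ obligation in the derivation is met as soon as $Q$ contains $\Unl t$ for each type variable $t$ appearing as the complete type of an environment entry. I would accordingly strengthen the induction hypothesis to also assert $Q \vdash \Junl\Gamma$; this invariant supplies exactly the $\Unl$ predicates the \strule{var}, \strule{\I\to}, and \strule{\E\to} side conditions demand, it holds vacuously at the leaves, and by the unrestrictedness lemma it is preserved by each syntactic form. With it, the \strule{var}, \strule{\E\to}, and \strule{\I\to} cases are routine: I instantiate $\phi$ to $\uto$, route the whole ambient environment through the rule's shared, captured, or unused slot, discharge the resulting $\Junl$ and $\moreunlimited\uto$ goals from the invariant, and set $Q$ to the union of the subderivation contributions together with the witnessing $\Unl$ predicates.

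The hard part will be the \strule{let} case, and it is the only one needing a genuinely new argument. The \strule{let} rule generalizes the type of $M$ \emph{together with} the predicates used to type it, so the scheme bound to $x$ in the \lang derivation is $\sigma_Q = Gen(\Gamma; Q_M \then \tau_M)$, which carries the auxiliary $\Unl$ predicates accumulated in $Q_M$ and is therefore strictly more constrained than the OML scheme $\sigma = Gen(\Gamma; P_M \then \tau_M)$ under which the induction hypothesis retypes $N$. To close this gap I would use the unrestrictedness lemma to observe that every predicate in $Q_M \setminus P_M$ is an $\Unl$ constraint whose free variables lie among the generalized variables $\vec t$; hence at each use of $x$ in $N$ the instantiation $[\vec\upsilon/\vec t]$, with each $\upsilon_i$ an OML type, turns such a constraint into $\Unl\upsilon_i$, which is entailed by $Q$ under the invariant $Q \vdash \Junl\Gamma$. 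A small scheme-strengthening lemma then packages this observation: if $Q \mid \Gamma, x:\sigma \vdashS N : \tau$ and $\sigma_Q$ differs from $\sigma$ only by additional $\Unl$ predicates over $\vec t$, then $Q \mid \Gamma, x:\sigma_Q \vdashS N : \tau$ whenever $Q \vdash \Junl\Gamma$, because each \strule{var} instance can re-discharge the extra predicates as above. Given this lemma the \strule{let} case proceeds like the others, and taking $Q$ to be the union of the contributions from $M$ and $N$ with the $\Unl$ predicates witnessing $\Junl\Gamma$ finishes the induction; since $Q$ only ever extends $P$ with further predicates, $Q \entails P$ follows immediately.
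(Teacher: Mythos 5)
Your proposal takes the same route as the paper's proof: induction over the syntax-directed OML derivation, instantiating every arrow metavariable $\phi$ to $\uto$ (so $P \entails \Fun{({\uto})}$ holds by the entailment axiom), discharging the $\Junl{\cdot}$ and ${\cdot} \moreunlimited {\uto}$ side conditions by enlarging $Q$ with $\Unl$ assumptions, and gluing subderivations together with predicate weakening---your first auxiliary fact is precisely the paper's \lemref{expand-context}, the only lemma its one-sentence proof invokes. Where you go beyond the paper is the \strule{let} case, and your instinct there is correct and worth keeping: the paper calls this case ``immediate,'' but the scheme \lang binds to $x$ is $Gen(\Gamma; Q_M \then \tau_M)$, which carries your auxiliary $\Unl$ predicates and is therefore a proper instance of (strictly less general than) the OML scheme $Gen(\Gamma; P_M \then \tau_M)$ under which \theIH types $N$; the paper's \lemref{more-general-context} only rewrites environments toward \emph{more} general schemes, so it cannot bridge this gap, and something like your scheme-strengthening lemma really is required.

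Three repairs are needed, none fatal. First, the paper's formulation of OML (unlike Jones's original) includes sums, so you also owe cases for \strule{\I\oplus$_i$} and \strule{\E\oplus}; they follow the application pattern, with the environment shared between the case branches discharged by the same $\Junl{\cdot}$ bookkeeping. Second, your dichotomy ``an OML monotype is a variable or an arrow'' is too coarse: OML types also include $\opt{\tau_1}{\tau_2}$, handled by recursion via $\Unl{\tau_1}, \Unl{\tau_2} \entails \Unl{(\opt{\tau_1}{\tau_2})}$, and applications $T\,\vec\tau$ of other type constructors, for which $\Unl{(T\,\vec\tau)}$ has no structural derivation and must itself be placed into $Q$ as an assumption. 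Third, in the \strule{let} case the predicates in $Q_M \setminus P_M$ need not have their variables among the generalized $\vec t$: they can mention variables free in $\Gamma$ (those are not generalized, survive instantiation unchanged, and are discharged from $Q$ directly), and conversely the instantiations forced at uses of $x$ in $N$ can involve variables occurring nowhere in $\Gamma$, so the invariant $Q \vdash \Junl\Gamma$ alone does not discharge them. All three are fixed at once by choosing $Q$ up front as $P$ together with $\Unl u$ for every type variable $u$---and $\Unl{(T\,\vec\tau)}$ for every constructor application---occurring anywhere in the given OML derivation; with that $Q$ your invariant and your scheme-strengthening lemma go through unchanged, and $Q \entails P$ is trivial because $Q \supseteq P$.
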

\noindent%
OML also has a sound and complete type inference algorithm, and principal types.  Thus, we see that
if OML type inference accepts a given term, then \lang type inference will also accept the term, and
in each case will compute its most general typing.  We might hope to show the converse as well;
however, we do not know of a non-linear core calculus that matches the exact features of \lang,
including both qualified types and data type-mediated first class polymorphism.

We do not suggest that terms are given the same types in each setting: for example, the function
$\lambda x. \lambda y.y$ is given the type $\forall t u. t \to u \to u$ in OML, whereas it would
given the type $\forall t u. \Un t \then t \to u \to u$ in \lang.  Similarly, sums must be taken as
primitive in \lang (as their elimination form shares its environment), whereas they can be encoded
in OML.  Finally, as demonstrated in our earlier discussion of monads~\secref{inaction}, \lang may
suggest more refined abstractions than are present in non-linear languages.  Nevertheless, this
result does show a strong connection between programming in \lang and programming in traditional
functional languages, one which is not shared by other combinations of linear and functional
programming.

\section{Semantics}\label{sec:semantics}
\newcommand{\LinV}{LinVals}

We motivated the discussion of linear type systems by considering examples like session types and
mutable arrays, in which we wanted to avoid duplicating or discarding values of linear types.  The
\lang type system, however, only restricts the use of assumptions, and says nothing about the use of
values directly.  Further, \lang differs from other substructural calculi in several ways, including
the use of overloading and the form of first-class polymorphism.  In this section, we demonstrate
that \lang assures that the use of values, not just of assumptions, is consistent with their typing.
To do so, we define a natural semantics for \lang terms, annotated with the values introduced and
eliminated in the course of evaluation.  We can then show that any values used non-linearly have
unrestricted type.  Our approach is strongly inspired by that used by Mazurak et
al. to prove a similar property of their \Fpop calculus~\cite{MazurakZZ10}.

We begin by defining a notion of values for \lang.  Intuitively, we might expect values to be
abstractions, sums of values, or constructors applied to values.  However, our intended safety
property requires that we distinguish different instances of syntactically-identical values.  The
top of \figref{semantics} gives an extended syntax of \lang, in which values are tagged with indices
from some index set $Ix$.  The semantics will tag new values with fresh indices, and we will then
rely on the indices to establish identity when showing safety.  Given a set of assumptions $P$, we
identify a subset of values, $\LinV_P$, as linear:
\[
  \LinV_P = \Set{V \in Value \mid \text{if $\vdash V : \tau$, then $P \nentails \Un \tau$}}.
\]
Our goal is to show that values in $\LinV_P$ are neither duplicated nor discarded during evaluation.
The contents of $\LinV_P$ depend on the signatures of the constructors.  For a simple example,
suppose that we have some $K$ with signature $(\exists u. u) \uto T$.  To show $P \entails \Un T$,
we would have to show that $P \entails \Un u$ (where $u \not\in ftv(P)$).  This is clearly
impossible, so $K\,V \in \LinV_P$ for any value $V$ and non-trivial $P$.  On the other hand
$\lambda x. x$ is not in $\LinV_P$, as it can be given unrestricted type.

\begin{figure}
\begin{syntax}
  & Ix \ni j,k  \\
  & Value \ni V,W & ::= & K^j \, V \mid \lami j x M \mid \inii j V \\
  & M,N & ::= & V \mid \dots
\end{syntax}
\hfil\rule{0.8\linewidth}{.3pt}
\vspace{-1em}
\begin{gather*}
\infbox{\irule{\text{$j$ fresh}};{\red{\lambda x.M}{\lami j x M}{\lami j x M}{\emptyset}}}
\isp
\infbox{\irule{\red{M}{V}{I}{E}}
              {\text{$j$ fresh}};
              {\red{K \, M}{K^j \, V}{I,K^j\,V}{E}}}
\\
\infbox{\irule{\red{M}{\lami j x {M'}}{I}{E}}
              {\red{N}{V}{I'}{E'}}
              {\red{[V/x]M'}{W}{I''}{E''}};
              {\red{M\,N}{W}{I,I',I''}{E,E',E'',\lami j x {M'}}}}
\\
\infbox{\irule{\red{M}{V}{I}{E}}
              {\text{$j$ fresh}};
              {\red{\ini{M}}{\inii{j}{V}}{I,\inii{j}{V}}{E}}}
\isp
\infbox{\irule{\red{M}{V}{I}{E}}
              {\red{[V/x]N}{W}{I'}{E'}};
              {\red{\mlet x M N}{W}{I,I'}{E,E'}}}
\\
\infbox{\irule{\red{M}{\inii{j}{V}}{I}{E}}
              {\red{[V/x_i]N_i}{W}{I'}{E'}};
              {\red{\case{M}{x_1}{N_1}{x_2}{N_2}}{W}{I,I'}{E,E',\inii{j}{v}}}}
\end{gather*}
\caption{Linearity-aware semantics for \lang.}\label{fig:semantics}
\end{figure}

The bottom of \figref{semantics} gives a natural semantics for \lang.  The evaluation relation
$\red M V I E$ denotes that $M$ evaluates to $V$; the annotations $I$ and $E$ are multisets of
values, $I$ capturing all values introduced during the evaluation and $E$ capturing all values
eliminated during the evaluation.  (We track values, rather than just indices, so that we can state
the type safety theorem below.)  Functions evaluate to themselves, but annotated with a fresh index.
The only value introduced is the function, and no values are eliminated.  The other introduction
forms are similar, but must account for the evaluation of their subexpressions.  We use
call-by-value evaluation; as observed by Mazurak et al.~\cite{MazurakZZ10}, call-by-name and
call-by-need evaluation may result in discarding linearly typed values during evaluation.  The
values introduced in evaluating an application are those introduced in evaluating each of its
subexpressions and in evaluating the substituted result of the application.  The values eliminated
are those eliminated in each hypothesis and the function itself.  The \texttt{let} and \texttt{case}
rules are similar.

We can now state our desired safety property.  Intuitively, if $\red M V I E$, we expect that each
linear value introduced during evaluation (that is, each $W \in I \cap \LinV_P$) will appear either
exactly once, either in $E$ or as a subexpression of the result $V$.  For any expression $M$, we
define $SExp(M)$ to be the subexpressions of $M$, defined in the predictable fashion,
$Exp(M) = \Bag M \cup SExp(M)$, and $Val(M) = Exp(M) \cap Value$.
\begin{theorem}[Type safety]\label{thm:safety}
  Let $M$ be a closed term such that $\vdash M: \forall t. P \then \tau$ and $\red M V I E$.
  \begin{enumerate}
  \item $P \mid \emptyset \vdash V : \tau$.
  \item Let $E' = E \cup Val(V)$, and let $D = I \setminus E'$ (the values discarded during
    evaluation) and $C = E' \setminus I$ (the values copied during evaluation). Then,
    $W \in D \cup C$ only if $W \not\in \LinV_P$.
  \end{enumerate}
\end{theorem}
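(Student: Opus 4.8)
The plan is to prove both parts by induction on the evaluation derivation $\red M V I E$, handling preservation (part 1) first, since the conservation argument (part 2) reuses the types it supplies. For part 1 I would first rearrange $\vdash M : \forall t. P \then \tau$ (the quantifier- and qualifier-introduction rules are permutable to the end of a derivation, exactly as in the syntax-directed development) to obtain $P \mid \emptyset \vdash M : \tau$, and then establish a value-substitution lemma: if $P \mid \Eta, x:\tau \vdash N : \upsilon$ and $\vdash V : \tau$, then $P \mid \Eta \vdash [V/x]N : \upsilon$. The multienvironment bookkeeping is handled by the structural rules \trule{\C} and \trule{\W}, which license duplicated or discarded occurrences of $x$ only when $P \vdash \Un\tau$; this is precisely what lets the corresponding copies of the closed $V$ be typed. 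Each evaluation rule then reduces to one use of this lemma together with inversion of the matching typing rule, with the \trule{make}/\trule{break} cases following Jones's FCP development augmented by the predicate context $Q$.

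For part 2 the statement as phrased (agreement of $I$ with $E' = E \cup Val(V)$ on linear values) is not directly inductive, because subderivations evaluate open terms into which values have already been substituted. I would therefore strengthen the hypothesis to a multiset conservation law: writing $I' = I \cup Val(M)$, every $W \in \LinV_P$ occurs the same number of times in $I'$ as in $E'$. Since a source term carries no indexed subvalues ($Val(M) = \emptyset$), this specializes to the theorem. The induction is then a multiset calculation per rule: each introduction rule ($\lambda$, $\mathsf{in}_i$, $K$) tags one fresh value counted once on each side (in $I$ and in $Val(V)$), while the remaining value occurrences are either carried unchanged through unevaluated subterms or supplied by the inductive hypotheses on evaluated subterms. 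The only interesting rule is application, where the eliminated function $\lami j x {M'}$ is conserved (introduced by evaluating $M$, eliminated by the beta step) and the body is evaluated as $[V/x]M'$.

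The crux is therefore the application case, and it rests on two supporting facts. First, a counting identity $Val([V/x]M') = Val(M') + n \cdot Val(V)$, where $n$ is the number of free occurrences of $x$ in $M'$; assembling the three inductive hypotheses then leaves exactly a surplus of $n \cdot Val(V)$ against a single $Val(V)$, so conservation holds iff $n \cdot Val(V)$ and $Val(V)$ agree on linear values. Second, to discharge this I need two lemmas: (i) an occurrence lemma stating that if the bound variable has linear type $\tau$ (that is, $P \nentails \Un\tau$) then $x$ is used exactly once, so $n = 1$ — this is where the confinement of \trule{\C} and \trule{\W} to unrestricted assumptions pays off, since neither contraction nor weakening can alter the multiplicity of a linear assumption; and (ii) a subvalue lemma stating that if $V \notin \LinV_P$ then every $W \in Val(V)$ is likewise non-linear, proved by induction on $V$ using the entailment rules for $\Un$ on sums and constructors. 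Together these cover both cases: if $V$ is linear then (i) forces $n = 1$ and the surplus vanishes; if $V$ is unrestricted then (ii) shows $Val(V)$ contains no linear values, so the multiplicity of the surplus is irrelevant.

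I expect the main obstacle to be the constructor clause of the subvalue lemma (ii). Because the $\Un$ entailment rule for a data type with signature $(\forall \seq t. \exists \seq u. Q \then \tau') \uto \tau$ establishes $\Un\tau$ only from $P, Q, \Un{\seq t} \entails \Un{\tau'}$ — treating the existentials $\seq u$ as free and the universals $\seq t$ as unrestricted — I must transport the unrestrictedness of the packed type $\tau$ to the actual instantiated contents $V : [\seq\upsilon/\seq u]\tau'$. This means substituting the witnesses $\seq\upsilon$ through the entailment and then discharging $[\seq\upsilon/\seq u]Q$ via the premise $P \entails [\seq\upsilon/\seq u]Q$ recorded by \trule{make}, which is exactly the asymmetric treatment of $\seq t$ and $\seq u$ the paper flags. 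Lining this step up with the definition of $\LinV_P$ is the delicate point, and is also what makes constructor values default to linear in the first place.
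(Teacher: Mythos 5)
Your overall architecture is sound and in several places more careful than the paper's own proof, which is a terse sketch: it inducts on the evaluation derivation, shows only the application case, and at the crucial moment appeals to ``the well-typing of $M$'' without elaboration. Your strengthening to a multiset conservation law with $Val(M)$ added on the introduced side is a genuine improvement (the paper never confronts the fact that subderivations evaluate terms already containing indexed values), and your occurrence lemma (i) is exactly the paper's stated ``key observation'' that duplication and discard arise only from substitution. The gap is your subvalue lemma (ii), and it is not where you expect it (constructors) but at abstractions. Take $K :: (\exists u.\, u) \uto T$ and the closed value $V_K = K^{j}(\inli{j'}{\lami{j''}{w}{w}})$, so $V_K \in \LinV_P$. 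The value $\lami{k}{x}{V_K}$ is \emph{not} in $\LinV_P$: it can be typed $\upsilon \uto T$ for any unrestricted $\upsilon$, because the premise $P \vdash \Eta \moreunlimited \phi$ of \trule{\I\to} constrains only the types of the \emph{free variables} of the abstraction, and the embedded $V_K$ is a closed subterm, not a free variable. So a non-linear value can contain a linear subvalue, and the step ``if $V$ is unrestricted then $Val(V)$ contains no linear values'' fails precisely in the case that matters for duplication through closures.

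What saves the theorem is not a property of values in isolation but an invariant of evaluating well-typed programs: a closure embedding a linear value can only be \emph{created} by evaluating an abstraction whose body mentions a linear free variable (substitution is how the value got in there), and for that abstraction the premise $P \vdash \Eta \moreunlimited \phi$ forces $\phi = {\lto}$ in the program's derivation. Hence in the derivation that your part (1) threads through evaluation, such a value is always assigned a linear type, is only ever bound to linear variables, and so by your lemma (i) is substituted exactly once. The repair is therefore to replace lemma (ii) by a derivation-relative invariant proved simultaneously with parts (1) and (2): whenever the \emph{preserved derivation} assigns a value an unrestricted type, every indexed value occurring inside it is itself assigned an unrestricted type by that derivation, and hence lies outside $\LinV_P$. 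This invariant needs its own induction over evaluation---using the $\Eta \moreunlimited \phi$ premise in the abstraction case and the entailment $P, Q, \Un{\seq t} \entails \Un{\tau'}$ in the constructor case---and cannot be recovered from typeability of the value alone; correspondingly, your case split in the application case should be driven by the type the derivation gives the bound variable $x$, not by membership of $V$ in $\LinV_P$. The paper's sketch is thin at exactly this point, but by never committing to a standalone subvalue lemma it avoids asserting anything false; your proposal, as written, does.
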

\noindent%
The proof is by induction over the structure of $M$.  The key observation is that duplication and
discarding can happen only as the result of substitution, and thus that linearity of variables
(i.e., assumptions) is enough to assure that only unrestricted values are duplicated.  We believe
this argument can be straightforwardly generalized to small-step semantics, again following Mazurak
et al.~\cite{MazurakZZ10}

\section{Extensions}\label{sec:extensions}
\newcommand{\Dup}[1]{\predh{\mathtt{Dup}}{#1}}
\newcommand{\Drop}[1]{\predh{\mathtt{Drop}}{#1}}

We describe three extensions of \lang, showing the generality and flexibility of our approach.

\lang has a linear type system, in which both contraction (duplication) and weakening (discard) are
limited to unrestricted types.  Several alternative substructural logics exist: relevant logics, for
example, exclude weakening but not contraction, and affine logics exclude contraction but not
weakening.  Some systems, such as that of Ahmed et al.~\cite{AhmedFM05} and Gan et
al.~\cite{GanTM14} provide linear, affine, relevant, and unrestricted types simultaneously.
Finally, there have been several type systems that introduce similar partitioning of assumptions to
control side-effects, starting from Reynolds' work on Idealized Algol~\cite{Reynolds78} and
continuing with modern work on bunched implication~\cite{OHearnP99} and separation logic.  We have
focused on the linear case in particular because various examples, such as session types, require
its restrictions on both contraction and weakening.  Nevertheless, we believe the \lang approach
would apply equally well in these other cases.  For example, the type system we have given has a
single predicate, $\Un \tau$, used both when assumptions are duplicated and when they are discarded.
Alternatively, we could introduce distinct predicates for these cases, say $\Dup\tau$ and
$\Drop\tau$.  We could then redefine our existing predicate $\Un\tau$ as the conjunction of
$\Dup\tau$ and $\Drop\tau$.  As in the systems of Ahmed at al. and Gan et al., we would require four
arrow types.  However, the remainder of the \lang approach would adapt seamlessly.  We could extend
the $\Fun\tau$ predicate and the $\geq$ relation to include the new arrow types, and the resulting
system would continue to enjoy principal types and complete type inference.

The treatment of functions differs from the other primitive types (like products and sums) because
the linearity of a function from $\tau$ to $\upsilon$ cannot be determined from linearity of $\tau$
and $\upsilon$.  A similar observation can be made of existential types.  For example, suppose that
we have two constructors with signatures $K_1 :: (\exists u. u) \uto T_1 \, t$ and
$K_2 :: (\exists u. \Un u \then u) \uto T_2 \, t$.  Assumptions of type $T_1\,\tau$ will always be
treated as linear, and assumptions of type $T_2\,\tau$ always unrestricted, regardless of the choice
of $\tau$.  We could view $T_1$ and $T_2$ as instances of a general $T$ pattern (i.e., as the
satisfying instances of a predicate $\predh{T}{t}$) just as we view $\uto$ and $\lto$ as instances
of a general $\to$ pattern.  Following the approach taken for functions, we would extend the $\geq$
relation to include the $T$ types, asserting that $\cdot \entails \tau \geq T_1\,\upsilon$ and
$\Un \tau \entails \tau \geq T_2\,\upsilon$.  We would then introduce a generalized constructor
$K :: (\predh{T}{t}, u \geq t) \then u \uto t$, and a generalized deconstructor
$unK :: (\predh{T}{t}, t \geq f) \then t \to (\forall u. u \to r) \labto f r$.  As the goal is
generalizing over the predicate $\Un u$, the body of the deconstructor cannot rely on its presence.
This example demonstrates the flexibility of \lang; in particular, it shows that our treatment of
functions is an instance of a more general pattern, itself expressible in \lang.  We suspect that
this treatment of existentials would also come closest to capturing how the use of existentials in
non-linear functional languages could be expressed in linear calculi.  On the other hand, there are
cases for which this approach would not be appropriate, such as the use of existentials to enforce
linear use of unrestricted primitives.  We believe that more practical experience would be require
to determine how, and how often, this generalization of existentials should be applied.

We have treated the \texttt{dup} and \texttt{drop} methods as providing a helpful intuition for the
use of the \texttt{Un} predicate, but have assumed that their explicit use or implementation is not
of interest.  However, there are cases in which providing non-trivial implementations of these
methods could be useful. For example, many operating system resources, such as file handles, need to
be explicitly freed.  One could imagine capturing such resources as affine types in a language based
on \lang, in which the \texttt{drop} method freed the underlying resource.  Similarly, given
suitable primitives, one could imagine using \texttt{drop} and \texttt{dup} to implement a kind of
reference-counting scheme for resources, in which \texttt{dup} incremented the reference count and
\texttt{drop} decremented it.  This approach would generalize the various scope based mechanisms for
managing such resources in languages such as C\# and Java.  The derived definitions of \texttt{drop}
and \texttt{dup} for products and sums~\secref{class} would extend to this setting as well.
However, this would introduce a new concern: the placement of automatic inserted calls to the
\texttt{drop} and \texttt{dup} methods.  For a simple example, imagine that some variable $y$ is in
scope in the expression $\lambda x.M$, but not free in $M$.  We must insert a call to
$\mathtt{drop}\,y$.  Our current syntax-directed approach could be interpreted as moving calls to
\texttt{drop} to the leaves of the typing derivation, but in this case that would delay the discard
of $y$ until the function $\lambda x.M$ is invoked, which might be undesirable.


\section{Related Work}\label{sec:related}

The past thirty years have seen a wealth of work on linear types and their applications.  We
summarize some of the work most directly related to our own.

In introducing substructural type systems~\secref{other-linear}, we described several other general
purpose calculi, including \Fpop of Mazurak et al.~\cite{MazurakZZ10}, and Alms of Tov and
Pucella~\cite{TovP11}.  These systems were both influential on the development of \lang.  Our work
differs from theirs in two regards.  First, we have generalized the treatment of functions, and thus
increased the expressiveness of function combinators.  We believe that, especially given the
importance of combinator-based idioms in functional programming, this is a significant advance in
the usability of linear functional calculi.  Second, our treatment relies on qualified types, rather
than building notions of subkinding, subtyping, and variance into the type system itself.  While
this may seem to simply be trading one kind of complexity for another, we believe that qualified
types are an independently useful language feature (a claim borne out by the experience of Haskell).
Finally, we believe that qualified types are a natural way to express relationships among types, as
demonstrated by our generalization of relative linearity to encompass existential types.

\Fpop, Alms, and \lang all rely on identifying a collection of types as unrestricted (through kind
mechanisms in the first cases and type predicates in ours).  There are several other mechanisms to
integrate linear and unrestricted types.  Wadler~\cite{Wadler93} and Barber and
Plotkin~\cite{BarberP96} give calculi based directly on the propositions and proofs of linear logic,
in which each linear type $\tau$ has an intuitionistic counterpart $!\tau$.  These calculi also draw
a distinction between intuitionistic and linear assumptions, where only the former are subject to
contraction and weakening.  While these calculi have close logical connections, the manipulation of
the $!$ modality adds significant syntactic bureaucracy, and does not provide an obvious route to
generalizing linear and unrestricted behavior.  Walker~\cite{Walker04} and Ahmed et
al.~\cite{AhmedFM05} present systems of annotations for linearity (albeit without polymorphism).
These approaches seem less well suited for programming with linear types, however.  For example,
they provide linear Booleans (of little expressive value, as the duplication and discarding
operations for Booleans can be easily defined) and unlimited session-typed channels (presumably an
empty type).  Finally, they require all types to be annotated with linearity (or usage) annotations,
which is acceptable in a core language but unsuited to languages used by humans. Clean adopts a
similar annotation-based approach in its uniqueness typing system~\cite{SmetsersBEP93}.  However,
the aims of uniqueness typing and linearity are dual: in Clean, unique values can become non-unique
(at the cost of some of their operations), while in a linear type system we must guarantee linearity
(but can use unrestricted values linearly).

Gustavsson and Svenningsson~\cite{GustavssonS00} describe a system of usage annotations and bounded
usage polymorphism; Hage et al~\cite{HageHM07} describe an alternative approach to usage inference
based on effect typing and subeffects.  These approaches differ from linear type systems in two
ways.  First, they treat usage separately from types; as we argued in the last paragraph, this
produces confusing or empty types, like linear Booleans or unrestricted channels.  Second, linearity
is prescriptive, while usage types are descriptive.  This means that usage can be approximated,
where approximations of linearity would either lose safety or expressiveness.  Consequently, usage
analyses can be invisible to the programmer, whereas linearity must (to some degree) be
programmer-visible.  Nevertheless, our $\leq$ predicate seems similar to approaches to
subeffecting for usage and strictness~\cite{HoldermansH10}, and we believe that investigating this
similarity is valuable future work.

Finally, there have been numerous substructural approaches to typing for imperative and low-level
languages, including region types~\cite{WalkerCM00}, alias types~\cite{SmithWM00}, adoption and
focus~\cite{FahndrichD02}, and linear types for locations~\cite{MorrisettAF05}, and several
generalizations of linear typing, including coeffect systems~\cite{PetricekOM14}.  These approaches
have similar goals to our work---establishing safety guarantees beyond those expressed in
traditional type systems---but differ in their underlying calculi and do not share our focus on
principality and type inference.  Nevertheless, some of the ideas of these systems could be
profitably applied in ours.  For example, some adaptation of the adoption and focus mechanisms could
avoid the rebinding present in cases such as our dyadic session types example. We think exploring
the overlap of our system and the problems they address, such as exploring explicit memory
management in a \lang-like language, will be important future work.

\section{Future Work}\label{sec:future}

We have presented \lang, a new linear functional language achieving both the safety guarantees made
possible by linear types and the expressiveness of conventional functional programming languages.
We have demonstrated several examples of linear and functional programming in \lang.  We have shown
that \lang has principal types and decidable type inference, that it is a conservative extension of
existing functional calculi, and that reduction preserves linearity.  We have also shown several
simple extension of the core \lang calculus, incorporating more flexible treatment of existentials,
and other notions of substructural typing.  We conclude by discussing several directions for future
work.

We intend \lang to provide a foundation for practical functional programming with linear types.
This can be tested in two ways.  First, we intend to explore abstractions for linear programming.
We hope to draw on existing mechanisms, such as adoption and focus~\cite{FahndrichD02} in the
imperative setting and parameterized monads~\cite{Atkey09} in the functional setting, while taking
advantage of \lang's first-class treatment of linearity to express these mechanisms within the
language.  Second, we hope to build larger programs in \lang, taking advantage of linearity to
enhance safety properties; domains like concurrency and low-level programming seem particularly
suited to such an approach.

\lang distinguishes between linear and unrestricted functions for type safety reasons, while the
(high-level) semantics we give treats abstractions identically.  Similar distinctions are drawn by
usage type systems, for efficiency reasons.  We believe that similar efficiency gains could be
obtained in compiling \lang programs.  As our treatment of functions is general, similar approaches
could be applied to other types for efficiency reasons as well.  For example, rather than requiring
that arrays be treated linearly, we could overload the array operations to apply to both linear and
unrestricted arrays, but to use efficient in-place operations when arrays were used linearly.

The central technical problem addressed by \lang is the multiplication of function spaces.  Linear
type systems are not the only context in which this can occur.  Similar multiplications happen, for
example, in type and effect systems or in systems that distinguish pointed and unpointed
types~\cite{LaunchburyP96}.  We believe that the approach taken in \lang would generalize to such
cases as well; in particular, we believe that tracking pointedness could be relevant in many of the
same application domains in which linearity is relevant.

\section*{Acknowledgments}

I thank James Cheney for feedback on drafts of this paper, and James McKinna, Sam Lindley, and my
other colleagues for helpful discussions. This work was funded by EPSRC grant number EP/K034413/1.

\bibliographystyle{abbrvnat}
\bibliography{subclass}{}

\flushcolsend


\clearpage
\appendix

\section{Packaging Unrestricted Channels}\label{sec:packaging-channels}

We might want to express session typing by wrapping an underlying unrestricted implementation of
untyped channels (which we will call \texttt{Chan}, patterned on the Haskell \texttt{Chan} type).
Previous work has demonstrated the use of existential types in doing this kind of wrapping.  We have
two problems:
\begin{enumerate}
\item How to capture the linearity of sessions, while still allowing $\gvend$ channels to be
  unrestricted; and,
\item How to capture the types of sent and received values.
\end{enumerate}
Ideally, we would like a solution that accomplishes both using existential types, avoiding the need
for any waffle about the module system.  Solving the first is actually relatively easy, and just
relies on existing classes and simple existential types:
\begin{code}
instance Un End
data Ch s = c >= s =>
            PackCh c (Dynamic -> c -> M c)
                     (c -> M (Dynamic, c))

makeChannel :: Chan Dynamic -> Ch s
makeChannel c =
  PackCh c (\v c -> do writeChan c v; return c)
           (\c -> do v <- readChan c; return (v, c))

send :: t >= f => t -> Ch (t :!: s) -f> M (Ch s)
send v (Ch c sender receiver) =
  do c <- sender (toDyn v) c
     return (Ch c sender receiver)

receive :: Ch (t :?: s) -> M (t, Ch s)
receive (Ch c sender receiver) =
  do (v, c) <- receiver c
     return (fromDyn undefined v, Ch c sender receiver)
\end{code}
However, while this enforces linearity, it does not guarantee session typing.  In particular, code
with access to the \texttt{Ch} type may send a \texttt{Dynamic} value containing the wrong type,
causing the corresponding \texttt{fromDyn} to fail.  We can do better if we assume a notion of type
equality (at the cost, of course, of significant additional complexity in the type system).  We
start with type equality, which we can define using functional dependencies:
\begin{code}
class t ~~ u | t -> u, u -> t
instance t ~~ t
\end{code}
We can then define the channel type as follows.
\begin{code}
instance Un End
data PChan s = PChan (Chan Dynamic)
data Ch s = c s >= s =>
            PackCh (c s)
                   ((t >= f, s ~~ (t :!: s')) =>
                    t -> c s -> M (c s'))
                   ((s ~~ (t :?: s')) =>
                    c s -> M (t, c s'))
\end{code}
The type \texttt{PChan} wraps an unrestricted channel with a phantom type variable.  The type
\texttt{Ch} follows the same pattern as before, but now encoding the form of the \texttt{send} and
\texttt{receive} functions in the packaged sender and receiver.  Consequently, this version depends
on both first-class existentials and universals.  Correspondingly, the implementations move the
introduction and elimination of the \texttt{Dynamic} type into the packaged functions, but are
otherwise unchanged.
\begin{code}
makeChannel :: Chan Dynamic -> Ch s
makeChannel c =
  PackCh (PChan c)
         (\v (PChan c) -> do writeChan c (toDyn v)
                             return (PChan c))
         (\(PChan c) -> do v <- readChan c
                           return (fromDyn undefined v,
                                   PChan c))

send :: t >= f => t -> Ch (t :!: s) -f> M (Ch s)
send v (Ch c sender receiver) =
  do c <- sender v c
     return (Ch c sender receiver)

receive :: Ch (t :?: s) -> M (t, Ch s)
receive (Ch c sender receiver) =
  do (v, c) <- receiver c
     return (v, Ch c sender receiver)
\end{code}

\section{Encoding Products in \lang}
\newcommand{\enc}[1]{\mathcal{E}(#1)}

We define an encoding $\enc{-}$ from an extension of Quill with multiplicative and additive products
to Quill without products.  To review, the terms and typing rules for the additive product are as
follows.
\[
\infbox{\irule{\Gamma \vdash M : \tau}
              {\Gamma \vdash N : \upsilon};
              {\Gamma \vdash [M, N] : \tau \with \upsilon}}
\quad
\infbox{\irule{\Gamma \vdash M : \tau \with \upsilon};
              {\Gamma \vdash \key{fst}\, M : \tau}}
\quad
\infbox{\irule{\Gamma \vdash M : \tau \with \upsilon};
              {\Gamma \vdash \key{snd} \, N : \upsilon}}
\]
The terms and typing rules for the multiplicative product are as follows.
\begin{gather*}
\infbox{\irule{\Gamma \vdash M: \tau}
              {\Gamma' \vdash N: \upsilon};
              {\Gamma, \Gamma' \vdash (M, N) : \tau \otimes \upsilon}}
\\
\infbox{\irule{\Gamma \vdash M : \tau \otimes \tau'}
              {\Gamma', x : \tau, y : \tau' \vdash N : \upsilon};
              {\Gamma \vdash \mlet{(x,y)}{M}{N} : \upsilon}}
\end{gather*}

The encoding of multiplicative products is simply the typical Church encoding of products.  We
assume the following constructor to capture the use of universal types:
\[
  MP :: \forall t u. (\forall v f. (t \moreunlimited f, u \moreunlimited f) \then (t \to u \to v) \labto{f} v) \uto MP \, t \, u.
\]
We can then define the encoding of the multiplicative product and its terms.
\begin{align*}
  \enc{\tau\otimes\upsilon} &= MP \, \enc\tau \, \enc\upsilon \\
  \enc{(M, N)} &= MP\,(\lambda f. \, f\,\enc M\,\enc N) \\
  \enc{\mlet{(x,y)} M N} &= \begin{array}[t]{@{}l@{}}
                              \key{let}\;MP\,f = \enc{M} \\
                              \key{in}\; f\, (\lambda x. \lambda y. \enc N)
                            \end{array}
\end{align*}

The encoding of additive products has a similar flavor, but must be defined in terms of (additive)
sums.  Again, we assume a constructor for a suitable universal type:
\begin{multline*}
  AP :: \forall t u. (\forall v f. (t \moreunlimited f, u \moreunlimited f) \then \\ ((t \labto{f} v) \oplus (u \labto{f} v) \labto{f} v)) \uto AP \, t \, u.
\end{multline*}
We can then encode the additive product and its terms as follows.
\begin{align*}
  \enc{\tau \with \upsilon} &= AP\,\enc{\tau}\,\enc{\upsilon} \\
  \enc{[M, N]} &= AP (\lambda l.\, \key{case}\;l\;\key{of}\;\{\;
                  \begin{array}[t]{@{}l@{}}
                    \inl{f} \mapsto f\,\enc{M}; \\
                    \inr{f} \mapsto f\,\enc{N} \}
                  \end{array} \\
  \enc{\key{fst}\, M} &= \mlet{AP\,f}{\enc{M}}{\inl{id}} \\
  \enc{\key{snd}\, M} &= \mlet{AP\,f}{\enc{M}}{\inr{id}}
\end{align*}


\section{Proofs}\label{sec:proofs}
\newenvironment{fake}[1]{\par\vspace{3pt}\noindent\textbf{#1}\itshape}{\normalfont\ignorespacesafterend\vspace{3pt}\par}

\subsection{Syntax-Directed Type System}

\begin{fake}{Theorem~\ref{thm:soundness-s} \textnormal{(Soundness of $\vdashS$)}.}
  If $P \mid \Gamma \vdashS M: \tau$ and $P \vdash \Eta \approx \Gamma$, then $P \mid \Eta \vdash M:
  \tau$.
\end{fake}

\begin{proof}
By structural induction on the derivation of $P \mid \Gamma \vdashS M: \tau$.
\begin{itemize}
\item \textit{Case \strule{var}.} We have a derivation of $P \mid x : \sigma \vdash x : \sigma$ by
  \trule{var}.  We construct the necessary derivation in three further steps.  First, as $ Q \then
  \tau \lessgeneral \sigma$, we can construct a derivation of $P \mid x:\sigma \vdash x: Q \then
  \tau$ by repeated applications of \trule{$\E\forall$}.  Second, as $P \entails Q$, we can
  construct a derivation of $P \mid x:\sigma \vdash x : \tau$ by repeated application of
  \trule{\E\then}.  Finally, as $P \vdash \Junl\Gamma$, we can construct a derivation of $P \mid
  \Gamma,x:\sigma \vdash x : \tau$ by using \trule{\W} for each binding in $\Gamma$.
\item \textit{Case \strule{\I\to}.} By \theIH, we have a derivation of $P \mid \Eta,x:\tau
  \vdash M: \upsilon$.  Apply \trule{\I\to}, and reusing the derivations of $\Fun \phi$ and $\phi
  \lessgeneral \Gamma$, we construct a derivation of $P \mid \Gamma \vdash \lambda x.M: \phi \tau
  \upsilon$.
\item \textit{Case \strule{\E\to}.} By \theIH, we have derivations of $P \mid \Gamma,\Delta \vdash
  M: \tau \to \upsilon$ and $P \mid \Gamma,\Delta' \vdash N: \tau$.  Applying \trule{\E\to},
  reusing the derivation of $P \vdash \Fun \phi$, we construct a derivation of $P \mid
  \Gamma,\Gamma,\Delta,\Delta' \vdash M\,N : \upsilon$.  Finally, as $P \vdash \Junl\Gamma$, we can
  apply \trule{\C} for each binding in $\Gamma$, constructing a derivation of $P \mid
  \Gamma,\Delta,\Delta' \vdash M \, N : \upsilon$.
\item \textit{Case \strule{\I\oplus$_i$}} is direct from \theIH.
\item \textit{Case \strule{\E\oplus}.} From \theIH we have a derivation of $P \mid
  \Gamma,\Gamma',\Delta, \Delta \vdash \case{M}{x}{N}{x}{N'} : \upsilon$.  We can then repeated apply
  rule \trule{Ctr}, justified by the assumption $P \vdash \Junl\Delta$, to derive $P \mid
  \Gamma,\Gamma',\Delta \vdash \case{M}{x}{N}{x}{N'} : \upsilon$.
\item \textit{Cases \strule{make} and \strule{break}} follow immediately from \theIH.
\item \textit{Case \strule{let}.} By \theIH, we have a derivation
  $Q \mid \Gamma,\Delta,x:\tau \vdash M: \tau$.  We can construct a derivation of
  $\emptyset \mid \Gamma,\Delta,x:\sigma \vdash M: \sigma$ by application of \trule{\E\forall} and
  \trule{\E\then} at each use of variable $x$ and application of \trule{\I\forall} and
  \trule{\I\then} at the conclusion of the derivation.  We also have a derivation of
  $P \mid \Gamma,\Delta',x:\sigma \vdash N: \upsilon$ by \theIH.  Applying \trule{let} gives a
  derivation of $P \mid \Gamma,\Gamma,\Delta,\Delta' \vdash \mlet{x}{M}{N} : \upsilon$.  Finally, as
  $P \vdash \Junl\Gamma$, we can apply \trule{\C} repeatedly to derive
  $P \mid \Gamma,\Delta,\Delta' \vdash \mlet{x}{M}{N}: \upsilon$.  \qedhere
\end{itemize}
\end{proof}

\begin{fake}{Theorem \ref{thm:completeness-s} \textnormal{(Completeness of $\vdashS$)}.}
  If $P \mid \Eta \vdash M: \sigma$ and $P \vdash \Eta \approx \Gamma$, then there are some $Q$ and
  $\tau$ such that $Q \mid \Gamma \vdashS M: \tau$ and
  $(P \mid \sigma) \lessgeneral Gen(\Gamma,Q \then \tau)$,
\end{fake}

We begin with helpful intermediate results.

\begin{lemma}\label{thm:more-specific}
  If $P \mid \Gamma \vdashS M: \tau$, then, letting $\sigma = Gen(\Gamma, P \then \tau)$, for any
  $P' \then \tau' \lessgeneral \sigma$, $P' \mid \Gamma \vdashS M: \tau'$.
\end{lemma}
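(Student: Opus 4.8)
The plan is to derive the lemma from two standard structural properties of the syntax-directed judgment: closure under type substitution, and admissibility of strengthening the predicate context along entailment. Writing $\seq t = ftv(P \then \tau) \setminus ftv(\Gamma)$, so that $\sigma = \forall \seq t. P \then \tau$, I first unpack the hypothesis $P' \then \tau' \lessgeneral \sigma$: by the definition of instantiation there is a sequence $\seq\upsilon$ with $\tau' = [\seq\upsilon/\seq t]\tau$ and $P' \entails [\seq\upsilon/\seq t]P$. The goal is then to transform the given derivation of $P \mid \Gamma \vdashS M: \tau$ into one of $P' \mid \Gamma \vdashS M: \tau'$ in two moves.

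First I would apply a substitution lemma: if $Q \mid \Delta \vdashS M: \upsilon$ then $S\,Q \mid S\,\Delta \vdashS M: S\,\upsilon$ for any type substitution $S$. The crucial point is that $\seq t$ is disjoint from $ftv(\Gamma)$ by construction of $Gen$, so $[\seq\upsilon/\seq t]\Gamma = \Gamma$; instantiating the derivation with $S = [\seq\upsilon/\seq t]$ therefore yields $[\seq\upsilon/\seq t]P \mid \Gamma \vdashS M: [\seq\upsilon/\seq t]\tau$, and the result type is exactly $\tau'$. Second I would apply a weakening lemma: if $Q \mid \Delta \vdashS M: \upsilon$ and $Q' \entails Q$ then $Q' \mid \Delta \vdashS M: \upsilon$. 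Since $P' \entails [\seq\upsilon/\seq t]P$, this replaces the context by $P'$ and delivers $P' \mid \Gamma \vdashS M: \tau'$, as required.

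Both auxiliary lemmas are routine inductions on the typing derivation, and are where essentially all of the work lies. For the substitution lemma one must also check that the auxiliary judgments $P \vdash \Junl{\cdot}$ and $P \vdash \cdot \moreunlimited \phi$, together with entailment $\entails$, are closed under substitution, which follows by inspecting the rules of \figref{entail} and the lifting rules; and one must verify that generalization in \strule{let} commutes with substitution in the standard way. For the weakening lemma the point is that the predicate context enters each rule of \figref{styping} only positively, through entailment premises (instantiation in \strule{var}, $\Fun\phi$ and $P \vdash \Gamma \moreunlimited \phi$ in \strule{\I\to}, and so on); transitivity of $\entails$ together with the corresponding monotonicity of $P \vdash \Junl{\cdot}$ and $P \vdash \cdot \moreunlimited \phi$ in their contexts then propagates a stronger assumption set through the whole derivation.

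The main obstacle is thus not the lemma itself, whose proof is the two-line composition above, but the bookkeeping in the substitution lemma: ensuring that the first-class-polymorphism side conditions, namely the freshness requirements $\seq t \notin ftv(P,\Gamma)$ and $\seq u \notin ftv(\dots)$ in \strule{make} and \strule{break}, are preserved under substitution. These are handled in the usual way by $\alpha$-renaming the quantified variables away from the range of $S$ before substituting. Since the predicate machinery and the FCP side conditions here follow Jones's OML and FCP almost verbatim, I expect these inductions to be entirely standard, and would state the two auxiliary lemmas explicitly before giving this short combined argument.
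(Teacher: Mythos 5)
Your proposal is correct and matches the paper's own argument: the paper likewise unpacks the instance relation to obtain $\vec\phi$ with $\tau' = [\vec\phi/\vec t]\tau$ and $P' \entails [\vec\phi/\vec t]P$, applies substitution and entailment strengthening to the given derivation, and concludes via the observation that the generalized variables $\vec t$ do not occur free in $\Gamma$, so the substitution leaves $\Gamma$ unchanged. The only difference is presentational: the paper compresses the substitution and context-strengthening steps into a single sentence, whereas you state the two auxiliary lemmas explicitly (as the paper itself does later, for the soundness of inference), which is if anything cleaner.
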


\begin{proof}
  Let $\sigma = \forall \vec t. Q \then \upsilon$.  By definition, there are some $\vec\phi$ such
  that $\tau' = [\vec\phi / \vec t]\upsilon$ and $P' \entails [\vec\phi / \vec t]Q.$ Therefore, we
  have $P' \mid [\vec\phi / \vec t]\Gamma \vdash M: \tau'$.  Finally, since the $\vec t$ are free in
  $\Gamma$, we have that $P' \mid \Gamma \vdash M: \tau'$.
\end{proof}

Define $\Eta \lessgeneral \Eta'$ if $\dom(\Eta) = \dom(\Eta')$ and for each $x \in
\dom(\Eta)$, $\Eta(x) \lessgeneral \Eta'(x)$.  Define $\Delta \lessgeneral \Delta'$
similarly.

\begin{lemma}\label{thm:more-general-context}
  If $P \mid \Delta \vdashS M: \tau$, and $\Delta \lessgeneral \Delta'$, then $P \mid \Delta' \vdashS M:
  \tau.$
\end{lemma}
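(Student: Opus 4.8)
The plan is to induct on the derivation of $P \mid \Delta \vdashS M : \tau$, threading the generalization $\Delta \lessgeneral \Delta'$ through each rule. Because the syntax-directed rules partition the environment according to term structure---for instance the split $\Gamma,\Gamma',\Delta$ in \strule{\E\to}, or $\Gamma,\Without{\Gamma'}x,\Delta$ in \strule{let}---in each inductive case I would first decompose $\Delta'$ to mirror exactly the partition used in the premises. Since $\lessgeneral$ on environments is defined pointwise and preserves domains, every block of the partition can be generalized independently; I would then apply the induction hypothesis to each premise under its correspondingly generalized sub-environment and reassemble the conclusion with the same rule.

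The base case \strule{var} carries the essential content. There the derivation reads $P \mid \Delta_0, x:\sigma \vdashS x:\tau$ with side conditions $(P \then \tau) \lessgeneral \sigma$ and $P \vdash \Junl{\Delta_0}$, and I must rebuild it from $x:\sigma'$ and unused part $\Delta_0'$, where $\sigma \lessgeneral \sigma'$ and $\Delta_0 \lessgeneral \Delta_0'$. For the variable that is actually used, transitivity of the instance ordering does the work: from $(P \then \tau) \lessgeneral \sigma$ and $\sigma \lessgeneral \sigma'$ I obtain $(P \then \tau) \lessgeneral \sigma'$, since a more general scheme admits every instance of a less general one.

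The main obstacle is the fate of the unused (weakened) and captured portions of the environment, whose side conditions are expressed through the auxiliary judgments $P \vdash \Junl{\cdot}$ (appearing in \strule{var}, \strule{\E\to}, \strule{\E\oplus}, \strule{let} and \trule{break}) and $P \vdash \cdot \moreunlimited \phi$ (in \strule{\I\to}). Closing the induction requires that these judgments survive replacing a scheme by a more general one: from $P \vdash \Junl{\Delta_0}$ and $\Delta_0 \lessgeneral \Delta_0'$ I need $P \vdash \Junl{\Delta_0'}$, and likewise for $\moreunlimited$. This is delicate precisely because both judgments interact transparently with instantiation---one can substitute through a derivation---whereas the generalization here runs the opposite way. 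My approach would be to reduce $P \vdash \Junl{\forall \vec t. Q \then \tau}$ to the entailment $P, \Unl{\vec t}, Q \entails \Unl{\tau}$ (and analogously for $\moreunlimited$), and then to argue by induction on the instance relation $\sigma \lessgeneral \sigma'$ that the $\Unl{\vec t}$ assumptions attached to the quantifiers of the more general scheme are exactly what is needed to recover the entailment for $\sigma'$ from the one available for $\sigma$. Checking that these $\Unl{\vec t}$ assumptions genuinely suffice in every case---the same mechanism that, as noted for \trule{un-$\sigma$}, keeps the empty list unrestricted until it is instantiated---is the crux, and is where I expect the real work to lie.
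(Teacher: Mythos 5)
Your proposal takes the same route as the paper's proof: structural induction on the syntax-directed derivation, with the essential content in the \strule{var} case, where transitivity of the instance ordering ($(P \then \tau) \lessgeneral \sigma$ and $\sigma \lessgeneral \sigma'$ give $(P \then \tau) \lessgeneral \sigma'$) rebuilds the judgment. That single observation is, in fact, the paper's entire stated proof.

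The difference is that you refuse to treat the auxiliary side conditions as free, and here your instincts are sharper than the paper's. The paper's proof says nothing about showing that $P \vdash \Junl{\Delta_0}$ (in \strule{var}, \strule{\E\to}, \strule{let}, \strule{\E\oplus}) and $P \vdash \Gamma \moreunlimited \phi$ (in \strule{\I\to}) survive replacing each scheme by a more general one; it implicitly takes this as routine. It is not: against the entailment rules of \figref{entail} as literally given, the preservation can fail. For instance, $\tau_0 \uto \tau_0 \lessgeneral \forall f. \Fun f \then f\,\tau_0\,\tau_0$ (instantiate $f$ to ${\uto}$), and $P \vdash \Junl{(\tau_0 \uto \tau_0)}$ holds for any $P$; but $P \vdash \Junl{(\forall f. \Fun f \then f\,\tau_0\,\tau_0)}$ unfolds, exactly as in your reduction, to the entailment $P, \Unl f, \Fun f \entails \Unl{(f\,\tau_0\,\tau_0)}$, which none of the listed rules derive --- they conclude $\Unl{}$ only for $\uto$-headed types, sums, data types, or by assumption, never for a variable-headed application. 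The entailment is semantically right ($\Fun f$ together with $\Unl f$ forces $f = {\uto}$, the same reasoning the paper uses to justify its improving substitutions), so the lemma is sound for any entailment relation closed under such reasoning; but that closure property is an additional hypothesis that neither your sketch nor the paper's one-line proof makes explicit. In short: same approach, correctly identified key step, and the crux you isolate is a genuine obligation that the paper's own proof silently elides.
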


\begin{proof}
  The proof is by induction on the derivation of $P \mid \Delta \vdash M: \tau$; the only
  interesting case is for \strule{var}, which depends on the observation that if $(P \then \tau)
  \lessgeneral \sigma$ and $\sigma \lessgeneral \sigma'$ then $(P \then \tau) \lessgeneral \sigma'$.
\end{proof}

\begin{lemma}\label{thm:weakening}
  If $P \mid \Eta \vdash M: \tau$, $y$ is not free in $M$, and $P \vdash \Junl\sigma$, then $P
  \mid \Eta, y:\sigma \vdash M: \tau$.
\end{lemma}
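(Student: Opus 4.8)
The plan is to derive the result by a single application of the weakening rule \trule{\W}, which is already a primitive rule of the original type system rather than a property to be established. Taking the extension multienvironment in \trule{\W} to be the singleton $\Bag{y:\sigma}$, its conclusion applied to the hypothesis $P \mid \Eta \vdash M: \tau$ is exactly $P \mid \Eta, y:\sigma \vdash M: \tau$, the goal. Consequently no induction on the derivation of $P \mid \Eta \vdash M: \tau$ is needed; the work reduces to discharging the side condition of \trule{\W}.

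That side condition is $P \vdash \Junl{\Bag{y:\sigma}}$, and I would establish it using rule \trule{un-$\Eta$}. Since $\Bag{y:\sigma}$ is a singleton, the premise of \trule{un-$\Eta$} (a conjunction ranging over all bindings of the multienvironment) collapses to the single obligation $P \vdash \Junl\sigma$, which is precisely the third hypothesis of the lemma. Thus \trule{un-$\Eta$} yields $P \vdash \Junl{\Bag{y:\sigma}}$ immediately, completing the premises of \trule{\W}.

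The only point requiring care, and the nearest thing to an obstacle, is the well-formedness of the extended environment $\Eta, y:\sigma$: multienvironments must be consistent, assigning a single type to each variable. Here the side hypothesis that $y$ is not free in $M$, together with the standing convention that binders introduce fresh names, guarantees that $y$ does not already occur in $\Eta$, so the consistency requirement is met and the extended judgment is meaningful. Given this, the lemma follows directly, and I would present it as a two-line derivation (one use of \trule{un-$\Eta$} feeding one use of \trule{\W}) rather than a structural argument.
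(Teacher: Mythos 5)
Your derivation is valid for the judgment exactly as printed: with $\vdash$ the original system and $\Eta$ a multienvironment, one application of \trule{\W} with $\Eta' = \Bag{y:\sigma}$, whose side condition follows from $P \vdash \Junl\sigma$ by \trule{un-$\Eta$}, gives the conclusion. But that reading makes the lemma pointless---as you yourself note, it is just the rule \trule{\W} restated---and it is not the reading the paper intends or uses. The paper's own proof proceeds by induction on the derivation, and its base case is \strule{var}, the \emph{syntax-directed} variable rule (environment $\Eta', x:\sigma'$ with $P \vdash \Junl{\Eta'}$); its closing remark about slotting $y:\sigma$ into either side of a context split likewise only makes sense in the syntax-directed system. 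In other words, despite the $\vdash$ in the statement, the lemma is really weakening for $\vdashS$, where there is no structural rule \trule{\W} at all---that is the entire point of the syntax-directed system---so weakening is a genuine property that must be established by induction, pushing the new binding into the leaves (\strule{var}, which tolerates unused unrestricted assumptions) and into one side, or the shared part, of each context split.

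The place the lemma is used confirms this. In the proof of \thmref{completeness-s}, case \trule{\W}, the induction hypothesis yields a syntax-directed derivation $Q \mid \Gamma \vdashS M : \tau$, and one must enlarge $\Gamma$ by the weakened, unrestricted bindings to match the environment approximating $\Eta,\Eta'$. Your one-step argument has nothing to apply there: you cannot cite \trule{\W} inside a $\vdashS$ derivation. So the gap is not a faulty step but a wrong target: you proved a statement that is true by fiat, whereas the needed content---and the paper's actual argument---is the inductive weakening property of the structural-rule-free system. A smaller observation pointing the same way: your proof never uses the hypothesis that $y$ is not free in $M$, whereas the paper's proof needs it in the \strule{var} case to conclude that $y$ is not $x$.
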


\begin{proof}
  By induction on the derivation of $P \mid \Eta \vdash M: \tau$.  In the \strule{Var} case, we know
  that $M$ is an expression $x$, $\Eta$ is $\Eta', x : \sigma'$ for some $\Eta'$ such that
  $P \vdash \Junl{\Eta'}$, and, as $y$ is not free in $M$, $y$ is not $x$.  Given that
  $P \vdash \Junl\sigma$, we have that $P \vdash \Junl{(\Eta',y:\sigma)}$, and we can construct a
  new derivation of $P \mid \Eta, y: \sigma \vdash x: \tau$ by \strule{Var}. The remaining cases are
  straightforward by \theIH; in those cases where the context is split, the binding $y:\sigma$ can
  be included in either context (or, in fact, both).
\end{proof}

\begin{proof}[Proof of \thmref{completeness-s}]
By induction on the derivation of $P \mid \Eta \vdash M: \sigma$.
\begin{itemize}
\item \textit{Case \trule{var}.} We have $\Eta = \{ x:\sigma \}$; let $\sigma = (\forall
  \vec{t}.Q \then \tau).$ Pick fresh type variables $\vec{u}$; we have ($\subv{u}{t}Q \then
      \subv{u}{t}\tau) \lessgeneral \sigma$, and so $P, \subv u t Q \mid \Eta \vdash
      x: \subv u t \tau$ by \strule{var}.  As the $\vec u$ are fresh,
      \begin{align*}
        \sigma' &= Gen(\Gamma,P,[\vec u/\vec t]Q \then [\vec u/\vec t]\tau) \\
        &= \forall \vec u. (P,\subv{u}{t}Q) \then \subv{u}{t}\tau
      \end{align*}
      and $(P \mid \sigma) \lessgeneral \sigma'$.
\item \textit{Case \trule{\C}.} Because $(x:\sigma)$ is the only newly duplicated binding in the
  subderivation, and $P \vdash \Junl\sigma$, \theIH gives the required derivation of $Q \mid
  \Gamma,x:\sigma \vdash M: \tau$ such that $(P \mid \sigma) \lessgeneral (\emptyset \mid
  Gen(\Gamma,x:\sigma; Q \then \tau)).$
\item \textit{Case \trule{\W}} follows from \lemref{weakening} and \theIH.
\item \textit{Case \trule{\I\to}.} By \theIH and \lemref{more-specific}, we have a derivation of
  $Q \mid \Gamma \vdashS M: \upsilon$.  It is immediately apparent that if
  $P \vdash \Eta \moreunlimited \phi$ then $P \vdash \Gamma \moreunlimited \phi$, and so we can
  construct the desired derivation by \strule{\I\to}.
\item \textit{Case \trule{\E\to}.} By \theIH and \lemref{more-specific}, we have dervations $P \mid
  \Gamma \vdashS M: \phi \tau \upsilon$ and $P \mid \Gamma' \vdash N: \tau$.  Finally, for any
  $x:\sigma$ in both $\Gamma$ and $\Gamma'$, we have that $P \vdash \Junl\sigma$, so we can suitably
  partition $\Gamma$ and $\Gamma'$ and apply \strule{\E\to}.
\item \textit{Cases \trule{\I\oplus$_i$}, \trule{\E\oplus}, \trule{make}, and \trule{break}} follow
  from similar arguments to those for \trule{\I\to} and \trule{\E\to}.
\item \textit{Case \trule{\I\then}.} From \theIH, we have $Q \mid \Gamma \vdashS
  M: \tau$ such that, letting $\sigma = Gen(\Gamma,Q \then \tau)$, $(\pi,P \mid \rho) \lessgeneral
  \sigma$.  As $(P \mid \pi \then \rho) \lessgeneral (\pi, P \mid \rho)$, we also have $(P
  \mid \pi \then \rho) \lessgeneral \sigma$.
\item \textit{Case \trule{\E\then}.} From \theIH, we have $Q \mid \Gamma \vdashS
  M: \tau$ such that, letting $\sigma = Gen(\Gamma,Q \then \tau)$, $(P \mid \pi \then \rho) \lessgeneral
  \sigma$.  Since $P \entails \pi$, we have $(P \mid \rho) \lessgeneral (P \mid \pi \then \rho)$, and so $(P
  \mid \rho) \lessgeneral \sigma$.
\item \textit{Case \trule{\I\forall}.} From \theIH, we have $Q \mid \Gamma
  \vdashS M: \tau$ such that, letting $\sigma' = Gen(\Gamma,Q \then \tau)$, $(P \mid \pi \then \rho)
  \lessgeneral \sigma'$.
\item \textit{Case \trule{\E\forall}.} From \theIH, we have $Q \mid \Gamma \vdashS M: \tau$ such
  that, letting $\sigma' = Gen(\Gamma,Q \then \tau)$, $(P \mid \pi \then \rho) \lessgeneral \sigma$.
  As $(P \mid [\tau/t]\sigma) \lessgeneral (P \mid \sigma)$, $(P \mid [\tau/t]\sigma) \lessgeneral \sigma'$.
\item \textit{Case \trule{let}.}  From \theIH, we have
  $Q \mid \Gamma,x:\forall \tau. \sigma \vdashS M: \tau$ such that, letting
  $\sigma' = Gen(\Gamma,Q \then \tau)$, $(P \mid \forall t. \sigma) \lessgeneral \sigma'$.  Thus, we
  conclude that $\Gamma,x:\sigma' \lessgeneral \Gamma,x:\forall t. \sigma$ and, applying
  \lemref{more-general-context}, \theIH, and \lemref{more-specific}, we have a derivation of
  $Q' \mid \Gamma,x:\sigma' \vdash N: \upsilon$.  Finally, we apply \strule{let} to conclude
  $Q,Q' \mid \Gamma \vdashS \mlet x M N: \upsilon$. \qedhere
\end{itemize}
\end{proof}

\subsection{Type Inference}

We begin with the soundness of the inference algorithm.

\begin{fake}{Theorem \ref{thm:soundness-m} \textnormal{(Soundness of $\M$)}.}
  If $\Mapp S X \Gamma M \tau = \Mres P {S'} \Sigma$, then
  $S'\,P \mid S'\,(\restrict \Gamma \Sigma) \vdash M: S'\,\tau$.
\end{fake}

\noindent
The unusual aspect of the proof is the introduction of improving substitutions during type
inference.  We must show that their introduction does not compromise the soundness of the
corresponding derivations.  We begin by giving a more formal characterization of the possible
improving substitutions for Quill constraints.  We will restrict our attention to simplified
constraints.

\begin{defn}
  A constraint is simple in (the type variable) $t$ if the constraint is of the form $\Unl t$,
  $\Fun t$, or $\tau \geq t$.  A constraint is simple if it is simple in some $t$.  An entailment
  $P \entails Q$ is non-trivial if the only uses of the assumption rule are for simple constraints.
\end{defn}

\noindent
Relying on simple constraints does not limit the expressiveness of the type system.

\begin{lemma}
  Suppose that there is a non-trivial entailment $P \entails Q$; then there is some $Q'$ such that
  $Q'$ is simple and $P \entails Q' \entails Q$.  We call $Q'$ the simplification of $Q$.
\end{lemma}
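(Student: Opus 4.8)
The plan is to strengthen the statement slightly and prove it by induction on the derivation of the non-trivial entailment $P \entails Q$ (recall from \figref{entail} the rules for the assumption, conjunction, the three axioms for $\Fun$/$\Unl$/$\moreunlimited$, the decomposition rules, and the two lifting rules for $\moreunlimited$). The strengthening is to require that the simplification be drawn from the assumptions: I will show that one can always take $Q'$ to be simple with $Q' \subseteq P$. This invariant does most of the work. It makes the left half $P \entails Q'$ immediate, since every $\pi \in Q' \subseteq P$ is simple (by non-triviality of the assumption uses) and is derivable by the assumption rule, so $P \entails Q'$ by conjunction; and, as explained below, it automatically discharges the freshness side-conditions. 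Non-triviality is inherited by every subderivation, so the induction hypothesis applies to each premise.

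The routine cases follow the expected pattern. If the derivation ends in the conjunction rule for $Q = \Set{\pi_1,\dots,\pi_n}$, I apply the induction hypothesis to each $P \entails \pi_i$, obtaining simple $Q'_i \subseteq P$ with $Q'_i \entails \pi_i$, and take $Q' = \bigcup_i Q'_i$; monotonicity of entailment (adding assumptions preserves derivability, an easy induction) gives $Q' \entails \pi_i$, hence $Q' \entails Q$. If it ends in the assumption rule, non-triviality forces the concluded $\pi$ to be simple, so $Q' = \Set{\pi}$ works. If it ends in an axiom (for $\Fun$ on $\lto,\uto$, for $\Unl{(\tau \uto \upsilon)}$, or for $\tau \moreunlimited (\upsilon \lto \upsilon')$), no assumption is consumed and $Q' = \emptyset$ suffices. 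The decomposition rules for $\Unl{(\opt{\tau_1}{\tau_2})}$ and for $\tau \moreunlimited (\upsilon \uto \upsilon')$ mirror the conjunction case: recurse on the premises, union the results, and re-apply the same rule. For the two lifting rules, the premise mentions a fresh $t$; the induction hypothesis yields $Q'$ with $t \notin ftv(Q')$ precisely because $Q' \subseteq P$ and $t$ is fresh for $P$, so the rule re-applies to $Q'$ unchanged.

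The one delicate case---and what I expect to be the \emph{main obstacle}---is the rule deriving $P \entails \Unl\tau$ from a constructor signature $K : (\forall \seq t. \exists \seq u. Q_K \then \tau') \uto \tau$, whose premise $P, Q_K, \Unl{\seq t} \entails \Unl{\tau'}$ lives under an enlarged context. Here the induction hypothesis produces a simple $Q''$ with $Q'' \subseteq P \cup Q_K \cup \Unl{\seq t}$ and $Q'' \entails \Unl{\tau'}$, but $Q''$ may draw on constraints from $Q_K$ or $\Unl{\seq t}$ rather than from $P$, so I cannot simply set $Q' = Q''$. I will recover the invariant by projecting onto $P$: set $Q' = Q'' \cap P$. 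Since $Q'' \setminus P \subseteq Q_K \cup \Unl{\seq t}$, we have $Q'' \subseteq Q' \cup Q_K \cup \Unl{\seq t}$, so monotonicity gives $Q', Q_K, \Unl{\seq t} \entails \Unl{\tau'}$, and re-applying the constructor rule yields $Q' \entails \Unl\tau$ with $Q' \subseteq P$ simple. The freshness of $\seq t$ and $\seq u$ for $P$ (hence for $Q'$) is what keeps the side-conditions of the rule intact under this projection. This projection step is exactly the reason the proof needs the $Q' \subseteq P$ strengthening rather than an ad hoc choice of $Q'$ in each case.
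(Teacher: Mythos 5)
Your proof is correct and takes essentially the same route as the paper's: a structural induction on the derivation of $P \entails Q$, rebuilding each final rule on top of the union of the premises' simplifications (the paper's lone illustrative case, $\Unl{(\opt{\tau_1}{\tau_2})}$, is exactly your decomposition case). The paper dismisses the remaining cases as straightforward, so your explicit strengthening $Q' \subseteq P$ --- which is precisely what rescues the constructor case, where the premise's enlarged context $P, Q_K, \Unl{\seq t}$ would otherwise break the induction --- is a careful elaboration of the same argument rather than a divergence from it.
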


\begin{proof}
  By induction on the derivation of $P \entails Q$; the cases are individually straightforward.  As
  an illustrative case, if $\Unl{(\tau_1 \oplus \tau_2)} \in Q$, then the simplification of
  $\Set{\Unl\tau_1,\Unl\tau_2}$ is a subset of $Q'$.
\end{proof}

\noindent
We can now define improving substitutions for simple constraints.

\begin{defn}
  Suppose that $Q$ is simple, and $X$ is some set of type variables.  We define a substitution $S$,
  called the improving substitution for $X$ in $Q$, as follows:
  \begin{itemize}
  \item If $\Unl t \in Q$, then $S\,t = {\uto}$; and,
  \item If $\Unl t \not\in Q$, then $S\,t = {\lto}$.
  \end{itemize}
\end{defn}

\begin{lemma}\label{thm:improvement}
  Suppose that $Q \mid \Gamma \vdashS M : \tau$, $X = ftv(Q) \setminus ftv(\Gamma,\tau)$,
  $Q' \then Q$ is a simplification of $Q$, and $S$ is an improving substitution for $X$ in $Q'$.
  Then $S\,Q \mid \Gamma \vdashS M : \tau$.
\end{lemma}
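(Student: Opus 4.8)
The plan is to derive the statement from a single structural fact---that the syntax-directed judgment $\vdashS$ (together with its auxiliary judgments) is closed under arbitrary well-kinded substitution---and then to observe that the improving substitution $S$ fixes $\Gamma$ and $\tau$. Concretely, I would first establish the substitution principle: if $P \mid \Gamma \vdashS M : \tau$ then $S'\,P \mid S'\,\Gamma \vdashS M : S'\,\tau$ for every well-kinded substitution $S'$. Instantiating this at the given derivation $Q \mid \Gamma \vdashS M : \tau$ with $S' = S$ yields $S\,Q \mid S\,\Gamma \vdashS M : S\,\tau$. Since $X = ftv(Q) \setminus ftv(\Gamma,\tau)$, no variable moved by $S$ occurs free in $\Gamma$ or $\tau$, and because the targets $\uto$ and $\lto$ are closed constructors and so introduce no new free variables, we have $S\,\Gamma = \Gamma$ and $S\,\tau = \tau$ exactly. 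The desired $S\,Q \mid \Gamma \vdashS M : \tau$ is then immediate.

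The substantive work is the substitution principle, which I would prove by induction on the typing derivation of \figref{styping}. Every rule there is schematic, so each step reduces to checking that its side conditions survive substitution: the instance relation $\lessgeneral$, the lifted judgments $\vdash \Junl{\cdot}$ and $\vdash \cdot \moreunlimited \phi$, and above all the entailment relation $\entails$. All of these ultimately rest on closure of entailment under substitution, which I would handle by a separate routine induction over the rules of \figref{entail}. The only case worth comment is $\mathsf{Fun}$: an assumption $\Fun t$ in $P$ is carried by $S$ to $\Fun{(S\,t)}$, which still lies in $S\,P$ and so remains available by the assumption rule, and when $S\,t$ is one of the concrete arrows $\uto,\lto$ the constraint is in any case discharged directly by the concrete $\mathsf{Fun}$ rule. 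Thus relabelling a function variable by a concrete arrow never destroys a $\mathsf{Fun}$ obligation, which is exactly why the improving substitution is permitted to target $\uto$ and $\lto$.

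Two points deserve care along the way. First, I must confirm that $S$ is well-kinded, i.e.\ that every variable it rewrites to $\uto$ or $\lto$ has kind $\star \to \star \to \star$; this holds because the variables of $X$ that $S$ touches are precisely those the derivation constrains by $\mathsf{Fun}$, and such variables are function-kinded by construction of the overloaded arrow rules. Second, although a ``wrong'' choice of arrow (say $\lto$ where $\Unl t$ is demanded) would leave an unsatisfiable predicate such as $\Unl{(\lto\,\upsilon\,\upsilon')}$ sitting in $S\,Q$, this does not threaten the present statement: a context is a set of \emph{assumptions}, and carrying an unsatisfiable assumption never invalidates a derivation. The $\uto$-versus-$\lto$ dichotomy in the definition of $S$ is therefore irrelevant to typeability here; it matters only for the companion property that improvement preserves the satisfiable instances of the type, which is not what this lemma asserts. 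The main obstacle is thus bookkeeping rather than conceptual: discharging the substitution principle uniformly across every rule and auxiliary judgment, with the $\mathsf{Fun}$ entailment case as the one genuinely linearity-specific wrinkle.
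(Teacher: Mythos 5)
Your proof is correct, but it takes a genuinely different route from the paper's. The paper proves \lemref{improvement} by a direct induction on the derivation of $Q \mid \Gamma \vdashS M : \tau$, with the key observation that any variable bound by the improving substitution can only occur in the side conditions of \strule{\I\to} and \strule{\E\to}, where $S\,f$, being a concrete arrow, remains a suitable function type at each use. You instead factor the lemma through the general substitution-closure principle---which the paper itself states separately, without proof, as \lemref{closed-substitution-s}, and uses alongside \lemref{improvement} in the soundness proof of $\M$---and then observe that $S$ fixes $\Gamma$ and $\tau$ outright, since its domain $X = ftv(Q) \setminus ftv(\Gamma,\tau)$ avoids $ftv(\Gamma,\tau)$ and its range ($\uto$, $\lto$) is closed. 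This decomposition is arguably cleaner: it turns the lemma into a one-line corollary, it uniformly handles the fact that improved variables may still occur in \emph{intermediate} types and environments of the derivation (something the paper's one-sentence induction glosses over, e.g.\ around \strule{let} and generalization), and it correctly isolates your explicit point that the $\uto$-versus-$\lto$ choice is irrelevant to typeability---assumption contexts tolerate unsatisfiable predicates---and matters only for ambiguity and principality elsewhere. The price is that you must actually discharge the substitution principle (closure of entailment under substitution, plus the usual interaction of substitution with $Gen$ via \lemref{more-general-context}), which the paper asserts but never proves. One shared caveat: your well-kindedness claim---that every variable $S$ moves is $\mathsf{Fun}$-constrained and hence function-kinded---is not literally guaranteed by the paper's definition of improving substitution, which as written maps every variable of $X$, including possible $\star$-kinded ones constrained only by $\mathsf{Un}$, to an arrow; but the paper's own ``key observation'' makes the same tacit assumption, so this is an imprecision inherited from the paper rather than a gap particular to your argument.
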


\begin{proof}
  By induction on the derivation of $Q \mid \Gamma \vdashS M : \tau$.  The key observation is that,
  if some variable $f$ is bound in the improving substitution then it must be used in rules
  \strule{\I\to},\strule{\E\to}, and then $S\,f$ is a suitable function type for any of its uses.
\end{proof}

Next, we account for routine manipulations of syntax-directed typing derivations.  Strengthening the
assumed context preserves typing.

 \begin{lemma}\label{thm:expand-context}
  If $P \mid \Delta \vdashS M: \tau$, and $Q \then P$, then $Q \mid \Delta \vdashS M: \tau$.
\end{lemma}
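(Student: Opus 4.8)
The plan is to proceed by structural induction on the derivation of $P \mid \Delta \vdashS M : \tau$, uniformly replacing the ambient predicate set $P$ by $Q$ and re-establishing every side condition that mentions $P$. These side conditions are of three kinds. Some discharge a predicate through an entailment $P \entails \pi$: the $\Fun\phi$ conditions of \strule{\I\to} and \strule{\E\to}, and the condition $P \entails [\seq\upsilon/\seq u]Q_K$ of \strule{make}, where $Q_K$ denotes the constructor's own predicate context. Others appeal to the auxiliary judgments $P \vdash \Junl\cdot$ or $P \vdash \cdot \moreunlimited \phi$. Finally, \strule{var} requires the instance relation $(P \then \tau) \lessgeneral \sigma$, which by definition unfolds to $\tau = [\seq\upsilon/\seq t]\tau'$ together with $P \entails [\seq\upsilon/\seq t]P'$. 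The first and third kinds are handled at once by transitivity (cut) of entailment: from $Q \entails P$ and $P \entails \pi$ we get $Q \entails \pi$, and the instance witness $\seq\upsilon$ is reused verbatim.

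Before the induction I would record two small strengthening lemmas for the auxiliary judgments: if $Q \entails P$, then $P \vdash \Junl\sigma$ implies $Q \vdash \Junl\sigma$, and $P \vdash \sigma \moreunlimited \phi$ implies $Q \vdash \sigma \moreunlimited \phi$. Each is proved by induction on the auxiliary derivation. The only rules needing thought are \trule{un-$\rho$}, \trule{un-$\sigma$}, and their $\moreunlimited$ counterparts, which extend the context by a predicate $\pi$ or an assumption $\Unl t$; there I would invoke monotonicity of entailment, namely that $Q \entails P$ gives $Q,\pi \entails P,\pi$ and $Q,\Unl t \entails P,\Unl t$, so the induction hypothesis applies to the extended subderivation. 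Both lemmas lift to environments through \trule{un-$\Eta$} and \trule{$\moreunlimited$-$\Eta$}, discharging the $Q \vdash \Junl\Delta$ and $Q \vdash \Gamma \moreunlimited \phi$ obligations that occur in \strule{var}, \strule{let}, \strule{\I\to}, \strule{\E\to}, \strule{\E\oplus}, and \strule{break}.

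In the main induction most cases are immediate, since their subderivations carry the same ambient context $P$: the induction hypothesis applies with the same witness $Q \entails P$, leaving only the entailment and auxiliary obligations treated above. Two cases type a subderivation under a different context. In \strule{let}, the premise typing $M$ and the generalized scheme $\sigma$ do not mention the ambient $P$ at all, so that premise and $\sigma$ are reused verbatim; only the premise typing $N$ and the condition $P \vdash \Junl\Delta$ depend on $P$. In \strule{break}, the premise typing $N$ lives under $P$ extended by the instantiated constructor predicates $[\seq\upsilon/\seq t]Q_K$; here I would first observe that monotonicity turns $Q \entails P$ into $Q,[\seq\upsilon/\seq t]Q_K \entails P,[\seq\upsilon/\seq t]Q_K$, and then apply the induction hypothesis to that subderivation.

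I expect the one genuinely delicate point to be the freshness side conditions of \strule{make} and \strule{break}, which require the generic constructor variables to avoid $ftv(P,\Gamma)$, respectively $ftv(P,\Gamma,\Gamma',\Delta,\upsilon')$. Passing from $P$ to $Q$ may enlarge the set of free type variables, so these bound variables could now clash with members of $ftv(Q) \setminus ftv(P)$. The remedy is to $\alpha$-rename the quantified constructor variables in the offending step to fresh names avoiding $ftv(Q)$ before applying the induction hypothesis; this is harmless precisely because those variables are bound and, by the freshness conditions already in force, occur in neither $\Gamma$, $\tau$, nor the conclusion, so the renaming is invisible outside the subderivation. With freshness thus restored under $Q$, each case closes and the induction goes through.
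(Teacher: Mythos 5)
The paper states this lemma without proof, presenting it as a routine manipulation of syntax-directed derivations, so your induction supplies exactly the argument the authors leave implicit rather than diverging from it. Your proof is correct: transitivity of entailment discharges the $P \entails \pi$ and \strule{var} instance conditions, the strengthening lemmas for $P \vdash \Junl{\cdot}$ and $P \vdash \cdot \moreunlimited \phi$ handle the auxiliary judgments, and the $\alpha$-renaming of constructor eigenvariables in \trule{make} and \trule{break} correctly resolves the one genuine subtlety ($ftv(Q)$ possibly clashing with the bound variables); just note that pushing that renaming through the subderivation tacitly appeals to closure under substitution (\lemref{closed-substitution-s}), whose proof is independent of this lemma, so no circularity arises.
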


\noindent
The syntax-directed typing system is closed under substitution:

\begin{lemma}\label{thm:closed-substitution-s}
  If $P \mid \Delta \vdash M: \tau$, then $S\,P \mid S\,\Delta \vdash M: S\,\tau$.
\end{lemma}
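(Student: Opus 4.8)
The plan is to proceed by structural induction on the derivation of $P \mid \Delta \vdashS M : \tau$ in the syntax-directed system of \figref{styping}, applying $S$ uniformly to the predicate context, the environment, and the result type in each rule, and checking that the rule still applies. Before starting the induction I would establish that the ingredients of the various side conditions are themselves stable under substitution, since these are what the rule-by-rule check ultimately rests on.

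First I would record three auxiliary closure facts, each by a short induction on the relevant derivation. First, entailment is closed under substitution: if $P \entails Q$ then $S\,P \entails S\,Q$, proved by induction on the rules of \figref{entail}; the cases for $\Fun\phi$, $\Unl\tau$, and $\tau \moreunlimited \upsilon$ are immediate because $S$ preserves the shape of the built-in types (for instance $S(\Fun\phi) = \Fun{S\,\phi}$ and $S$ maps $\lto,\uto$ to themselves), and the two lifting rules with a fresh $t$ go through by choosing the fresh variable outside the domain and range of $S$. Second, the auxiliary judgments $P \vdash \Junl\cdot$ and $P \vdash \cdot \moreunlimited \phi$ are closed under substitution, by induction using the first fact and again choosing the bound $t$ in \trule{un-$\sigma$} and \trule{$\moreunlimited$-$\sigma$} fresh for $S$. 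Third, instantiation is closed under substitution: if $(P \then \tau) \lessgeneral \sigma$, and the quantified variables of $\sigma$ are renamed to avoid $\dom(S)$ and the free type variables occurring in the range of $S$, then $(S\,P \then S\,\tau) \lessgeneral S\,\sigma$; this follows by commuting $S$ past the witness substitution $[\vec\upsilon/\vec t]$ and appealing to the entailment-closure fact for the resulting obligation.

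With these in hand the induction is largely routine. The cases for \strule{\E\to}, \strule{\E\oplus}, and the sum introduction rule follow by pushing $S$ through the premises via \theIH and re-applying the same rule, using entailment closure for the $\Fun\phi$ obligations and the second fact for the $\Junl\Delta$ and $\Gamma \moreunlimited \phi$ obligations. The \strule{var} case is exactly the second and third facts together. For the binder rules \strule{\I\to}, \strule{let}, \trule{make}, and \trule{break}, I would first alpha-rename the bound type variables (and, in \strule{let}, the variables generalized by $Gen$) so that they lie outside $\dom(S)$ and the free variables of the range of $S$; this licenses commuting $S$ past the binders and past the instantiating substitutions in \trule{make} and \trule{break}, after which \theIH and the re-applied rule yield the result.

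The main obstacle is bookkeeping rather than mathematical depth: ensuring that $S$ does not capture bound type variables in the polymorphic rules. The delicate point is \strule{let}, where I must show that, under the freshness convention, $S$ commutes with generalization, namely $S\,(Gen(\Gamma, P \then \tau)) = Gen(S\,\Gamma, S\,P \then S\,\tau)$, so that the scheme bound to $x$ when typing $N$ is exactly the one produced by re-running the rule; this rests on $ftv(S\,\Gamma)$ containing none of the freshly chosen generalized variables. The analogous care is needed in \trule{make} and \trule{break} for the existential and universal witnesses, whose side conditions of the form $\vec u \notin ftv(\cdots)$ must be re-established after substitution by the same freshness choice.
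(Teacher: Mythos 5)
Your overall plan is the right one, and for what it is worth the paper itself offers no proof of this lemma at all: it is asserted as one of several ``routine manipulations'' of the syntax-directed system, so the only thing to measure your proposal against is the standard argument the paper implicitly invokes. Your three preliminary closure facts (entailment, the auxiliary judgments $P \vdash \Junl\cdot$ and $P \vdash \cdot \moreunlimited \phi$, and instantiation $\lessgeneral$ under substitution) are exactly the right ingredients, and your cases for \strule{var}, \strule{\I\to}, \strule{\E\to}, \strule{\I\oplus$_i$}, and \strule{\E\oplus} go through as described.

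There is, however, a genuine gap in your \strule{let} case, and it recurs for the eigenvariable conditions of \trule{make} and \trule{break}. The equation you rely on, $S\,(Gen(\Gamma, P \then \tau)) = Gen(S\,\Gamma, S\,P \then S\,\tau)$, is false for arbitrary $S$: take $\Gamma = \Set{y : u}$, the qualified type $t \uto u$ with no predicates, and $S = [t/u]$; then $S\,(Gen(\Gamma, t \uto u)) = \forall t'.\, t' \uto t$, while $Gen(S\,\Gamma, t \uto t) = t \uto t$ with no quantifier at all. The failure mode is $S$ identifying an environment variable with a generalized one, and your freshness convention cannot rule this out by ``alpha-renaming'': the variables generalized by $Gen$ are \emph{free} in the premise $Q \mid \Gamma,\Delta \vdashS M : \tau$, so renaming them is not an $\alpha$-conversion of anything but a substitution applied to an entire derivation --- that is, an instance of the very lemma you are proving, which makes the step circular as stated. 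The standard repair stays entirely within your framework: since the statement being proved is quantified over all substitutions, the induction hypothesis may be applied to the first premise of \strule{let} not with $S$ but with $S' = S \circ [\vec{u}/\vec{t}\,]$, where $\vec t$ are the variables generalized by the rule and $\vec u$ are fresh. Because $\vec t \cap ftv(\Gamma,\Delta) = \emptyset$, we get $S'(\Gamma,\Delta) = S(\Gamma,\Delta)$, and one checks that $Gen(S\,\Gamma,S\,\Delta;\, S'(Q \then \tau))$ coincides (up to genuine $\alpha$-equivalence of the resulting \emph{scheme}, which is harmless) with $S\,\sigma$; the second premise, handled by the induction hypothesis with $S$ itself, then binds $x$ to exactly the scheme the reconstructed rule requires --- or, if one does not insist on the exact match, \lemref{more-general-context} closes the distance. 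The same modified-substitution device re-establishes the side conditions $\seq t \notin ftv(P,\Gamma)$ and $\seq u \notin ftv(P,\Gamma,\Gamma',\Delta,\upsilon')$ in \trule{make} and \trule{break}. With that correction your proof is sound; as literally written, the \strule{let} case does not go through.
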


\noindent
We can add bindings to the environment of a typing derivation, so long as they are unlimited .

\begin{lemma}\label{thm:weakening-s}
  If $P \mid \Delta \vdashS M: \tau$, $y$ is not free in $M$, and $P \vdash \Junl\sigma$, then $P
  \mid \Delta,y:\sigma \vdashS M: \tau$.
\end{lemma}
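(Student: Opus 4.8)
The plan is to argue by structural induction on the derivation of $P \mid \Delta \vdashS M : \tau$. Since the syntax-directed system has no explicit weakening rule (in contrast to the $\vdash$ system, where the companion \lemref{weakening} follows in one step from \trule{\W}), the binding $y:\sigma$ cannot be introduced by a rule of its own; instead it must be threaded through the derivation and absorbed exactly where the rules already tolerate extra unrestricted assumptions — namely in the discardable remainder of \strule{var} and in the shared component of the multi-premise rules. Two bookkeeping facts make this uniform: because $y$ is not free in $M$ it is free in none of the immediate subterms, and because binders are assumed to introduce fresh names $y$ is distinct from every bound variable, so \theIH applies to each premise without renaming.

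For the routine cases, in \strule{var} the environment is $\Delta', x:\sigma'$ with $P \vdash \Junl{\Delta'}$; adding $y:\sigma$ preserves $P \vdash \Junl{(\Delta', y:\sigma)}$ since $P \vdash \Junl\sigma$, and I re-apply \strule{var} with the same $\tau$. In each context-splitting rule — \strule{\E\to}, \strule{let}, \strule{\E\oplus}, and \strule{break} — the environment is partitioned with a shared part that the side condition forces to be unrestricted; I place $y:\sigma$ into that shared part, invoke \theIH on the premises carrying it, and re-apply the rule, the $P \vdash \Junl{(\cdots)}$ side conditions surviving precisely because $\sigma$ is unrestricted. The single-premise introduction rules \strule{\I\oplus$_i$} and \strule{make} are immediate from \theIH, renaming the rigid variables of \strule{make} away from $ftv(\sigma)$ to keep its freshness condition.

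The one case with genuine content is \strule{\I\to}. Pushing $y:\sigma$ into the premise via \theIH yields $P \mid \Gamma, y:\sigma, x:\tau \vdashS M : \upsilon$, but to re-apply \strule{\I\to} and recover the conclusion environment $\Gamma, y:\sigma$ I must re-establish its side condition $P \vdash (\Gamma, y:\sigma) \moreunlimited \phi$, i.e.\ add $P \vdash \sigma \moreunlimited \phi$ to the given $P \vdash \Gamma \moreunlimited \phi$. I would isolate this as an auxiliary claim: \emph{if $P \vdash \Junl\sigma$ then $P \vdash \sigma \moreunlimited \phi$ for every function-kinded $\phi$}. It is proved by unfolding the lifting rules for $\Junl\cdot$ and $\cdot \moreunlimited \phi$ in lockstep — \trule{un-$\sigma$}/\trule{$\moreunlimited$-$\sigma$} and \trule{un-$\rho$}/\trule{$\moreunlimited$-$\rho$} add the same assumptions on both sides — until it reduces to the monotype statement that $P' \entails \Unl{\tau'}$ implies $P' \entails \tau' \moreunlimited \phi$.

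I expect this base case to be the main obstacle. For a concrete arrow it is immediate from \figref{entail}: $\tau' \moreunlimited \lto$ holds unconditionally, and $\tau' \moreunlimited \uto$ holds under $\Unl{\tau'}$. For a variable arrow $\phi$ (with $\Fun\phi$ assumed) it asserts exactly that unrestricted types sit at the top of the $\moreunlimited$ preorder, and this is \emph{not} literally derivable from the minimal entailment of \figref{entail}, whose only $\moreunlimited$-rules with a non-variable right-hand side are the two concrete-arrow rules (the remaining ones merely lift an applied or partially applied $\phi$). I would discharge it by adjoining the natural rule taking $P \entails \Unl\tau$ to $P \entails \tau \moreunlimited \phi$ — which subsumes the $\uto$ rule and is exactly the kind of omitted simplification the paper anticipates for $\moreunlimited$ — after which the entire induction goes through mechanically.
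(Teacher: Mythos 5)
Your proof is correct, and its skeleton---induction on the syntax-directed derivation, absorbing $y\colon\sigma$ into the discardable remainder of \strule{var} and into the shared, $\Junl\cdot$-constrained component of the context-splitting rules---is exactly the paper's, which dispatches this lemma in one sentence as ``similar to'' \lemref{weakening}. Where you go beyond the paper is the \strule{\I\to} case, and your analysis there is the genuinely valuable part: the paper's sketch never mentions the side condition $P \vdash \Gamma \moreunlimited \phi$, yet re-applying \strule{\I\to} under the enlarged environment does require $P \vdash \sigma \moreunlimited \phi$, and your lockstep reduction of $P \vdash \Junl\sigma \Rightarrow P \vdash \sigma \moreunlimited \phi$ to the monotype implication $P' \entails \Unl{\tau'} \Rightarrow P' \entails \tau' \moreunlimited \phi$ is the right way to isolate what is needed. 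You are also right that this implication is not derivable from \figref{entail} when $\phi$ is a type variable satisfying only $\Fun\phi$: the two arrow rules demand a concrete $\uto$ or $\lto$ head on the right-hand side, and the two lifting rules never change that head, so no rule applies. In other words, the hole you found is in the paper, not in your argument: strictly read, the minimal entailment relation does not support the paper's own proof sketch for this lemma, nor its uses in the soundness proof of $\M$, which invokes the lemma. Your repair---adjoining $P \entails \Unl\tau \Rightarrow P \entails \tau \moreunlimited \phi$, i.e.\ making unrestricted types maximal in the $\moreunlimited$ preorder---is semantically forced by the intended reading of $\moreunlimited$ (``admits more structural rules than''), subsumes the existing $\uto$ rule, and falls squarely under the paper's explicit disclaimer that the specification of $\geq$ is intentionally minimal. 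With that rule adopted, your induction goes through; the only remaining bookkeeping is the eigenvariable conditions in \strule{make} and \strule{break}, which you also handle correctly by renaming.
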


\noindent
The proof is by induction over the derivation of $P \mid \Delta \vdashS M: \tau$, similarly to
that for \lemref{weakening}.  Finally, we can show that type inference respects the rigid type
variables (assuming the same of the unification algorithm):

\begin{lemma}\label{thm:rigid-variables}
  If $\Mapp S X \Gamma M \tau = \Mres P R \Sigma$, and $\restrict S X = id$, then
  $\restrict R X = id$.
\end{lemma}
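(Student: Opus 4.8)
The plan is to prove the statement by induction on the structure of the term $M$, keeping the rigid-variable set $X$ universally quantified so that the induction hypothesis may be reapplied with an enlarged set in the constructor cases. The whole argument rests on two elementary facts about substitutions. First, by the stated invariant of the unification algorithm, every unifier produced by $\Unif X {\cdot} {\cdot}$ leaves the variables of $X$ untouched, that is, $\restrict{\Unif X {\upsilon_1} {\upsilon_2}} X = id$. Second, if $\restrict{S_1} X = id$ and $\restrict{S_2} X = id$, then $\restrict{(S_1 \circ S_2)} X = id$, since for $t \in X$ we have $(S_1 \circ S_2)(t) = S_1(S_2(t)) = S_1(t) = t$. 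Together these show that any substitution built by composing the input $S$ with unifiers taken against the rigid set $X$ remains the identity on $X$.

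With these in hand, most cases are immediate. In the variable case the output substitution is $U \circ S$ directly, where $U$ is a unifier against $X$, so the composition fact gives $\restrict{(U \circ S)} X = id$. In the abstraction case \strule{\I\to} and the injection case \strule{\I\oplus$_i$} the output is the result of a recursive call whose input is again of the form $U \circ S$; the composition fact establishes that this input is the identity on $X$, and the induction hypothesis then transfers this to the output. In the application case \strule{\E\to} and the \strule{let} case the output substitution is the result of a chain of recursive calls, each receiving as its input the output of the previous call; threading the induction hypothesis along the chain from the initial assumption $\restrict S X = id$ yields the claim, and the case-expression rule \strule{\E\oplus} is the same pattern with three calls in sequence. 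In the \strule{let} case I note additionally that the improving substitution used internally by $GenI$ acts only on variables that are ambiguous in the generalized scheme and is never propagated into the algorithm's output substitution, so it is irrelevant here.

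The interesting cases are the constructor rules \trule{make} and \trule{break}, where the recursive call is made with an enlarged rigid set ($X \cup \vec{t_2}$ and $X \cup \vec{t_3}$ respectively). To apply the induction hypothesis I must verify that the substitution fed to the recursive call is the identity not merely on $X$ but on the whole enlarged set. On $X$ this is already known, by the composition fact (for \trule{make}, the input is $U \circ S$) or by the induction hypothesis applied to the preceding call (for \trule{break}, the input is the output $R$ of the first recursive call). For the newly rigidified variables $\vec{t_2}$ (resp. $\vec{t_3}$), I appeal to freshness: these variables are drawn fresh when the constructor signature is opened and occur neither in $S$ nor in the types against which unification is performed, since the constructor's result type $\upsilon$ is instantiated only at the $\vec{t_1}$ (to fresh $\vec{u_1}$). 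Hence no substitution computed so far can bind them, and the incoming substitution fixes them as well. The induction hypothesis then gives that the recursive output is the identity on the enlarged set, and in particular on $X$.

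The main obstacle, such as it is, lies precisely in these two constructor cases: one must confirm that the enlargement of the rigid set is compatible with the induction hypothesis, which forces the freshness observation to discharge the identity condition on $\vec{t_2}$ and $\vec{t_3}$. Every remaining case is a routine combination of the unification invariant and the composition fact above.
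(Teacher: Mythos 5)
Your proposal is correct and follows exactly the route the paper takes: the paper's entire proof is the remark that the result follows ``by induction on $M$,'' together with the standing assumption that $\Unif X \cdot \cdot$ never binds variables in $X$, and your argument is precisely that induction carried out in detail (composition of $X$-fixing substitutions, threading the hypothesis through chained recursive calls, and the freshness observation needed to enlarge the rigid set in the \trule{make} and \trule{break} cases). Nothing in your write-up deviates from or adds assumptions beyond what the paper's one-line proof implicitly relies on.
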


\noindent
The proof is by induction on $M$.

Finally, we are prepared to show the soundness of the inference algorithm.

\begin{proof}[Proof of \thmref{soundness-m}]
By induction on the structure of $M$:
\begin{itemize}
\item \textit{Case $x$.} We have that $(x: \forall \vec t. P \then \tau) \in \Gamma$ and
  $\Sigma = \Set{x}$. We see immediately that
  \[
    ([u_i/t_i]P \then [u_i/t_i] \tau) \lessgeneral (\forall \vec t. P \then \tau).
  \]
  So, we can apply \strule{var} to construct a derivation of
  \[
    [u_i/t_i]P \mid \Set{x: \forall \vec t. P \then \tau} \vdashS x: [u_i/t_i]\tau.
  \]
\item \textit{Case $\lambda x.M$.}  We have that
  \[
    \Mres{P}{S'}{\Sigma} = \Mapp {\Unif X \tau {u_1\,u_2\,u_3} \circ S} X {\Gamma,x:u_2} M {u_3}
  \]
  and so, by \theIH,
  \[
    S' \, P \mid S'\, (\restrict{(\Gamma, x : u_2)}{\Sigma}) \vdashS M: S' \, u_3.
  \]
  Let
  $Q = \Set{\Fun u_1} \cup Leq(u_1,\RestrictEnv{\Gamma}{\Sigma}) \cup Weaken(x,u_2,\Sigma) \cup P$
  and let $\Sigma' = \Sigma \setminus \Set{x}$.  By Lemmas~\ref{thm:expand-context} and
  \ref{thm:weakening-s} we can construct a derivation of
  \[
    S' \, Q \mid S' \, (\restrict{(\Gamma, x: u_2)}{\Sigma'}) \vdashS M : S' \, u_3.
  \]
  The desired result is then immediate by \strule{\I\to}.
\item \textit{Case $M \, N$.} We have that
  \begin{align*}
    \Mres P R \Sigma &= \Mapp S X \Gamma M {u_1\,u_2\,\tau} \\
    \Mres {P'} {R'} {\Sigma'} &= \Mapp R X \Gamma N {u_2}
  \end{align*}
  Let
  $Q = Un(\RestrictEnv{\Gamma}{\Sigma'}) \cup Weaken(x,u_1,\Sigma_N) \cup Weaken(y,u_2,\Sigma_{N'})$
  and $Q' = P \cup P' \cup Q$.  Let $\Gamma' = \restrict{\Gamma}{\Sigma \cap \Sigma'}$,
  $\Delta = \restrict{\Gamma}{\Sigma \setminus \Sigma'}$ and
  $\Delta' = \restrict{\Gamma}{\Sigma' \setminus \Sigma}$ (and note that these partition
  $\restrict{\Gamma}{\Sigma \cup \Sigma'}$).  By Lemmas~\ref{thm:expand-context}
  and~\ref{thm:closed-substitution-s} and \theIH, we have derivations of the following:
  \begin{gather*}
    R'\,Q' \mid R'\,\Gamma', R'\,\Delta \vdashS M : R' (u_1 \, u_2 \, \tau) \\
    R'\,Q' \mid R'\,\Gamma', R'\,\Delta' \vdashS N : R' u_2.
  \end{gather*}
  Finally, note that by construction $Q' \then \Junl{\Gamma'}$ and $Q' \then \Fun{u_1}$, so the desired
  result follows from an application of \strule{\E\to}.
\item \textit{Case $\ini \, M$} is immediate by \theIH.
\item \textit{Case $\case{M}{x}{N}{y}{N'}$} follows a very similar argument to that for application.
\item \textit{Case $\mlet{x}{M}{N}$.} We have that
  \begin{gather*}
    \Mres P R \Sigma = \Mapp S X \Gamma M {u_1} \\
    \Mres {P'}{R'}{\Sigma'} = \Mapp R X {\Gamma,x : \sigma} N \tau
  \end{gather*}
  where $\sigma = GenI(R\,\Gamma, R\,(P \then u_1))$.  Let $T$ improve
  $ftv(P) \setminus ftv(\Gamma,R\,u_1)$ in $P$ and.  Then there is a partition of $\Gamma$ into
  $\Gamma_M$, $\Gamma_N$, and $\Delta$ such that, by Lemmas~\ref{thm:improvement},
  \ref{thm:closed-substitution-s} and \theIH we have
  \begin{gather*}
    (T \circ R') \, P \mid R' \, (\Gamma_M,\Delta) \vdashS M : R\,u_1 \\
    R'\,P' \mid R'\,(\Gamma_N,\Delta,x:\sigma) \vdashS N : R'\,\tau
  \end{gather*}
  and the result follows by an application of \strule{Let}.
\item \textit{Case $K \, M$.} We have that
  $K : \forall \vec{t_1}. (\forall \vec{t_2}. \exists \vec{t_3}. Q' \then \phi') \uto \phi$; let
  $(\forall \vec{t_2}. \exists \vec{t_3}. (Q \then \upsilon') \uto \upsilon$ be an instance of that
  type such that $U = \Unif X \upsilon \tau$ does not fail and $U \upsilon = \upsilon$.  (If such an
  instance did not exist, type inference would fail.)  We have that
  \[
    \Mres P R \Sigma = \Mapp {U \circ S} {X \cup \vec{t_2}} {\Gamma} {M} {[\vec{u_3}/\vec{t_3}]\upsilon'}.
  \]
  By \theIH, we have that
  \[
    R\,P \mid R\,(\restrict{\Gamma}{\sigma}) \vdashS M : R\,([\vec{u_3}/\vec{t_3}]\upsilon').
  \]
  The side condition $P \cup [\vec{u_3}/\vec{t_3}] Q \then [\vec{u_3}/\vec{t_3}] Q$ holds trivially,
  and $\vec{t_2} \not\in ftv(P, R\,\Gamma)$ is assured by \lemref{rigid-variables} and the side conditions in
  $\M$.
\item \textit{Case $\mlet{K\,x}{M}{N}$.} We have that
  \[
    \Mres {P_M} R {\Sigma_M} = \Mapp S X \Gamma M \upsilon.
  \]
  As in the previous case, let
  $(\forall \vec{t_2}. \exists \vec{t_3}. (Q \then \upsilon') \uto \upsilon$ be an instance of the
  type of $K$. (If there is not such an instance, type inference fails.)  By \theIH and
  Lemmas~\ref{thm:weakening-s} and~\ref{thm:closed-substitution-s}, we have that there is a
  partition of $\restrict \Gamma {\Sigma_M \cup \Sigma_N}$ into $\Gamma_M, \Gamma_N, \Delta$ such that
  \begin{gather*}
    R'\,P \mid R' (\Gamma_M, \Delta) \vdashS M : \upsilon \\
    R'\,(P \cup [\vec{u_2}/\vec{t_2}] Q) \mid R' \, (\Gamma_N, \Delta') \vdashS N : \tau
  \end{gather*}
  and the side condition is assured by \lemref{rigid-variables}. \qedhere
\end{itemize}
\end{proof}

Completeness of the inference algorithm is relatively straightforward.  We begin with a lemma
characterizing the effect of the input substitution.

\begin{lemma}\label{thm:infer-more-specific}
  If $\Mapp S X \Gamma M \tau = \Mres P {S'} \Sigma$, then
  $\Mapp {id} X \Gamma M \tau = \Mres {P'} {S''} \Sigma$ where $P \entails S\,P'$ and
  $S' = S \circ S''$.
\end{lemma}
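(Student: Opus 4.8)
The plan is to proceed by structural induction on $M$, following the clause of $\M$ that applies to each form. It helps to separate the three components of the conclusion. The used-variable set $\Sigma$ is determined entirely by the shape of $M$ and by which term variables occur (and are shared) among its subterms; it never depends on the input substitution. Hence in every case the claim that the two runs return the \emph{same} $\Sigma$ is immediate once the inductive hypotheses supply matching $\Sigma$'s for the subterms, and we may concentrate on the identity $S' = S \circ S''$ and the entailment $P \entails S\,P'$.

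The base case is the variable rule, and it carries the essential content. Running $\Mapp S X \Gamma x \tau$ looks up $(x : \forall \seq t. P_0 \then \upsilon) \in S\,\Gamma$ and sets $U = \Unif X {[\seq u/\seq t]\upsilon}{S\,\tau}$, whereas $\Mapp {id} X \Gamma x \tau$ looks up the un-substituted scheme $(x : \forall \seq t. P_0' \then \upsilon_0) \in \Gamma$ and sets $U_0 = \Unif X {[\seq u/\seq t]\upsilon_0}{\tau}$. Since $S\Gamma$'s scheme for $x$ is just $S$ applied to $\Gamma$'s, and the fresh variables $\seq u$ lie outside the domain and range of $S$, the renaming $[\seq u/\seq t]$ commutes with $S$; thus the first run unifies the $S$-images of the two terms the second run unifies, and moreover $P = S\,P'$ outright. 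The case therefore reduces to a property of unification, which I would isolate as a lemma: \emph{if $\Unif X {S\,a}{S\,b} = U$ succeeds, then $\Unif X a b = U_0$ succeeds and $U \circ S = S \circ U_0$.} This is the standard most-general-unifier fact ($S$ witnesses that $a,b$ are unifiable after $S$, hence unifiable, and $U_0$ is most general), transported through the $\Unif X$ discipline that freezes the rigid variables $X$. Granting it, $S' = U \circ S = S \circ U_0 = S \circ S''$.

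For the inductive cases I would thread this reasoning through the recursive calls, using the lemma itself as the inductive hypothesis. The point is that the lemma compares \emph{any} input substitution to $\mathit{id}$, so it composes: in the application clause, the first call consumes $S$ and returns $R = S \circ R_1$ by the hypothesis, the second call consumes $R$ and returns $R \circ \text{(its }id\text{-run output)} = S \circ R_1 \circ (\cdots) = S \circ S''$ by associativity, and the $\mathit{id}$-run's second call (input $R_1$) produces exactly the matching factor. The auxiliary predicate sets $\Fun u_1$, $Leq(u_1,\restrict\Gamma\Sigma)$, $Un(\restrict\Gamma{\Sigma\cap\Sigma'})$, and $Weaken(x,-,\Sigma)$ are built from freshly chosen $u_i$ and from the same $\Gamma$ and $\Sigma$ in both runs, so they generate syntactically matching obligations; combined with the subterm entailments these assemble into $P \entails S\,P'$. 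The \texttt{let} clause additionally calls $GenI$, so I would check that improvement and generalization commute with pre-composition by $S$, using that $GenI$ quantifies precisely the variables free in the residual type but not the environment; the $K\,M$ and $\mlet{K\,x}{M}{N}$ clauses enlarge the rigid set $X$ by the existential variables $\seq{t_3}$, where I would appeal to \lemref{rigid-variables} to confirm that extending $X$ does not disturb the commutation.

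The main obstacle is the unification lemma together with the precise sense in which $S' = S \circ S''$ is meant to hold. Because $S$ may rename free variables, $U \circ S$ and $S \circ U_0$ can genuinely disagree on variables lying only in the range of $S$ that play no role in $\Gamma$, $\tau$, or $X$; the identity must therefore be read as equality on the relevant variables (equivalently, up to junk introduced by $S$). I would carry an explicit ``agree on $ftv(\Gamma,\tau)\cup X$'' invariant through the induction so that each composition step is licensed, and verify it is preserved by unification and by the auxiliary operations. Reconciling these freshness and relevant-variable conditions across the two recursive results—in particular in the application and \texttt{case} clauses, where the shared environment $\restrict\Gamma{\Sigma\cap\Sigma'}$ must be matched between the runs—is the delicate part; the remainder is routine propagation of the inductive hypothesis.
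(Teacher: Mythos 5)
Your plan follows the same route as the paper: the paper's entire proof of this lemma is ``By induction on the structure of $M$,'' and your two key ingredients---the mgu-stability lemma driving the variable case, and the observation that $S' = S \circ S''$ can only be read modulo variables that play no role in $\Gamma$, $\tau$, or $X$---are exactly the content such an induction needs.

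However, the way you thread the inductive hypothesis through the multi-call cases contains a step that fails as written. Your IH compares a run with an \emph{arbitrary} input substitution against the run with input $id$, and only in one direction: if the instantiated run succeeds, then the $id$ run succeeds. In the application case, the $id$-run of $M\,N$ makes its second recursive call with input $R_1$, the substitution returned by the $id$-run on $M$, which is not the identity. What you know is that $\Mapp{S \circ R_1}{X}{\Gamma}{N}{u_2}$ succeeds (it is the $S$-run's second call, since $R = S \circ R_1$), and the IH then gives that $\Mapp{id}{X}{\Gamma}{N}{u_2}$ succeeds. Neither fact yields that $\Mapp{R_1}{X}{\Gamma}{N}{u_2}$ succeeds, much less that it ``produces exactly the matching factor'': the missing implication is the \emph{converse} of your IH (from success at $id$ to success at $R_1$), and that converse is false in general, since an input substitution can render two types non-unifiable. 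So ``it composes by associativity'' is precisely the unjustified step. The standard repair is to prove, by the same structural induction, the two-substitution generalization: for all $T$ and $S_1$, if $\Mapp{T \circ S_1}{X}{\Gamma}{M}{\tau} = \Mres{P}{S'}{\Sigma}$ then $\Mapp{S_1}{X}{\Gamma}{M}{\tau} = \Mres{P'}{S''}{\Sigma}$ with $P \entails T\,P'$ and $S' = T \circ S''$ (modulo the same junk-variable caveat), of which the stated lemma is the case $S_1 = id$. This form does compose---in the application case you invoke it on $N$ with $T = S$ and $S_1 = R_1$---and with that strengthening (and your unification lemma stated in the correspondingly relativized form) the rest of your outline goes through.
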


\begin{proof}
  By induction on the structure of $M$.
\end{proof}

We can now show that the algorithm is complete.

\begin{fake}{Theorem \ref{thm:completeness-m} \textnormal{(Completeness of $\M$)}.}
  If $S$ is a substitution such that $P \mid S\,\Gamma \vdashS M : S\,\tau$, and
  $\restrict S X = id$, then $\Mapp {id} X \Gamma M \tau = \Mres Q {S'} \Sigma$ such that
  $(P \then S\,\tau) \lessgeneral GenI(S'\,\Gamma, S'\,Q \then S'\,\tau)$.
\end{fake}

\begin{proof}
  We show the result for $\Mapp S X \Gamma M \tau$ by induction on the height of the derivation of
  $P \mid S\,\Gamma \vdashS M: S\,\tau$, and then apply \lemref{infer-more-specific} to show the
  theorem.
  \begin{itemize}
  \item \textit{Case \strule{var}.}  We have that $(x:\sigma) \in \Gamma$ such that
    $(P \then S\,\tau) \lessgeneral \sigma$.  Let $\sigma = \forall \seq t. Q \then \upsilon$.
    Then, we have that
    \[
      \Mapp S X \Gamma x \tau = \Mres {([\vec u/\vec t]\,Q)} {U \circ S} {\Set{x}}
    \]
    where $U = \Unif X {[\seq u/\seq t]\,\upsilon} {S\,\tau}$.  By assumption $U \, [\vec u/\vec t]
    \upsilon = S\,\tau$ and $P \then U\,([\vec u/\vec t]\,Q)$, so $P \then S\,\tau \lessgeneral
    GenI(S\,\Gamma, S\,Q \then S\, \tau)$.
  \item \textit{Case \strule{\I\to}.}  We have a derivation concluding
    $P \mid \Gamma \vdashS \lambda x. M : \phi\,\tau\,\upsilon$ such that $P \then \Fun \phi$ and
    $P \then \Gamma \moreunlimited \phi$.  It is immediate that
    $\Unif X {\phi\,\tau\,\upsilon} {u_1\,u_2\,u_3}$ will give the unifier
    $[\phi/u_1,\tau/u_2,\upsilon/u_3]$, so we will assume that unifier for the remainder of this
    case.  By \theIH, we have that
    \[
      \Mapp S X {\Gamma, x:\tau} M {\upsilon} = \Mres {Q'} R \Sigma
    \]
    (the role of the generalization is unimportant). Therefore, we see that
    \[
      \Mapp S X \Gamma {\lambda x. M} {\phi \tau \upsilon} = \Mres {Q} R {\Sigma \setminus x}
    \]
    where $Q = Q' \cup \Set{\Fun \phi} \cup Leq(\phi, \Gamma) \cup Weaken(x,\tau,\Sigma)$.  That $P
    \then Q$ follows from the assumption that $P \then Q'$, the side conditions of the initial
    derivation, and the side conditions of any uses of \strule{Var} in the initial derivation.
  \item \textit{Case \strule{\E\to}.}  We have a derivation concluding
    $P \mid S\,\Gamma \vdashS M\,N : \upsilon$ (where $\upsilon = S\,\tau$).  By \theIH, we can
    conclude that
   \begin{gather*}
     \Mapp S X \Gamma M {\phi \, \tau' \, \upsilon} = \Mres Q S \Sigma \\
     \Mapp S X \Gamma N {\tau'} = \Mres {Q'} S {\Sigma'}
   \end{gather*}
   where $P \then Q$ and $P \then Q'$. So, we have that
   \[
     \Mapp S X \Gamma {M \, N} \upsilon = \Mres {Q''} S {\Sigma \cup \Sigma'}
   \]
   where $Q'' = Q \cup Q' \cup \Set{\Fun \phi} \cup Un(\restrict \Gamma {\Sigma \cap \Sigma'})$.
   Finally, the side conditions of the initial derivation ensure that $P \then \Fun \phi$ and
   $P \then Un(\restrict \Gamma {\Sigma \cap \Sigma'})$ and so $P \then Q''$.
 \item \textit{Cases \strule{\I\oplus$_i$} and \strule{\E\oplus}} follow from similar arguments to
   those for \strule{\I\to} and \strule{\E\to}.
 \item \textit{Case \strule{let}.}  We have a derivation concluding
   $P \mid S\,\Gamma \vdashS \mlet{x}{M}{N} : \upsilon$ (where $\upsilon = S\,\tau$, and $\Gamma$ is
   partitioned into $\Gamma_M$, $\Gamma_N$, and $\Delta$).  From the subderivation of
   $Q \mid S\,(\Gamma_M,\Delta) \vdashS M : \tau'$ and \theIH conclude that
   \[
     \Mapp S X \Gamma M {u_1} = \Mres Q' {S'} \Sigma
   \]
   such that $Gen(S\,\Gamma, S\,(Q \then \tau')) \lessgeneral GenI(S\,\Gamma, S'\,(Q' \then u_1))$.
   Then, from the subderivation of
   $P \mid S\,(\Gamma_N,\Delta, x:Gen(S\,\Gamma, S\,(Q \then \tau')) \vdash N: \upsilon$,
   \lemref{more-general-context}, and \theIH, we conclude that
   \[
     \Mapp {S'} X {\Gamma, x: GenI(S\,\Gamma, S\,(Q' \then u_1))} N \upsilon = \Mres P' {S'} \Sigma'
   \]
   where the side conditions on the initial derivation are sufficient to ensure that
   $P \then Un(\restrict \Gamma {\Sigma \cap \Sigma'})$ while the side conditions on uses of
   \strule{var} assure that $P \then Weaken(x, GenI(S\,\Gamma, S'\,(Q' \then u_1)), \Sigma')$.
 \item \textit{Cases \strule{make} and \strule{break}} follow from \theIH, with similar arguments as
   for cases \strule{\I\to} and \strule{\E\to}; the eigenvariable conditions in \strule{make} and
   \strule{break} are sufficient to ensure that the disjointness conditions in $\M$ hold.
 \end{itemize}
\end{proof}

Finally, we can build on the soundness and completeness of the type inference algorithm and the
syntax-directed type system to give an effective proof of principal types.

\begin{fake}{Theorem \ref{thm:principal} \textnormal{(Principal Types)}.}
  If $P_0 \mid \Eta \vdash M: \sigma_0$ and $P_1 \mid \Eta \vdash M: \sigma_1$ then there is
  some $\sigma$ such that $\emptyset \mid \Eta \vdash M: \sigma$ and $(P_0 \mid \sigma_0) \lessgeneral
  \sigma, (P_1 \mid \sigma_1) \lessgeneral \sigma.$
\end{fake}

\begin{proof}
  Suppose that $P_0 \mid \Eta \vdash M: \sigma_0$.  From \thmref{soundness-s} we have there
  there are some $Q_0$ and $\tau_0$ such that $Q_0 \mid \Gamma \vdashS M: \tau_0$ and $(P \mid
  \sigma_0) \lessgeneral Gen(\Gamma, Q_0 \then \tau_0)$.  Similarly, from $P_1 \mid \Eta \vdash
  M: \sigma_1$, we have $Q_1 \mid \Gamma \vdashS M: \tau_1$ such that $(P_1 \mid \sigma_1)
  \lessgeneral Gen(\Gamma, Q_1 \then \tau_1)$.  From \thmref{completeness-s}, we have that
  $W(\Gamma,M) = Q;S\Delta;\upsilon$ such that, writing $\sigma = Gen(\Gamma, Q \then \upsilon)$,
  $Gen(\Gamma,Q_0 \then \tau_0) \lessgeneral \sigma$ and $Gen(\Gamma,Q_1 \then \tau_1)
  \lessgeneral \sigma$.  Finally, by transitivity, we have that $(P_0 \mid \sigma_0) \lessgeneral
  \sigma$ and $(P_1 \mid \sigma_1) \lessgeneral \sigma$.
\end{proof}

\subsection{Conservativity of Typing}

We now show that Quill is a conservative extension of Jones's core functional calculus OML.  We give
the syntax and typing rules of OML in \figref{oml-terms-typing}.  We overload the meta-variables of
Quill to play similar roles in the definition of OML; the meaning of individual meta-variables will
be apparent from context.  Our presentation of OML differs from Jones's~\cite{Jones94} in two
respects.  First, we associate Jones's function type ($\tau \to \upsilon$) with our unrestricted
function type ($\tau \uto \upsilon$).  This preserves the meaning of OML terms while avoiding the
need to introduce new polymorphism in the interpretation of OML types as \lang types.  It may seem
restrictive, but the principal types theorem for \lang (\thmref{principal}) assures that we can find
more general types for terms if they exist.  Second, we treat sums explicitly, whereas Jones leaves
them implicit (or treated by Church encoding).  This corresponds to the need to introduce one
additive type in \lang; the typing rules we give for for $\tau \oplus \upsilon$ in OML are exactly
those that would arise by encoding.

\begin{figure}
\[\begin{array}{ll@{\hspace{3mm}}ll}
  \text{Term variable} & x,y \in Var & \text{Type variables} & t,u \in TVar \\
  \text{Environments} & \Gamma \\
  \text{Type constructors} & \multicolumn{3}{l}{T^\kappa \in \T^\kappa \text{ where $\Set{\oplus,\uto} \subseteq \T^{\star \to \star \to \star}$}}
\end{array}\]
\begin{longsyntax}
  \text{Kinds} & \tcr{\kappa} & ::= & \star \mid \kappa \to \kappa \\
  \text{Types} & \tcr{\tau^\kappa} & ::= & t \mid T^\kappa \mid \tau^{\kappa' \to \kappa}\,\tau^{\kappa'} \\
  \tcl{\text{Predicates}} & \pi & ::= & \dots \\
  \tcl{\text{Qualified types}} & \rho & ::= & \tau^\star \mid \pi \then \rho \\
  \tcl{\text{Type schemes}} & \sigma & ::= & \rho \mid \forall t. \sigma \\
  \text{Expressions} & \tcr{M,N} & ::= & x \mid \lambda x. M \mid M\,N \mid \inl{M} \mid \inr{N} \\
   &&& \mid & \case{M}{x}{N}{y}{N'} \\
   &&& \mid & \mlet{x}{M}{N}
\end{longsyntax}
\begin{gather*}
\infbox{\irule{(x : \sigma) \in \Gamma}
              {(P \then \tau) \lessgeneral \sigma};
              {P \mid \Gamma \vdashSO x : \tau}}
\quad
\infbox{\irule{P \mid \Gamma, x : \tau \vdashSO M : \upsilon};
              {P \mid \Gamma \vdashSO \lambda x. M : \tau \uto \upsilon}}
\\
\infbox{\irule{\begin{array}{c}
                 {P \mid \Gamma \vdashSO M : \tau \uto \upsilon}
                 \\
                 {P \mid \Gamma \vdashSO N : \tau}
               \end{array}};
              {P \mid \Gamma \vdashSO M\,N : \upsilon}}
\quad
\infbox{\irule{P \mid \Gamma \vdashSO M : \tau_i};
              {P \mid \Gamma \vdashSO \ini M : \tau_1 \oplus \tau_2}}
\\
\infbox{\irule{\begin{array}{c}
                 {P \mid \Gamma \vdashSO M : \tau_1 \oplus \tau_2}
                 \hspace\infskip
                 {P \mid \Gamma, x : \tau_1 \vdashSO N : \upsilon}
                 \\
                 {P \mid \Gamma, y : \tau_2 \vdashSO N' : \upsilon}
               \end{array}};
              {P \mid \Gamma \vdashSO \case M x N y {N'} : \upsilon}}
\\
\infbox{\irule{Q \mid \Gamma \vdashSO M : \tau}
              {P \mid \Gamma, x : Gen(\Gamma, Q \then \tau) \vdashSO N : \upsilon};
              {P \mid \Gamma \vdashSO \mlet x M N : \upsilon}}
\end{gather*}
\caption{Terms and typing of OML.}\label{fig:oml-terms-typing}
\end{figure}

We can now show that \lang is a conservative extension of OML, by showing that any syntax-directed
typing of a term in OML corresponds to a syntax-directed typing of the same term in \lang.

\begin{fake}{Theorem \ref{thm:conservative}.}
  If $P \mid \Gamma \vdashSO M : \tau$, then there is some $Q$ such that $Q \mid \Gamma \vdashS
  M : \tau$, and $Q \entails P$.
\end{fake}

\begin{proof}
  The proof is by induction on the structure of the derivation; the cases are all immediate by
  \theIH and \lemref{expand-context}.
\end{proof}

\subsection{Semantics}

We begin by giving the ``predictable'' definition of the subexpressions $SExp(M)$ of an expression
$M$, as follows.

\begin{align*}
  SExp(x) &= \emptyset \\
  SExp(\lambda x.M) &= Exp(M) \\
  SExp(M\,N) &= Exp(M) \cup Exp(N) \\
  SExp\hspace{-3px}\left(\begin{array}{@{}l@{}}\mkwd{case}\,x\,\mkwd{of}\\\{\inl x \mapsto N; \inr y \mapsto N'\}\end{array}\right) &= Exp(M) \cup Exp(N) \cup Exp(N') \\
  SExp(\ini M) &= Exp(M) \\
  SExp(\mlet{x}{M}{N}) &= Exp(M) \cup Exp(N) \\
  SExp(K\,M) &= Exp(M) \\
  SExp(\mlet{K\,x}{M}{N}) &= Exp(M) \cup Exp(N)
\end{align*}

\begin{fake}{Theorem \ref{thm:safety} \textnormal{(Type safety)}.}
  Let $M$ be a closed term such that $\vdash M: \forall t. P \then \tau$ and $\red M V I E$.
  \begin{enumerate}
  \item $P \mid \emptyset \vdash V : \tau$.
  \item Let $E' = E \cup Val(V)$, and let $D = I \setminus E'$ (the values discarded during
    evaluation) and $C = E' \setminus I$ (the values copied during evaluation). Then,
    $W \in D \cup C$ only if $W \not\in \LinV_P$.
  \end{enumerate}
\end{fake}

\begin{proof}
  The proof of (1) is a straightforward induction on the height of the derivation of $\red M V I E$.
  The proof of (2) is done similarly, by cases on the reducing term.  We show an illustrative case;
  the remaining cases are similar.
  \begin{itemize}
  \item \textit{Case $M\,N$.}  We have that $\red M {\lami j x {M'}} I E$, $\red N V {I'} {E'}$, and
    $\red {[V/x]M'} {W} {I''} {E''}$.  Let
    \begin{align*}
      I_0 &= I \cup I' \cup I''; \\
      E_0 &= E '\cup E' \cup E'' \cup \lami j x M'; \text{and,} \\
      E_0' &= E_0 \cup Val(W).
    \end{align*}
    W.L.O.G., suppose that $W' \in I_0 \setminus E_0'$ (i.e., it is discarded during evaluation).
    If this happens during the reduction of $M$ or $N$ then the case holds by \theIH.
    Alternatively, it may happen during the reduction of $[V/x]M'$.  But then the result holds from
    the well-typing of $M$.  For example, suppose that the variable $x$ (of type $\tau$) does not
    appear in $M'$, and so $V$ itself is discarded.  Then for $M$ to be well-typed, it must be the
    case that $P \then \Unl\tau$, and so $V \not\in \LinV_P$. \qedhere
  \end{itemize}
\end{proof}

\raggedend

\end{document}

\section{More Examples of Linear Programming}

\subsubsection*{Type-changing update}

\begin{itemize}
\item Problem: type changing updates unsafe if multiple copies of reference
\item Solution: prevent copying of references
\item Bonus: what about transforming uncopyable references into copyable references?
  \begin{itemize}
  \item Support a one-way downgrading operator: $Ref\,\tau \uto RRef\,\tau$.  (Overload reading
    pointers.)
  \item Delimited downgrading? $(\forall i. RRef \, i \, \tau \labto{f} r) \uto Ref \tau \labto{\leq f} r$
  \item Count the number of copies of a pointer.
    \begin{align*}
      new &:: (\tau \moreunlimited f) \then \tau \uto (\forall i. Ref\,i\,0\,\tau \to \upsilon) \stackrel{f}{\to} \upsilon \\
      read &:: Ref\,i\,n\,\tau \uto \tau \\
      write &:: (n = 0 \lor \tau = \upsilon) \then Ref\,i\,n\,\tau \uto \upsilon \lto Ref\,i\,n\,\upsilon
    \end{align*}
    \begin{itemize}
    \item Needs various features not included in current formalization: higher-rank types, type indices, \&c.
    \item Disjunction: implemented by instance chains?
    \item Move to extensions/future work?
    \item Relationship to fractional permissions.
    \end{itemize}
  \end{itemize}
\end{itemize}

\subsubsection*{Functors}

Type of the argument is an interesting question.  State monad, for example, could be defined with a
linear map functions, but lists not so much.  Basic points are similar to those for monads, include
at most one.
\[\begin{array}{@{}l}
    \mkwd{class}\,Functor\,f\,g\,\mkwd{where} \\
    \begin{array}{@{\quad}l@{\,}c@{\,}l}
      fmap & :: & g \geq h \then (t \labto g u) \uto (f\,t \labto h f\,u)
    \end{array}
  \end{array}\]
Point being that we could have instances for $\Fun\,f \then Functor\,(State\,s)\,f$ but only
$Functor\,[]\,(\uto)$.

\subsubsection*{More monads}

List monad---operations must be duplicated (or discarded):
\begin{align*}
  return &:: t \uto [t] \\
  return &= \lambda t. [t] \\
  bind &:: t \geq f \then [t] \to (t \uto u) \labto f [t] \\
  bind &= \lambda xs. \lambda f. \key{case}\,xs\,\key{of}\,[] \mapsto []; xs \mapsto concat \, (map \, f \, xs)
\end{align*}